\title{Uniformization Problems for Synchronizations of Automatic Relations on Words}
\author{Sarah Winter}{RWTH Aachen University, Germany}{winter@cs.rwth-aachen.de}{}{}
\authorrunning{S. Winter}
\subjclass{F.4.3 Formal Languages}
\keywords{automatic relation, uniformization, synchronization, transducer}
\definecolor{dark-gray}{gray}{0.4}
\tikzset{shorten >=1pt, >=stealth, auto, node distance=40, initial text=}
\theoremstyle{definition} 
\newtheorem{assumption}[theorem]{Assumption}
\newcommand{\charfusion}[2]{%
  \def\ch@rfusion##1##2{%
    \ooalign{\hfil$##1#1$\hfil\cr\hfil$##2$\hfil\crcr}}%
      \mathop{%
      \vphantom{#1}%
      \mathpalette\ch@rfusion#2}\displaylimits}
\newcommand{\inp}{\mathbbmtt{i}}
\newcommand{\outp}{\mathbbmtt{o}}
\newcommand{\Qin}{Q^{\inp}}
\newcommand{\Qout}{Q^{\outp}}
\newcommand{\cupcdot}{\charfusion{\cup}{\cdot}}
\newcommand{\leqlag}[1]{\leq\!\!#1}
\newcommand{\llag}[1]{<\!\!#1}
\newcommand{\glag}[1]{>\!\!#1}
\newcommand{\val}[1]{\ensuremath{\mathit{val}_{#1}}}
\newcommand{\ro}[1]{\ensuremath{\mathit{root}({#1})}}
\begin{document}

\maketitle

\begin{abstract}
 A uniformization of a binary relation is a function that is contained in the relation and has the same domain as the relation.
 The synthesis problem asks for effective uniformization for classes of relations and functions that can be implemented in a specific way.
 
 We consider the synthesis problem for automatic relations over finite words (also called regular or synchronized rational relations) by functions implemented by specific classes of sequential transducers.
 
 It is known that the problem ``Given an automatic relation, does it have a uniformization by a subsequential transducer?'' is decidable in the two variants where the uniformization can either be implemented by an arbitrary subsequential transducer or it has to be implemented by a synchronous transducer.
 We introduce a new variant of this problem in which the allowed input/output behavior of the subsequential transducer is specified by a set of synchronizations and prove decidability for a specific class of synchronizations.
\end{abstract}

\section{Introduction}\label{sec:intro}

A uniformization of a binary relation is a function that selects for each element in the domain of the relation a unique image that is in relation with this element.
Of interest to us in this paper are uniformization problems in the setting where the relations and functions on words are defined by finite automata.
Relations on words defined by finite automata extend languages defined by finite automata.
Unlike for words, different finite automaton models for relations lead to different classes of relations.

Relations defined by asynchronous finite automata are referred to as rational relations.
An asynchronous finite automaton is a nondeterministic finite automaton with two tapes whose reading heads can move at different speeds.
An equivalent computation model are asynchronous finite transducers (see, e.g., \cite{berstel2009}), that is, nondeterministic finite automata whose transitions are labeled by pairs of words.

A well known subclass of rational relations are synchronized rational relations (see \cite{DBLP:journals/tcs/FrougnyS93}), which are defined by synchronous finite automata, that is, finite automata with two tapes whose reading heads move at the same speed.
Equivalently, we speak of definability by synchronous finite transducers.
The class of synchronized rational relations is also called automatic or regular, here, we use the term automatic.

One uniformization problem asks for proving that each relation in a given class has a certain kind of uniformization.
For example, each rational relation can be uniformized by an unambiguous rational function (see \cite{sakarovich:2009a}).
Here, we are interested in the decision version of the problem: Given a relation from some class, does it have a uniformization in some other class?
For the class of uniformizations we consider sequential transducers.
A sequential transducer reads the input word in a deterministic manner and produces a unique output word for each input word.

The sequential uniformization problem relates to the synthesis problem, which asks, given a specification that relates possible inputs to allowed outputs, whether there is a program implementing the specification, and if so, construct one.
This setting originates from Church's synthesis problem \cite{church1962logic}, where logical specifications over infinite words are considered. 
B{\"u}chi and Landweber \cite{buechi} showed that for specifications in monadic second order logic, that is, specifications that can be translated into synchronous finite automata, it is decidable whether it can be realized by a synchronous sequential transducer (see, e.g., \cite{Thomas08} for a modern presentation of this result).
Later, decidability has been extended to asynchronous sequential transducers \cite{hosch1972finite,holtmann2010degrees}. 

Going from the setting of infinite words to finite words uniformization by subsequential \footnote{A subsequential transducer can make a final output depending on the last state reached in a run whereas a sequential transducer can only produce output on its transitions.} transducers is considered.
The problem whether a relation given by a synchronous finite automaton can be realized by a synchronous subsequential transducer is decidable; this result can be obtained by adapting the proof from the infinite setting.
Decidability has been extended to subsequential transducers \cite{CarayolL14}.
Furthermore, for classes of asynchronous finite automata decidability results for synthesis of subsequential transducers have been obtained in \cite{FJLW16}.

A semi-algorithm in this spirit was introduced by \cite{johnson2010}, the algorithm is tasked to synthesize a subsequential transducer that selects the length lexicographical minimal output word for each input word from a given rational relation.

The decision problems that have been studied so far either ask for uniformization by a synchronous subsequential or by an arbitrary subsequential transducer.
Our aim is to study the decision problem: Given a rational relation, does it have a uniformization by a subsequential transducer in which the allowed input/output behavior is specified by a given language of synchronizations?
The idea is to represent a pair of words by a single word where each position is annotated over $\{1,2\}$ indicating whether it came from the input or output component.
The annotated string provides a synchronization of the pair.
It is known that the class of rational relations is synchronized by regular languages \cite{Niv-transductions-lc}.
More recently, main subclasses of rational relations have been characterized by their synchronizations \cite{conf/stacs/FigueiraL14}.

We show decidability for a given automatic relation and a given set of synchronizations that synchronizes an automatic relation.
Thus our decidability result generalizes the previously known decidability result for synthesis of synchronous subsequential transducers from automatic relations.

The paper is structured as follows.
First, in Sec.~\ref{sec:sync}, we fix our notations and recap characterizations of synchronization languages established in \cite{conf/stacs/FigueiraL14}.
In Sec.~\ref{sec:unifproblems}, we introduce uniformization problems with respect to synchronization languages and compare our setting with known results. 
In Sec.~\ref{sec:regular}, we prove decidability of the question whether an automatic relation has a uniformization by a subsequential transducer in which the input/output behavior is specified by a set of synchronizations that synchronizes an automatic relation.

Omitted proofs can be found in the appendix.

\section{Synchronizations of relations}\label{sec:sync}

Let $\mathbbm{N}$ denote the set of all non-negative integers $\{0,1,\dots\}$, and for every $k \in \mathbbm{N}$, let $\mathbf{k}$ denote the set $\{1,\dots,k\}$.
Given a finite set $A$, let $|A|$ denote its cardinality and $2^A$ its powerset.

\subparagraph*{Languages and relations of finite words.}
An \emph{alphabet} $\Sigma$ is a finite set of letters, a finite \emph{word} is a finite sequence over $\Sigma$.
The set of all finite words is denoted by $\Sigma^*$ and the empty word by $\varepsilon$.
The length of a word $w \in \Sigma^*$ is denoted by $|w|$, the number of occurrences of a letter $a \in \Sigma$ in $w$ by $\#_a(w)$. Given $w \in \Sigma^*$, $w[i]$ stands for the $i$th letter of $w$, and $w[i,j]$ for the subword $w[i]\dots w[j]$.

A \emph{language} $L$ over $\Sigma$ is a subset of $\Sigma^*$, and $\mathit{Pref}(L)$ is the set $\{ u \in \Sigma^* \mid \exists v: uv \in L \}$ of its prefixes.
The prefix relation is denoted by $\sqsubseteq$.
A \emph{relation} $R$ over $\Sigma$ is a subset of $\Sigma^* \times \Sigma^*$.
The \emph{domain} of a relation $R$ is the set $\textrm{dom}(R) = \{ u \mid (u,v) \in R\}$,
the \emph{image} of a relation $R$ is the set $\textrm{img}(R) = \{ v \mid (u,v) \in R\}$.
For $u \in \Sigma^*$, let $R(u) = \{ v \mid (u,v) \in R \}$ and write $R(u) = v$, if $R(u)$ is a singleton.

A \emph{regular expression $r$} over $\Sigma$ has the form $\emptyset$, $\varepsilon$, $\sigma \in \Sigma$, $r_1 \cdot r_2$, $r_1+r_2$, or $r_1^*$ for regular expressions $r_1$, $r_2$.
The term $r^+$ is short for $r\cdot r^*$.
The concatenation operator $\cdot$ is often omitted.
The language associated to $r$ is defined as usual, denoted $L(r)$, or conveniently, $r$.

\begin{definition}[synchronization, $L$-controlled \cite{conf/stacs/FigueiraL14}]\label{def:sync}
 For $c \in \{\inp,\outp\}$, referring to input and output, respectively, we define two morphisms $\pi_{c}\colon (\mathbf{2} \times \Sigma) \rightarrow \Sigma \cup \{\varepsilon\}$ by $\pi_{\inp}((i,a)) = a$ if $i=1$, otherwise $\pi_{\inp}((i,a)) = \varepsilon$, and likewise for $\pi_{\outp}$ with $i=2$.
 These morphisms are lifted to words over $(\mathbf{2} \times \Sigma)$.
 
 A word $w \in (\mathbf{2} \times \Sigma)^*$ is a \emph{synchronization} of a uniquely determined pair $(w_1,w_2)$ of words over $\Sigma$, where $w_1 = \pi_{\inp}(w)$ and  $w_2 = \pi_{\outp}(w)$. We write $\llbracket w \rrbracket$ to denote $(w_1,w_2)$.
 Naturally, a set $S \subseteq (\mathbf{2} \times \Sigma)^*$ of synchronizations defines the relation $\llbracket S \rrbracket = \{ \llbracket w \rrbracket \mid w \in S\}$.
 
 A word $w = (i_1,a_1)\dots (i_n,a_n) \in (\mathbf{2} \times \Sigma)^*$ is the convolution $u \otimes v$ of two words $u = i_1\dots i_n \in \mathbf{2}^*$ and $v = a_1\dots a_n \in \Sigma^*$.
 Given a language $L \subseteq \mathbf{2}^*$, we say $w$ is \emph{$L$-controlled} if $u \in L$.
 A language $S \subseteq (\mathbf{2} \times \Sigma)^*$ is \emph{$L$-controlled} if all its words are.
 
 A language $L \subseteq \mathbf{2}^*$ is called a \emph{synchronization language}.
 For a regular language $L \subseteq \mathbf{2}^*$, $\textsc{Rel}(L) = \{ \llbracket S \rrbracket \!\mid\! S \text{ is a regular $L$-controlled }$ $\text{language}\}$ is the set of relations that can be given by $L$-controlled synchronizations.
 Let $\mathcal C$ be a class of relations, we say $L$ \emph{synchronizes} $\mathcal C$ if $\textsc{Rel}(L) \subseteq \mathcal C$.
 \end{definition}

\begin{definition}[lag, shift, shiftlag \cite{conf/stacs/FigueiraL14}]
Given a word $w \in \mathbf{2}^*$, a position $i \leq |w|$, and $\gamma \in \mathbbm{N}$.
We say $i$ is \emph{$\gamma$-lagged} if $|\#_1(w[1,i]) - \#_2(w[1,i])| = \gamma$, and likewise, we define \emph{$\glag{\gamma}$-lagged} and \emph{$\llag{\gamma}$-lagged}.
A \emph{shift} of $w$ is a position $i \in \{1,\dots,|w|-1\}$ such that $w[i] \neq w[i+1]$.
Two shifts $i < j$ are \emph{consecutive} if there is no shift $l$ such that $i < l < j$.
Let $\mathit{shift}(w)$ be the number of shifts in $w$, let $\mathit{lag}(w)$ be the maximum lag of a position in $w$, and let $\mathit{shiftlag}(w)$ be the maximum $n \in \mathbbm{N}$ such that $w$ contains $n$ consecutive shifts which are $\glag{n}$-lagged.

We lift these notions to languages by taking the supremum in $\mathbbm N \cup \{\infty\}$, e.g., $\mathit{shift}(L) = \mathrm{sup}\{\mathit{shift}(w) \mid w \in L\}$, and likewise for $\mathit{lag}(L)$ and $\mathit{shiftlag}(L)$.
\end{definition}

The following characterizations for well known subclasses of rational relations were shown in \cite{conf/stacs/FigueiraL14}.
Recall, rational relations are definable by asynchronous finite automata, automatic relations by synchronous finite automata, and recognizable relations are definable as finite unions of products of regular languages.
We omit a formal definition of these models since it is not relevant to this paper.

\begin{theorem}[\cite{conf/stacs/FigueiraL14}]
 Let $L \subseteq \mathbf{2}^*$ be a regular language. Then:
 \begin{enumerate}[noitemsep,topsep=0pt]
  \item $L$ synchronizes recognizable relations iff $\mathit{shift}(L) < \infty$,
  \item $L$ synchronizes automatic relations iff $\mathit{shiftlag}(L) < \infty$,
  \item $L$ synchronizes rational relations.
 \end{enumerate}
\end{theorem}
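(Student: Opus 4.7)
The three items are largely independent; part (3) is immediate, so I treat it first. Given a regular $L$-controlled language $S \subseteq (\mathbf{2}\times\Sigma)^*$ accepted by a DFA $\mathcal{A}$, I build an asynchronous two-tape automaton $\mathcal{B}$ with the same state set: each $(1,a)$-transition of $\mathcal{A}$ becomes a move reading $a$ on the input tape, and each $(2,a)$-transition becomes a move reading $a$ on the output tape. Then $\mathcal{B}$ accepts exactly $\llbracket S \rrbracket$, which is therefore rational; this is the ``always'' half of (3). The remaining four implications of (1) and (2) constitute the real content.

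For the forward direction of (1), assume $\mathit{shift}(L) \le k$. Every $u \in L$ decomposes uniquely into blocks $u = c_1^{n_1} \cdots c_m^{n_m}$ with $m \le k+1$ and $c_j \ne c_{j+1}$, so belongs to one of finitely many shift patterns $\sigma = c_1 \cdots c_m$. Given a regular $L$-controlled $S$ accepted by a DFA, I partition $S = \bigsqcup_\sigma S_\sigma$ by pattern and further decompose each $S_\sigma$ according to the tuple $(q_0,\dots,q_m)$ of DFA states reached at the $m+1$ block boundaries; every summand factors as $M_1 \cdots M_m$ with each $M_j \subseteq (\{c_j\}\times\Sigma)^*$ regular. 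Applying $\pi_{\inp}$ and $\pi_{\outp}$ to such a summand produces a single product of regular languages in $\llbracket S \rrbracket$, the first factor concatenating the input-coloured $M_j$ and the second the output-coloured ones. Summing over the finitely many patterns and state tuples presents $\llbracket S \rrbracket$ as a finite union of products of regular languages, i.e., as a recognizable relation.

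The forward direction of (2) is the main obstacle. Assume $\mathit{shiftlag}(L) \le k$. The bound implies that whenever the lag exceeds $k$ at most $k$ consecutive shifts can occur, so outside an initial low-lag regime the word is composed of long $1$- or $2$-blocks separated by boundedly many alternations before its tail settles in $1^*$ or $2^*$. Given a regular $L$-controlled $S$ with DFA $\mathcal{A}$, I plan to build a synchronous automaton (reading $\#$-padded pairs) for $\llbracket S \rrbracket$ whose state tracks the state of $\mathcal{A}$ together with a bounded buffer storing the letters already consumed on whichever tape currently runs ahead. The shiftlag bound, together with finiteness of $\mathcal{A}$, forces a buffer size that depends only on $k$ and $|\mathcal{A}|$: in the low-lag regime the buffer stays bounded because the lag itself is, and in the high-lag regime only boundedly many further shifts appear before a $\#$-padded tail that the synchronous automaton absorbs. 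Combining the two regimes yields the desired automatic realization of $\llbracket S \rrbracket$.

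Both converses proceed by pumping the regular language $L$. If $\mathit{shift}(L) = \infty$, the DFA for $L$ must contain a reachable cycle along which at least one shift occurs; pumping produces $\alpha \beta^* \gamma \subseteq L$ with strictly growing shift count, and an $L$-controlled $S$ that fills every iterated position with the same letter $a$ realizes a ``diagonal'' relation of the form $\{(a^{A+i\ell_1}, a^{B+i\ell_2}) \mid i \ge 0\}$ inside $a^* \times a^*$; for an appropriate choice of $\beta$ this diagonal is not a finite union of products of regular languages, hence lies in $\textsc{Rel}(L)$ but outside the recognizable relations. Symmetrically, if $\mathit{shiftlag}(L) = \infty$ one can pump to obtain arbitrarily many consecutive shifts at unboundedly growing lag, and this long-range asynchrony lets one encode inside $\textsc{Rel}(L)$ a rational but non-automatic relation, for example the factor relation $\{(u,v) \mid v$ is a factor of $u\}$, finishing the characterizations.
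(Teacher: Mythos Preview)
The paper does not prove this theorem; it is quoted verbatim from Figueira--Libkin \cite{conf/stacs/FigueiraL14} as background, so there is no in-paper argument to compare against. I therefore assess your sketch on its own.

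Item~(3) and the forward direction of~(1) are correct as written. Your converse for~(1) is morally right; just make explicit that a cycle containing a shift necessarily contains both a~$1$ and a~$2$, so $\ell_1,\ell_2>0$ and the resulting diagonal $\{(a^{A+i\ell_1},a^{B+i\ell_2})\mid i\ge 0\}$ really fails to be recognizable.

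Item~(2) has genuine gaps in both directions. In the forward direction your bounded-buffer construction works only in the low-lag prefix. Once the lag exceeds~$k$ you correctly observe that at most~$k$ further shifts occur, but the remaining blocks can each have \emph{unbounded} length: a tail of control shape $1^{N}2^{N}$ with $N$ arbitrary forces you to buffer $N$ output letters while the synchronous heads are still inside both words, and this is not yet the $\#$-padded region. The standard fix---and this is exactly the structural decomposition the present paper later reuses as Lemma~\ref{lemma:form}---is to factor $L\subseteq L_{\le\nu}\cdot(1^*+2^*)^m$, handle the bounded-lag prefix with your buffer, handle the bounded-\emph{shift} suffix by invoking the forward direction of~(1) (recognizable, hence automatic), and glue the two automatic pieces along finitely many interface states. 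Your sentence about the ``$\#$-padded tail that the synchronous automaton absorbs'' skips precisely this step.

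For the converse of~(2) you only assert that a non-automatic relation such as the factor relation can be encoded. This is where the actual work in \cite{conf/stacs/FigueiraL14} lies: from $\mathit{shiftlag}(L)=\infty$ one must pump $L$ to exhibit a concrete family of control words with the right alternation structure, choose the $\Sigma$-filling, and then prove the resulting relation non-automatic. None of those three steps is routine, and your paragraph gives no indication of how any of them would go; as it stands this direction is an assertion rather than a sketch.
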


For ease of presentation, let $\Sigma_{\inp\outp}$, $\Sigma_{\inp}$, and $\Sigma_{\outp}$ be short for $\mathbf{2} \times \Sigma$, $\{1\} \times \Sigma$, and $\{2\} \times \Sigma$, respectively.
If convenient, we use distinct symbols for input and output, instead of symbols annotated with $1$ or $2$.

For the results shown in this paper, it is useful to lift some notions introduced in \cite{conf/stacs/FigueiraL14} from words and languages over $\mathbf{2}$ to words and languages over $\Sigma_{\inp\outp}$.

\begin{definition}
We lift the notions of $\mathit{lag}$, $\mathit{shift}$, and $\mathit{shiftlag}$ from words and languages over $\mathbf{2}$ to words and languages over $\Sigma_{\inp\outp}$ in the natural way.

Furthermore, given a language $T \subseteq \Sigma_{\inp\outp}^*$, we say a word $w \in \Sigma_{\inp\outp}^*$ is $T$-controlled if $w \in T$.
A language $S \subseteq \Sigma_{\inp\outp}^*$ is $T$-controlled if all its words are, namely, if $S \subseteq T$.
\end{definition}

\subparagraph*{Automata on finite words.}
We fix our notations concerning finite automata on finite words.
A \emph{nondeterministic finite automaton} (\emph{NFA}) is a tuple $\mathcal A = (Q,\Sigma,q_0,\Delta,F)$, where $Q$ is a finite set of states, $\Sigma$ is a finite alphabet, $q_0 \in Q$ is the initial state, $\Delta \subseteq Q \times \Sigma \times Q$ is the transition relation, and $F \subseteq Q$ is the set of final states.
A \emph{run} $\rho$ of $\mathcal A$ on $w = a_1\dots a_n \in \Sigma^*$ is a sequence of states $p_0p_1\dots p_n$ such that $(p_i,a_{i+1},p_{i+1}) \in \Delta$ for all $i \in \{0,\dots,n-1\}$.
Shorthand, we write $\mathcal A: p_0 \xrightarrow{w} p_n$.
A run is \emph{accepting} if it starts in $q_0$ and ends in a state from $F$.
The language \emph{recognized} by $\mathcal A$, written $L(\mathcal A)$, is the set of words $w \in \Sigma^*$ that admit an accepting run of $\mathcal A$ on $w$.
For $q \in Q$, let $\mathcal A_q$ denote the NFA obtained from $\mathcal A$ by setting its initial state to $q$.
The class of languages recognized by NFAs is the class of regular languages.
An NFA is \emph{deterministic} (a \emph{DFA}) if for each state $q \in Q$ and $a \in \Sigma$ there is at most one outgoing transition.
In this case, it is more convenient to express $\Delta$ as a (partial) function $\delta: Q \times \Sigma \rightarrow Q$.
Furthermore, let $\delta^*$ denote the usual extension of $\delta$ from letters to words.

We introduce some notions only applicable if an NFA recognizes a set of synchronizations.
Given a regular $S \subseteq \Sigma_{\inp\outp}^*$, let $\mathcal A = (Q,\Sigma_{\inp\outp},q_0,\Delta,F)$ be an NFA that recognizes $S$.
We define $\Qin = \{ p \in Q \mid \exists a \in \Sigma, q \in Q: (p,(1,a),q) \in \Delta\}$ and $\Qout = \{ p \in Q \mid \exists a \in \Sigma, q \in Q: (p,(2,a),q) \in \Delta\}$ as the sets of states that have outgoing transitions from which input and output can be consumed, respectively.
If $(\Qin$,$\Qout)$ is a partition of $Q$, we write $Q = \Qin \cupcdot \Qout$.
We call $\mathcal A$ \emph{sequential} if $\mathcal A$ is deterministic, and $Q = \Qin \cupcdot \Qout$, and each $q \in \Qout$ has at most one outgoing transition.
For short, we refer to a sequential DFA as \emph{sDFA}.
Finally, we define the input automaton $\mathcal A_D$ of $\mathcal A$ as $(Q,\Sigma,q_0,\Delta',F)$, where $\Delta' = \{ (p,a,q) \mid \mathcal A: p \xrightarrow{w} q \text{ and } \pi_{\inp}(w) = a \in \Sigma\}$.
A comparison to standard transducer models is given in the next section. 

\section{Uniformization problems}\label{sec:unifproblems}

A \emph{uniformization} of a relation $R \subseteq \Sigma^* \times \Sigma^*$ is a complete function $f_R: \mathrm{dom}(R) \to \Sigma^*$ with $(u,f_R(u)) \in R$ for all $u \in \mathrm{dom}(R)$.
If such a function is given as a relation $R_f$, we write $R_f \subseteq_{\mathsf{u}} R$ to indicate that $R_f$ is a uniformization of $R$.

\begin{definition}[Resynchronized uniformization problem]
 The \emph{resynchronized uniformization problem} asks, given a regular source language $S \subseteq \Sigma_{\inp\outp}^*$ and a regular target language $T \subseteq \Sigma_{\inp\outp}^*$, whether there exists a regular language $U \subseteq T$ recognized by a sequential DFA such that $\llbracket U \rrbracket \subseteq_{\mathsf{u}} \llbracket S \rrbracket$.
\end{definition}


\begin{example}\label{ex:intro}
 Let $\Sigma_\inp = \{a,b,c\}$ and $\Sigma_\outp = \{d,e\}$, let $S \subseteq \Sigma_{\inp\outp}^*$ be given by $\mathcal A$ depicted in Fig.~\ref{fig:intro}.
 The recognized relation is $\llbracket S \rrbracket = \{(a^iba^j,d(d+e)^k) \mid i,j,k \geq 0 \} \cup \{(a^ica^j,e(d+e)^k) \mid i,j,k \geq 0 \}$. 
 Furthermore, let $T = \Sigma_\inp^*(\Sigma_\inp\Sigma_\outp)^+$.
 A $T$-controlled uniformization $U$ is given by the sequential DFA $\mathcal U$ depicted in Fig.~\ref{fig:intro}.
 The recognized relation is $\llbracket U \rrbracket = \{(a^iba^j,dd^j) \mid i,j,k \geq 0 \} \cup \{(a^ica^j,ed^j) \mid i,j \geq 0 \}$.
\end{example}
\vskip -0.3cm

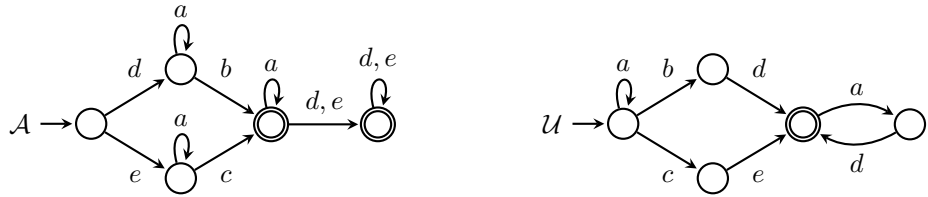
\begin{figure}[ht]
\vskip -0.35cm
\centering
  \begin{tikzpicture}[thick]
    \tikzstyle{every state}+=[inner sep=4pt, minimum size=3pt];
    \node[state, initial, initial text=$\mathcal A$] (0) {};
    \node[state, above right = 12pt and 25pt of 0] (1) {};
    \node[state, below right = 12pt and 25pt of 0] (2) {};
    \node[state, accepting, below right = 12pt and 25pt of 1] (3) {};
    \node[state, accepting, right of=3]       (4) {};
    \draw[->] (0) edge                node[near end]        {$d$} (1);
    \draw[->] (0) edge                node[swap,near end]  {$e$} (2);
    \draw[->] (1) edge[loop above]    node        {$a$} ();
    \draw[->] (1) edge                node[near start]        {$b$} (3);
    \draw[->] (2) edge[loop above]    node        {$a$} ();
    \draw[->] (2) edge                node[swap,near start]  {$c$} (3);
    \draw[->] (3) edge[loop above]    node        {$a$} ();
    \draw[->] (3) edge                node        {$d,e$} (4);
    \draw[->] (4) edge[loop above]    node        {$d,e$} ();

    \begin{scope}[xshift=7cm]
    \node[state, initial, initial text=$\mathcal U$] (0) {};
    \node[state, above right = 12pt and 25pt of 0] (1) {};
    \node[state, below right = 12pt and 25pt of 0] (2) {};
    \node[state, accepting, below right = 12pt and 25pt of 1] (3) {};
    \node[state,  right of=3]       (4) {};
    \draw[->] (0) edge[loop above]    node        {$a$} ();
    \draw[->] (0) edge                node[near end]        {$b$} (1);
    \draw[->] (0) edge                node[swap,near end]  {$c$} (2);
    \draw[->] (1) edge                node[near start]        {$d$} (3);
    \draw[->] (2) edge                node[swap,near start]  {$e$} (3);
    \draw[->] (3) edge[bend left]     node        {$a$} (4);
    \draw[->] (4) edge[bend left]     node        {$d$} (3);   
    \end{scope}
   \end{tikzpicture}
  \caption{
   Cf.~Ex.\ref{ex:intro};
   $S = L(\mathcal A)$ and $U = L(\mathcal U)$, we have $\llbracket U \rrbracket \subseteq_\mathsf{u} \llbracket S \rrbracket$.
   }
 \label{fig:intro}
\vskip -0.3cm
\end{figure} 

Comparing our definition of sequential DFAs with standard transducer models we notice that sequential transducers directly correspond to sequential DFAs.
See, e.g., \cite{berstel2009} for an introduction to transducers.
Our model can be modified to correspond to subsequential transducers (which can make a final output after the word has ended) by slightly modifying the representation of the relation by adding a dedicated endmarker in the usual way.

In the remainder it is implicitly assumed that every given source and target language is represented with endmarkers, thus our stated results correspond to uniformization by subsequential transducers.

Our main result is the decidability of the resynchronized uniformization problem for a given automatic relation and a given set of synchronizations controlled by a language that synchronizes automatic relations.
In Sec.~\ref{sec:regular} we see that our decidability result is obtained by a reduction to the following simpler uniformization problem.

\begin{definition}[Subset uniformization problem]
 The \emph{subset uniformization problem} asks, given a regular language $S \subseteq \Sigma_{\inp\outp}^*$, whether there exists a regular language $U \subseteq S$ recognized by a sequential DFA such that $\llbracket U \rrbracket \subseteq_{\mathsf{u}} \llbracket S \rrbracket$.
\end{definition}

The notion of subset uniformization directly corresponds to the notion of sequential $\mathbbm I$-uniformization introduced in \cite{FJLW16}.
It was shown that deciding the sequential $\mathbbm I$-uniformization problem reduces to deciding which player has a winning strategy in a safety game between $\mathsf{In}$ and $\mathsf{Out}$.
Hence, we directly obtain the following result.

\begin{restatable}[\cite{FJLW16}]{theorem}{thmsunif}\label{thm:sunif}
 The subset uniformiza\-tion problem is decidable.
\end{restatable}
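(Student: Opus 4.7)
The approach is to reduce the problem to a finite two-player safety game played between $\mathsf{In}$ (the adversary, choosing the input) and $\mathsf{Out}$ (who must synthesise the sDFA), exactly in the spirit of \cite{FJLW16}. First, I would let $\mathcal A = (Q,\Sigma_{\inp\outp},q_0,\Delta,F)$ be an NFA recognising $S$ and build a game arena via a subset construction on $\mathcal A$. Positions are pairs $(P,m)$ with $P \subseteq Q$ and $m \in \{\mathsf{In},\mathsf{Out}\}$, the initial position being $(\{q_0\},\mathsf{Out})$. At an $\mathsf{Out}$-position $(P,\mathsf{Out})$, $\mathsf{Out}$ chooses either to relinquish control (moving to $(P,\mathsf{In})$, corresponding to marking a state of the sought sDFA as an input state in $\Qin$) or to commit to an output letter $a\in\Sigma$ (moving to $(P',\mathsf{Out})$ where $P'$ is the image of $P$ under $(2,a)$-transitions of $\mathcal A$, corresponding to an output state in $\Qout$ that emits $a$). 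At an $\mathsf{In}$-position $(P,\mathsf{In})$, $\mathsf{In}$ picks any input letter $a \in \Sigma$, and the play proceeds to $(P'',\mathsf{Out})$ with $P''$ the image of $P$ under $(1,a)$-transitions.

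Exploiting the endmarker convention already adopted in the paper, I would declare $\mathsf{Out}$ to win a play iff the subset never becomes empty and, whenever an endmarker is read, the resulting subset intersects $F$. This is a finite safety condition on an arena of size $O(2^{|Q|})$, so the winning region is computable by the standard attractor fixed-point iteration for reachability/safety games.

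The key step is the correspondence with sDFAs. A positional winning strategy $\sigma$ for $\mathsf{Out}$ can be turned directly into a sequential DFA $\mathcal U$ whose states are the $\sigma$-reachable positions, whose $\Qin/\Qout$ partition is induced by whether $\sigma$ relinquishes control or commits to output at each $\mathsf{Out}$-position, and whose unique outgoing output transition is the letter $\sigma$ picks. The safety invariant ensures $L(\mathcal U) \subseteq S$, $\mathcal U$ is a sequential DFA by construction, and the fact that $\sigma$ wins against \emph{every} $\mathsf{In}$-strategy guarantees the domain equality required of a uniformization of $\llbracket S \rrbracket$. The converse direction --- extracting a winning strategy from any sDFA witness $\mathcal U$ --- is straightforward via the homomorphism that sends each $\mathcal U$-state to the subset of $\mathcal A$-states it can be simulated by.

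The main obstacle I foresee is arranging the winning condition so that it faithfully captures the \emph{uniformization} requirement (domain equality), not merely language inclusion: $\mathsf{Out}$ must not get stuck on any input that $\mathsf{In}$ can legitimately extend into $\mathrm{dom}(\llbracket S \rrbracket)$. The endmarker convention makes this clean, since every input comes with a dedicated terminator on which $\mathsf{Out}$ must still have a completion reaching $F$; without endmarkers one would have to distinguish ``$\mathsf{In}$ has not finished yet'' from ``$\mathsf{In}$ has stopped at a non-domain prefix'', which blurs safety and reachability. With the endmarker encoding in place, the reduction is clean and Theorem~\ref{thm:sunif} follows immediately from decidability of finite safety games.
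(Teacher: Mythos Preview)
Your proposal is correct and follows essentially the same approach the paper invokes: the paper does not prove this theorem itself but observes that subset uniformization coincides with the sequential $\mathbbm I$-uniformization of \cite{FJLW16}, where decidability is obtained precisely via a reduction to a safety game between $\mathsf{In}$ and $\mathsf{Out}$. Your sketch fleshes out exactly that reduction; the only point to tighten is the winning condition, which must let $\mathsf{Out}$ win whenever $\mathsf{In}$ leaves the domain (otherwise $\mathsf{In}$ could force emptiness with non-domain inputs), but you already flag this and the endmarker convention together with a domain check handles it.
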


Now that we have formulated our uniformization problems, we link these to known uniformization problems.
Asking whether a relation has a $\Sigma_{\inp\outp}^*$-controlled subsequential uniformization is equivalent to asking whether it has a uniformization by an arbitrary subsequential transducer.
Asking whether a relation has a $(\Sigma_\inp\Sigma_\outp)^*(\Sigma_\inp^* + \Sigma_\outp^*)$- resp.\ $\Sigma_\inp^*\Sigma_\outp^*$-controlled subsequential uniformization is equivalent to asking whether it has a uniformization by a synchronous subsequential transducer resp.\ by a transducer that reads the complete input before producing output.

\begin{table}[t]
\begin{center}
\begin{tabular}{|l|c|c|c|c|c|} \hline
 \backslashbox[33mm]{sync.}{relation} & rational & \parbox[c]{2cm}{deterministic\\rational} & finite-valued & automatic & \parbox[c]{1.1cm}{recog-\\nizable} \\  \hline
 $\Sigma_{\inp\outp}^*$ 					& undec. \cite{CarayolL14} & dec. \cite{FJLW16} & dec. \cite{FJLW16} & dec. \cite{CarayolL14} & dec.\\  \hline
 $(\Sigma_\inp\Sigma_\outp)^*(\Sigma_\inp^* + \Sigma_\outp^*)$ 	& undec. \cite{CarayolL14} & ? & ? & dec. \cite{buechi} & dec.\\  \hline
 $\Sigma_\inp^*\Sigma_\outp^*$ 					& ?   		   & ? & ? & dec. \cite{CarayolL14} & dec.\\  \hline \hline
 rational 		& undec. & ? & ? &  ?  & dec.\\  \hline
 automatic 		& undec. & ? & ? & \textbf{dec.}  & dec.\\  \hline
 recognizable 		&   ?    & ? & ? & dec.  & dec. \\  \hline
\end{tabular}
\end{center}
\caption{Overview over decidability results. 
The columns list the type of relation to be uniformized.
The rows list the type of synchronization used as uniformization parameter; the upper three rows list fixed languages of synchronizations, the lower three rows list parameter classes, where `rational' means the given set of allowed synchronizations is controlled by an arbitrary synchronization language, `automatic' (resp.\ `recognizable') means the given set of allowed synchronizations is controlled by a synchronization language that synchronizes automatic (resp.\ recognizable) relations.
}
\label{tab:overview}
\vskip -0.8cm
\end{table}

Table~\ref{tab:overview} provides an overview over known and new decidability results of the resynchronized uniformization problem for different types of relations and synchronization parameters.
Our main result is the decidability for a given automatic relation and a given set of allowed synchronizations that is controlled by a synchronization language that synchronizes automatic relations.
The decidability results in the rightmost column can be shown by a simple reduction to the subset uniformization problem which is presented in the appendix.
The other entries in the lower three rows are simple consequences of the results presented in the upper three rows resp.\ our main result.

Regarding the table entry where the relation is automatic and a desired uniformizer is $(\Sigma_\inp\Sigma_\outp)^*(\Sigma_\inp^* + \Sigma_\outp^*)$-controlled, there is an alternative formulation of the decision problem in the case that the given relation is $(\Sigma_\inp\Sigma_\outp)^*(\Sigma_\inp^* + \Sigma_\outp^*)$-controlled (the usual presentation for automatic relations, e.g., by a synchronous transducer).
In this case the problem can also be stated as the question whether the relation has a subset uniformization.

We now generalize this to Parikh-injective synchronization languages.
Given some $L \subseteq \mathbf{2}^*$, let $\Pi_L: L \to \mathbbm{N}^2$ be the function that maps a word $w \in L$ to its \emph{Parikh image}, that is to the vector $(\#_1(w),\#_2(w))$.
We say $L$ is \emph{Parikh-injective} if $\Pi_L$ is injective.

\begin{restatable}{proposition}{thmparikh}\label{thm:parikh}
 Let $L \subseteq \mathbf{2}^*$ be a regular Parikh-injective language, let $S \subseteq \Sigma_{\inp\outp}^*$ be an $L$-controlled regular language and let $T = \{ w \in \Sigma^* \mid w \text{ is $L$-controlled}\}$.
 Every $T$-controlled uniformization of $S$ is a subset uniformization of $S$.
\end{restatable}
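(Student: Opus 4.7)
\medskip
\noindent\textbf{Proof plan.} The plan is to fix an arbitrary $T$-controlled $U$ with $\llbracket U \rrbracket \subseteq_{\mathsf{u}} \llbracket S \rrbracket$ and show, pointwise, that every word $w \in U$ already lies in $S$; since $\llbracket U \rrbracket \subseteq \llbracket S \rrbracket$ holds by assumption, this is all that is missing for $U$ to be a subset uniformization of $S$.

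First I would pick any $w \in U$ and let $(u,v) = \llbracket w \rrbracket \in \llbracket S \rrbracket$. By definition of $\llbracket S \rrbracket$ there exists some witness $w' \in S$ with $\llbracket w' \rrbracket = (u,v)$. The goal then reduces to proving $w = w'$. For this I would introduce the control morphism $\pi\colon \Sigma_{\inp\outp}^* \to \mathbf{2}^*$ that erases the $\Sigma$-component of each letter, so that $\pi(w)$ records only the $1$/$2$-annotations of $w$. Because $w \in T$ we have $\pi(w) \in L$, and because $S$ is $L$-controlled we also have $\pi(w') \in L$.

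The key step is the comparison of $\pi(w)$ and $\pi(w')$. Both words share the Parikh image $(\#_1,\#_2) = (|u|,|v|)$, since this vector is determined by the common pair $(u,v) = \llbracket w \rrbracket = \llbracket w' \rrbracket$. Parikh-injectivity of $L$ then forces $\pi(w) = \pi(w')$.

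Finally I would observe that a word in $\Sigma_{\inp\outp}^*$ is completely determined by its control word together with the pair it synchronizes: reading the control word left to right, each position labelled $1$ (resp.\ $2$) must carry the next unread symbol of the input component $u$ (resp.\ output component $v$). Hence $w = w'$, which gives $w \in S$, and therefore $U \subseteq S$. There is no genuine obstacle here beyond recording this uniqueness argument cleanly; the whole proposition is essentially the observation that Parikh-injectivity removes all ambiguity in how a given pair $(u,v)$ can be synchronized within $L$.
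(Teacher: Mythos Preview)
Your proposal is correct and follows essentially the same approach as the paper's proof: both pick a word $w \in U$, find a witness $w' \in S$ with $\llbracket w' \rrbracket = \llbracket w \rrbracket$, use Parikh-injectivity of $L$ to conclude that the control words of $w$ and $w'$ coincide, and then observe that equal control word plus equal synchronized pair force $w = w'$. The only cosmetic difference is that the paper phrases this by contradiction (assuming $U \not\subseteq S$) and decomposes $w$ via the convolution $w = u \otimes v$ from Definition~\ref{def:sync}, whereas you argue directly and introduce the projection $\pi$ explicitly; the substance is identical.
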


Given $L$, $S$ and $T$ as in Proposition \ref{thm:parikh}, it directly follows that the resynchronized uniformization problem is equivalent to the subset uniformization problem, which is decidable by Theorem \ref{thm:sunif}.

\section{Automatic uniformizations of automatic relations}\label{sec:regular}

Here we present our main result stating that it is decidable whether a given automatic relation has a uniformization by a subsequential transducer whose induced set of synchronizations is controlled by a given regular language that synchronizes automatic relations.

\begin{restatable}{theorem}{thmregular}\label{thm:regular}
 Given a regular source language with finite $\mathit{shiftlag}$ and a regular target language with finite $\mathit{shiftlag}$.
 Then, the resynchronized uniformization problem is decidable.
\end{restatable}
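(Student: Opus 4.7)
The plan is to reduce the resynchronized uniformization problem to the subset uniformization problem (decidable by Theorem~\ref{thm:sunif}). Concretely, given source $S$ and target $T$, both regular with finite $\mathit{shiftlag}$, I would effectively build a regular language
\[
 S' \;\subseteq\; T \;\cap\; \{\, w \in \Sigma_{\inp\outp}^* \mid \llbracket w \rrbracket \in \llbracket S \rrbracket \,\}
\]
such that any sequential DFA $U \subseteq T$ with $\llbracket U \rrbracket \subseteq_{\mathsf{u}} \llbracket S \rrbracket$ satisfies $U \subseteq S'$, and conversely any subset uniformization of $S'$ yields a resynchronized uniformization of $\llbracket S \rrbracket$ with the required domain. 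Once $S'$ is available, the decision procedure has three steps: construct $S'$; check the regular equality $\mathrm{dom}(\llbracket S' \rrbracket) = \mathrm{dom}(\llbracket S \rrbracket)$; and invoke Theorem~\ref{thm:sunif} on $S'$.

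The technical heart is the construction of $S'$. First, using the finite $\mathit{shiftlag}$ of $S$ and the characterization from \cite{conf/stacs/FigueiraL14}, obtain a synchronous DFA $\mathcal{B}$ recognizing the convolution encoding of the automatic relation $\llbracket S \rrbracket$. Then build an automaton $\mathcal{A}^\star$ that reads a word $w$, simulates the DFA $\mathcal{A}_T$ for $T$ on $w$, and in parallel simulates $\mathcal{B}$ on the convolution of $\llbracket w \rrbracket$ as it is revealed. The difficulty is that $w$ may deliver input and output letters in an order that does not match $\mathcal{B}$'s lockstep convolution order, so $\mathcal{A}^\star$ has to maintain a delay buffer of letters read from $w$ but not yet paired up for $\mathcal{B}$. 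The finite $\mathit{shiftlag}$ of $T$ is the key assumption for keeping this buffer finitely representable: it forbids long alternations inside high-lag stretches of $w \in T$, so such stretches are essentially monotone and can be safely committed to $\mathcal{B}$'s $\bot$-padded transitions, while low-lag stretches keep the buffer within a bound derivable from $\mathit{shiftlag}(T)$ and the size of $\mathcal{B}$. A subset construction over $\mathcal{B}$'s states then records all completions of the convolution compatible with the buffer, so that membership in $\llbracket S \rrbracket$ can be decided on the fly.

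Once $S'$ has been built, the remaining steps are routine: the domain equality is an equality of two effectively regular languages (projections onto $\Sigma_{\inp}$), and the subset uniformization problem on $S'$ is decidable by Theorem~\ref{thm:sunif}. The main obstacle is the construction of $\mathcal{A}^\star$, i.e.\ ensuring the delay buffer remains finite-state despite the potentially unbounded lag of words in $T$. The crux is to exploit both finite $\mathit{shiftlag}$ assumptions in tandem: one to convert $\llbracket S \rrbracket$ to its synchronous representation, and the other to structure the lag patterns of words in $T$, so that the buffer decomposes into a bounded current mismatch plus a commitment to $\bot$-padded processing during monotone phases.
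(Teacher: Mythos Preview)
Your reduction to the subset uniformization problem is the right final step, and the domain-check plus Theorem~\ref{thm:sunif} is exactly how the paper concludes. The gap is in the construction of $S'$.

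You are implicitly aiming at $S' = T(S) := \{\, w \in T \mid \llbracket w \rrbracket \in \llbracket S \rrbracket \,\}$, since you require every admissible $U$ to satisfy $U \subseteq S'$. But $T(S)$ is \emph{not} regular in general, even under both finite-$\mathit{shiftlag}$ hypotheses. Take $\Sigma = \{a\}$, $S = (a_\inp a_\outp)^*$ (so $\llbracket S \rrbracket = \{(a^n,a^n)\mid n\ge 0\}$, finite $\mathit{shiftlag}$) and $T = a_\inp^* a_\outp^*$ (which is $1^*2^*$-controlled, hence $\mathit{shift}(T)\le 1$ and a fortiori finite $\mathit{shiftlag}$). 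Then $T(S) = \{a_\inp^{\,n} a_\outp^{\,n} \mid n\ge 0\}$, which is not regular. Your buffer argument breaks precisely here: the long monotone input block cannot be ``committed to $\bot$-padded transitions'' of the synchronous automaton for $\llbracket S \rrbracket$, because an equally long output block may follow and must be paired letter-by-letter with that input. No subset construction over the states of $\mathcal B$ recovers the lost count.

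What the paper does instead is insert a nontrivial combinatorial step \emph{before} building the regular language on which subset uniformization is invoked. It defines $T_k = T \cap \bigl(T_{\le\gamma}\cdot(\Sigma_\inp^* + \Sigma_\outp^{\le k})^n\bigr)$ and proves (Lemma~\ref{lemma:shortregular}) that if $S$ has any $T$-controlled uniformizer then it already has a $T_k$-controlled one, for a computable $k$. This is the real work: it uses input/output profiles built from state-transformation trees, the monoid structure of profiles, and Ramsey's theorem to show that long output blocks in a uniformizer can always be shortened. Only \emph{after} bounding output blocks by $k$ does the analogue of your $S'$, namely $T_k(S) = \{w\in T_k \mid \llbracket w\rrbracket \in \llbracket S\rrbracket\}$, become effectively regular (Lemma~\ref{lemma:transformregular}); the bounded output length is exactly what lets the on-the-fly resynchronizer guess the entire remaining output (of length $\le k\cdot n$) up front. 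Your plan is missing this bounding lemma, and without it the reduction target is not a regular language.
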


In \cite{conf/stacs/FigueiraL14}, it is shown that $(12)^*(1^*+2^*)$ is an effective canonical representative of the class $\mathsf{RL}_{\mathit{FSL}}$ of regular languages with finite $\mathit{shiftlag}$.
Meaning that for every $L \in \mathsf{RL}_{\mathit{FSL}}$ and every $R \in \textsc{Rel}(L)$, there is an effectively constructible $(12)^*(1^*+2^*)$-controlled regular language $S$ so that $\llbracket S \rrbracket = R$.

In the remainder of this section, let $S \subseteq \Sigma_{\inp\outp}^*$ be a regular source language with finite $\mathit{shiftlag}$.
Also, let $S_\mathit{can}$ be the equivalent $(12)^*(1^*+2^*)$-controlled language with $\llbracket S_\mathit{can} \rrbracket = \llbracket S \rrbracket$. 
Furthermore, let $T \subseteq \Sigma_{\inp\outp}^*$ be a regular target language with finite $\mathit{shiftlag}$.

\begin{assumption}\label{asm:shiftlag}
 We assume that $S_\mathit{can}$ is recognized by a DFA $\mathcal A = (Q_\mathcal A,\Sigma_{\inp\outp},q_0^\mathcal A,\Delta_\mathcal A,F_\mathcal A)$, $T$ is recognized by a DFA $\mathcal B = (Q_\mathcal B,\Sigma_{\inp\outp},q_0^\mathcal B,\Delta_\mathcal B,F_\mathcal B)$ and $\mathit{shiftlag}(T) < n$.
\end{assumption}

For notational convenience, given $x \in \Sigma_\inp^*$ and $y \in \Sigma_\outp^*$, we write $\delta_\mathcal A^*(q,(x,y))$ to mean $\delta_\mathcal A^*(q,w)$, where $w \in \Sigma_{\inp\outp}$ is the canonical synchronization of $x$ and $y$, i.e., $w$ is the $(12)^*(1^*+2^*)$-controlled synchronization of the pair $(x,y)$.
\\

The remainder of this section is devoted to the proof of Theorem~\ref{thm:regular}.
The proof is split in two main parts; the goal of the first part is to show that if $S$ has a $T$-controlled uniformization by an sDFA, then $S$ has a $T_k$-controlled uniformization by an sDFA for a regular $T_k \subseteq T$ that is less complex than $T$, cf.\ Lemma~\ref{lemma:shortregular}.
The goal of the second part is to show that the set $T_k(S)$ defined by $\{ w \mid w \in T_k \text{ and } \llbracket w \rrbracket \in \llbracket S \rrbracket\}$ is regular and computable (due to the form of $T_k$), cf.\ Lemma~\ref{lemma:transformregular}.
Then, to conclude the proof, we show that the question whether $S$ has a $T$-controlled uniformization by an sDFA can be reduced to the question whether $T_k(S)$ has a subset uniformization by an sDFA, which is decidable by Theorem~\ref{thm:sunif}.

Towards giving an exact description of $T_k$, consider the following auxiliary lemma characterizing the form of regular synchronization languages with finite $\mathit{shiftlag}$.
Given $\nu \in \mathbbm{N}$, we denote by $L_{\leq \nu}$ the regular set of words over $\mathbf{2}$ with $\leqlag{\nu}$-lagged positions, i.e., $L_{\leq \nu} = \{ u \in \mathbf 2^* \mid \mathit{lag}(u) \leq \nu\}$; we denote by $T_{\leq \nu}$ the regular set of words over $\Sigma_{\inp\outp}$ with $\leqlag{\nu}$-lagged positions, i.e., $T_{\leq \nu} = \{ w \in \Sigma_{\inp\outp}^* \mid \mathit{lag}(w) \leq \nu\}$.

\begin{lemma}[\cite{conf/stacs/FigueiraL14}]\label{lemma:form}
 Given a regular language $L \subseteq \mathbf{2}^*$ with $\mathit{shiftlag}(L) < m$.
 It holds that $L \subseteq L_{\leq \nu} \cdot (1^*+2^*)^m$ with $\nu$ chosen as $2\left(m(|Q|+1)+1\right)$, where $Q$ is the state set of an NFA recognizing $L$.
\end{lemma}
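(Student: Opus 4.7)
The plan is to prove the containment $L \subseteq L_{\leq \nu} \cdot (1^*+2^*)^m$ by contradiction: given $w \in L$, if the natural decomposition of $w$ into a maximal low-lag prefix and a suffix $v$ leaves $v$ with at least $m$ shifts, then a carefully located loop in a run of the NFA on $w$ can be pumped to produce a word $w_k \in L$ with $\mathit{shiftlag}(w_k) \geq m$, violating the hypothesis $\mathit{shiftlag}(L) < m$.

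First I would fix the decomposition. Let $\mathcal N$ be an NFA with state set $Q$ recognizing $L$, fix $w \in L$ with an accepting run $\rho$, and abbreviate $\ell(i) := |\#_1(w[1,i]) - \#_2(w[1,i])|$. Let $u$ be the longest prefix of $w$ with $\mathit{lag}(u) \leq \nu$, and write $w = uv$. If $v = \varepsilon$ we are done, so assume $v \neq \varepsilon$; then $\ell(|u|+1) = \nu + 1$ by maximality of $u$, and by symmetry between the two coordinates we may assume $\#_1(w[1,|u|+1]) > \#_2(w[1,|u|+1])$. Suppose, for contradiction, that $v$ contains at least $m$ shifts.

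The next step is to locate a long monotone run inside $v$ at high lag. Let $p'$ be the first position of $v$ with $\ell(p') \leq m$; if no such position exists then the first $m$ shifts of $v$ are already $>m$-lagged and we have the contradiction immediately. The interval $I := [|u|+1,\,p'-1]$ has length at least $\nu - m$ (since $\ell$ changes by at most $1$ per step) and consists entirely of positions with $\ell > m$ and positive sign of $\#_1 - \#_2$ (the sign cannot flip without crossing $0$). If $I$ contained $\geq m$ shifts, the same $\mathit{shiftlag}$ argument would give a contradiction; hence $I$ decomposes into at most $m$ maximal monotone runs of a single letter, and by averaging one such run $M$ has length at least $(\nu-m)/m > |Q|$, using the choice $\nu = 2(m(|Q|+1)+1)$.

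The pumping step is the crux. Since $\rho$ visits more than $|Q|+1$ positions within $M$, two of them carry the same state, yielding a loop $\lambda$ of length $c \geq 1$ reading $a^c$ for some $a \in \{1,2\}$. Pumping $\lambda$ produces $w_k \in L$ for every $k \geq 0$. Because $M$ is monotone, pumping creates no new shifts and preserves the order of the existing ones, so the first $m$ shifts $s_1 < \cdots < s_m$ of $v$ remain $m$ consecutive shifts in $w_k$. Shifts lying before $M$ lie inside $I$, so their lag is already $>m$ and unaffected by pumping; for shifts lying after $M$, the quantity $\#_1 - \#_2$ is shifted by $\pm(k-1)c$ in a fixed direction determined by $a$, so for sufficiently large $k$ their lag also exceeds $m$. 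This yields $m$ consecutive $>m$-lagged shifts in $w_k \in L$, contradicting $\mathit{shiftlag}(L) < m$.

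The main obstacle I anticipate is the sign bookkeeping in the pumping step: lag is an absolute value and shifts $s_j$ after $M$ may lie in arcs of either sign of $\#_1 - \#_2$. The key observation is that the pumped letter $a$ is fixed, so the additive shift $\pm(k-1)c$ has a fixed direction at every later position; hence for $k$ large enough $|(\#_1 - \#_2) \pm (k-1)c|$ grows linearly in $k$ uniformly over all subsequent positions, including those in negative-sign arcs.
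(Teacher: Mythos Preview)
The paper does not prove this lemma; it is quoted verbatim from \cite{conf/stacs/FigueiraL14} and used as a black box, so there is no in-paper proof to compare against. Your argument stands on its own and is essentially correct: the decomposition into a maximal $\leq\nu$-lag prefix, the counting that forces a monotone block of length $>|Q|$ inside the high-lag interval $I$, and the pumping of that block to push the lags of the first $m$ shifts of $v$ above $m$ all go through as you describe. Two minor points worth tightening in a write-up: (i) when you say ``the first $m$ shifts of $v$ remain $m$ consecutive shifts in $w_k$'' you should phrase this as ``correspond to $m$ consecutive shifts in $w_k$'', since their positions move; (ii) for the sign issue you flag, it suffices to observe that only finitely many shifts (at most $m$) are involved, so a single large $k$ makes $|d_j \pm (k-1)c| > m$ simultaneously for all of them.
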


Clearly, this lemma can be lifted to regular languages over $\Sigma_{\inp\outp}$.
Based on Asm.~\ref{asm:shiftlag} and Lemma~\ref{lemma:form}, we can make the following assumption.

\begin{assumption}\label{asm:nshift}
 Assume that $T \subseteq T_{\leq \gamma} \cdot (\Sigma_\inp^*+\Sigma_\outp^*)^n$ with $\gamma = 2\left(n(|Q_\mathcal B|+1)+1\right)$. 
\end{assumption}

Now, we can be more specific about $T_k \subseteq T$.

\begin{definition}\label{def:M}
 For $i \geq 0$, let $T_i$ be the set $T \cap \left (T_{\leq \gamma} \cdot (\Sigma_\inp^*+\Sigma_\outp^{\leq i})^n\right )$, that is, the set of $w \in 
 T$ such that after a position in $w$ is more than $\gamma$-lagged, the number of output symbols per block is at most $i$.
\end{definition}

Our aim is to show that there is a bound $k$ such that $S$ has either a $T_k$-controlled uniformization by an sDFA or no $T$-controlled uniformization by an sDFA.
From now on, we call an sDFA implementing a uniformization simply a uniformizer.

The main difficulty in solving the resynchronized uniformization problem is that in general a uniformizer can have unbounded lag, because the waiting time between shifts can be arbitrarily long.
The key insight for the proof is that if such a long waiting time for a shift from input to output is necessary, then, in order to determine the next output block, it is not necessary to store the complete input that is ahead.
We show that it suffices to consider an abstraction of the input that is ahead.
Therefore we will introduce input profiles based on state transformation trees we define below.

Similarly, to deal with the situation where there is a long waiting time for a shift from output to input, we introduce output profiles as an abstraction of output that is ahead.

The bound on the length of output blocks will be chosen based on the profiles.
Before defining profiles, we introduce some necessary definitions and notions.

\subparagraph*{Trees.}

A \emph{finite unordered unranked tree} over an alphabet, a tree for short, is a finite non-empty directed graph with a distinguished root node, such that for any node, there exists exactly one path from the root to this node.
Additionally, a mapping from the nodes of the graph to the alphabet is given.
More formally, a \emph{tree} $t$ over $\Sigma$ is given by a tuple $(V_t,E_t,v_t,\val{t})$, where $V_t$ is a non-empty set of nodes, $E_t \subseteq V_t \times V_t$ is a set of edges, $v_t$ is the root of $t$, also denoted $\ro{t}$, and $\val{t}$ is a mapping $V_t \to \Sigma$.
Furthermore, it is satisfied that any node is reached by a unique path from the root.
Let $T_\Sigma$ denote the set of all trees over $\Sigma$.
We only distinguish trees up to isomorphism.

Given a tree $t$ and a node $v$ of $t$, let $t|_v$ denote the \emph{subtree} of $t$ rooted at $v$.

An $a \in \Sigma$ can also be seen as a tree $a \in T_\Sigma$ defined by $(\{v\},\emptyset,v,\val{a})$, where $\val{a}(v) = a$.

For two trees $t_1$ and $t_2$ with $\val{t_1}(\ro{t_1}) = \val{t_2}(\ro{t_2})$, i.e., with the same root label, we define $t_1 \circ t_2$ as the tree $t$ given by $(V_t,E_t,\ro{t_1},\val{t})$, where $V_t = V_{t_1} \cup V_{t_2} \setminus \{\ro{t_2}\}$, $E_t = E_{t_1} \cup \{ (\ro{t},v) \mid (\ro{t_2},v) \in E_{t_2}\} \cup (E_{t_2} \setminus \{ (\ro{t_2},v) \in E_{t_2}\})$ and $\val{t}$ as $\val{t_1} \cup \val{t_2}$ over nodes in $V_t$ (assuming $V_{t_1} \cap V_{t_2} = \emptyset$).

Given $a \in \Sigma$ and trees $t_1,\dots,t_n$, we define $a(t_1\dots t_n)$ to be the tree $(V_t,E_t,\allowbreak r,\allowbreak \val{t})$, where $V_t = \bigcup_{i=1}^n V_{t_i} \cup \{r\}$ with a new node $r$, $E_t = \bigcup_{i=1}^n E_{t_i} \cup \{(r,\ro{t_i})\! \mid \allowbreak 1 \leq i \leq n\}$ and $\val{t}$ is defined as $\val{t}(r) = a$ and $\bigcup_{i=1}^n \val{t_i}$ (assuming $V_{t_i} \cap V_{t_j} = \emptyset$ for all $i \neq j$).

\subparagraph*{State transformation trees.}

Now that we have fixed our notations, we explain what kind of information we want to represent using state transformation trees.
Basically, for an input segment that is ahead and causes lag, we are interested in how the input segment can be combined with output segments of same or smaller length and how this output can be obtained.

In the following we give an intuitive example.

\begin{example}\label{ex:intuition}
 Let $\Sigma_\inp = \{a\}$ and $\Sigma_\outp = \{b,c\}$.
 Consider the language $S_1 \subseteq \Sigma_{\inp\outp}^*$ given by the DFA $\mathcal A_1$ depicted in Fig.~\ref{subfig:dfas}, and the language $T_1 \subseteq \Sigma_{\inp\outp}^*$ given by the DFA $\mathcal B_1$ depicted in Fig.~\ref{subfig:dfas}.
 As we can see, $S_1$ is $(12)^*(1^* + 2^*)$-controlled, thus, already in its canonical form, and $T_1$ is $1^*2^*1^*2^*$-controlled.
 Both languages have finite $\mathit{shiftlag}$.
 
 Generally, a $T_1$-controlled uniformizer of $S_1$ can have arbitrary large lag.
 We take a look at the runs starting from $q_0$ in $\mathcal A_1$ and starting from $p_0$ in $\mathcal B_1$ that the computation of such a uniformizer can induce.
 However, $\mathcal A_1$ can only be simulated on the part where the lag is recovered, but arbitrarily large lag can occur, thus our goal is to find an abstraction of the part that causes lag.
 E.g., assume that such a uniformizer reads $aa$ without producing output.
 Towards defining an abstraction of $aa$, we are interested in how $aa$ could be combined with outputs of same or smaller length and how these outputs could be produced by some $T_1$-controlled uniformizer.
 Such a uniformizer could read some more $a$s and eventually must produce output.
 Reading $a$s leads from $p_0$ to $p_1$ in $\mathcal B_1$. 
 There are a few possibilities how output of length at most two can be produced such that it is valid from $p_1$ and the simulation from $q_0$ can be continued.
 It is possible to output $b$ ($\delta^*_{\mathcal B_1}(p_1,b) = p_2$, $\delta^*_{\mathcal A_1}(q_0,aba)=q_1$), $bb$ ($\delta^*_{\mathcal B_1}(p_1,bb) = p_2$, $\delta^*_{\mathcal A_1}(q_0,abab)=q_0$) or $bc$ ($\delta^*_{\mathcal B_1}(p_1,bc) = p_2$, $\delta^*_{\mathcal A_1}(q_0,abac)=q_2$).
 Alternatively, it is possible to output $b$ ($\delta^*_{\mathcal B_1}(p_1,b) = p_2$, $\delta^*_{\mathcal A_1}(q_0,ab)=q_0$) read another $a$ ($\delta^*_{\mathcal B_1}(p_2,a) = p_3$) and then produce $b$ ($\delta^*_{\mathcal B_1}(p_3,b) = p_3$, $\delta^*_{\mathcal A_1}(q_0,ab)=q_0$) or $c$ ($\delta^*_{\mathcal B_1}(p_3,c) = p_3$, $\delta^*_{\mathcal A_1}(q_0,ac)=q_2$).
 We see that the outputs $bb$ and $bc$ can each be obtained in two different ways. 
 Namely, as one single output block, or as two output blocks with an input block in between (w.r.t.\ $\mathcal B_1$, we do not care about the number of blocks w.r.t.\ $\mathcal A_1$).
 The maximal number of considered output blocks (w.r.t.\ the target synchronization) is parameterized in the formal definition.
  
 We take a look at the tree in Fig.~\ref{subfig:tree}, this tree contains all the state transformations that can be induced by the described possibilities.
 The possibilities to produce output in one single block is reflected by the edges $(v_0,v_1)$, $(v_0,v_2)$ and $(v_0,v_3)$ representing the state transformation induced by the respective output block.
 The possibilities to produce output in two blocks is reflected by the edges $(v_0,v_4)$ representing the state transformation induced by the first output block, $(v_4,v_5)$ representing the state transformation induced by the intermediate input block, $(v_5,v_6)$ and $(v_5,v_7)$ representing the state transformation induced by the respective second output block.
 
\end{example}

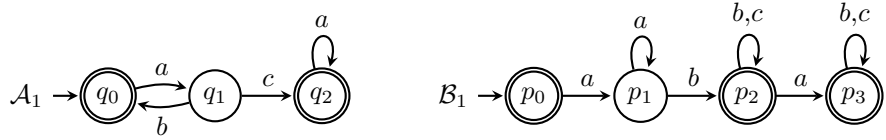
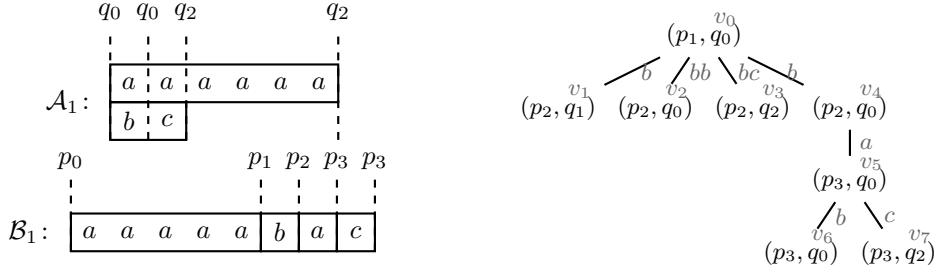
\begin{figure}[t!]
\vskip -0.5cm
\centering
\begingroup

\begin{subfigure}{\textwidth}
\begin{center}
\begin{tikzpicture}[scale=0.8,thick]
\tikzstyle{every state}+=[inner sep=3pt, minimum size=3pt];
    \node[state, accepting, initial, initial text=$\mathcal A_1$] (0) {$q_0$};
    \node[state, right of= 0] (1) {$q_1$};
    \node[state, accepting, right of= 1] (2) {$q_2$};

    \draw[->] (0) edge[bend left=15]  node        {$a$} (1);
    \draw[->] (1) edge[bend left=15]  node        {$b$} (0);
    \draw[->] (1) edge                node        {$c$} (2);
    \draw[->] (2) edge[loop above]    node        {$a$} ();

    \begin{scope}[xshift=7cm]
    \node[state, accepting, initial, initial text=$\mathcal B_1$] (0) {$p_0$};
    \node[state, right of=0] (1) {$p_1$};
    \node[state, accepting, right of=1] (2) {$p_2$};
    \node[state, accepting, right of=2] (3) {$p_3$};

    \draw[->] (0) edge                node        {$a$} (1);
    \draw[->] (1) edge                node        {$b$} (2);
    \draw[->] (1) edge[loop above]    node        {$a$} ();
    \draw[->] (2) edge                node        {$a$} (3);
    \draw[->] (2) edge[loop above]    node        {$b$,$c$} ();
    \draw[->] (3) edge[loop above]    node        {$b$,$c$} ();
    \end{scope}
\end{tikzpicture}
\end{center}
\vskip -1.3em
\caption{
$\Sigma_\inp = \{a\}$, $\Sigma_\outp = \{b,c\}$.
$\mathcal A_1$ recognizes $S_1$, $\mathcal B_1$ recognizes $T_1$.
$S_1$ is $(12)^*(1^* + 2^*)$-controlled and $T_1$ is $1^*2^*1^*2^*$-controlled, thus both have finite $\mathit{shiftlag}$. $S_1$ is already in the canonical form.
}
\label{subfig:dfas}
\end{subfigure}

\begin{subfigure}{0.49\textwidth}
\begin{center}
\begin{tikzpicture}[thick]

\tikzstyle{textshift}=[xshift=0.7em,yshift=-0.7em]


\draw  (-5,5.5) rectangle (-2,5);
\draw  (-5,5) rectangle (-4,4.5);

\node at (-5,5.5)   (a) {};
\node at (-4.5,5.5) (b) {};
\node at (-4,5.5)   (c) {};
\node at (-3.5,5.5) (d) {};
\node at (-3,5.5)   (e) {};
\node at (-2.5,5.5) (f) {};
\node at (-5,5)     (g) {};
\node at (-4.5,5)   (h) {};

\node[textshift] at (a) {$a$};
\node[textshift] at (b) {$a$};
\node[textshift] at (c) {$a$};
\node[textshift] at (d) {$a$};
\node[textshift] at (e) {$a$};
\node[textshift] at (f) {$a$};
\node[textshift] at (g) {$b$};
\node[textshift] at (h) {$c$};

\draw[dashed] ($ (g) - (0,0.5) $) -- ($ (g) + (0,1) $);
\draw[dashed] ($ (h) - (0,0.5) $) -- ($ (h) + (0,1) $);
\draw[dashed] ($ (-4,5) - (0,0.5) $) -- ($ (-4,5) + (0,1) $);
\draw[dashed] ($ (-2,5) - (0,0.5) $) -- ($ (-2,5) + (0,1) $);

\node at ($ (g) + (0,1.2) $) {$q_0$};
\node at ($ (h) + (0,1.2) $) {$q_0$};
\node at ($ (-4,5) + (0,1.2) $) {$q_2$};
\node at ($ (-2,5) + (0,1.2) $) {$q_2$};

\node at ($ (a) - (0.5,0.5) $) {$\mathcal A_1\colon$};

\end{tikzpicture}
\begin{tikzpicture}[thick]

\tikzstyle{textshift}=[xshift=0.7em,yshift=-0.7em]

\draw  (-5,5.5) rectangle (-2.5,5);
\draw  (-2.5,5.5) rectangle (-2,5);
\draw  (-2,5.5) rectangle (-1.5,5);
\draw  (-1.5,5.5) rectangle (-1,5);

\node at (-5,5.5)   (a) {};
\node at (-4.5,5.5) (b) {};
\node at (-4,5.5)   (c) {};
\node at (-3.5,5.5) (d) {};
\node at (-3,5.5)   (e) {};
\node at (-2.5,5.5) (f) {};
\node at (-2,5.5)   (g) {};
\node at (-1.5,5.5) (h) {};
\node at (-1,5.5)   (i) {};

\node[textshift] at (a) {$a$};
\node[textshift] at (b) {$a$};
\node[textshift] at (c) {$a$};
\node[textshift] at (d) {$a$};
\node[textshift] at (e) {$a$};
\node[textshift] at (f) {$b$};
\node[textshift] at (g) {$a$};
\node[textshift] at (h) {$c$};

\draw[dashed] ($ (a) - (0,0.5) $) -- ($ (a) + (0,0.5) $);
\draw[dashed] ($ (f) - (0,0.5) $) -- ($ (f) + (0,0.5) $);
\draw[dashed] ($ (g) - (0,0.5) $) -- ($ (g) + (0,0.5) $);
\draw[dashed] ($ (h) - (0,0.5) $) -- ($ (h) + (0,0.5) $);
\draw[dashed] ($ (i) - (0,0.5) $) -- ($ (i) + (0,0.5) $);

\node at ($ (a) + (0,0.7) $) {$p_0$};
\node at ($ (f) + (0,0.7) $) {$p_1$};
\node at ($ (g) + (0,0.7) $) {$p_2$};
\node at ($ (h) + (0,0.7) $) {$p_3$};
\node at ($ (i) + (0,0.7) $) {$p_3$};

\node at ($ (a) - (0.5,0.25) $) {$\mathcal B_1\colon$};

\end{tikzpicture}
\end{center}
\caption{Runs of $\mathcal A_1$ and $\mathcal B_1$ on synchronizations of $(aaaaaa,bc)$.
$\mathcal A_1$ runs on the canonical synchronization, i.e., on $abacaaaa$. To illustrate this, input and output are drawn one above the other.}
\label{subfig:runs}
\end{subfigure}
\begin{subfigure}{0.49\textwidth}
\begin{center}
\begin{tikzpicture}[thick,scale=0.8,baseline=(current bounding box.base)]
      \tikzstyle{level 1}=[sibling distance=16mm]
      \path[level distance=12mm] node (root){\small$(p_1,q_0)$}
	child{
	  node(0){\small$(p_2,q_1)$}
	}
	child{
	  node(1){\small$(p_2,q_0)$}
	}
	child{
	  node(2){\small$(p_2,q_2)$}
	}
	child{
	  node(3){\small$(p_2,q_0)$}
	  child{
	    node(4){\small$(p_3,q_0)$}
	    child{
	      node(5){\small$(p_3,q_0)$}
	    }
	    child{
	      node(6){\small$(p_3,q_2)$}
	    }
	  }
	}
      ;
      
      \path (root) -- coordinate[midway] (r0) (0);
      \path (root) -- coordinate[midway] (r1) (1);
      \path (root) -- coordinate[midway] (r2) (2);
      \path (root) -- coordinate[midway] (r3) (3);
      \path (3) -- coordinate[midway] (r4) (4);
      \path (4) -- coordinate[midway] (r5) (5);
      \path (4) -- coordinate[midway] (r6) (6);
      
      \node [dark-gray,right] at (r0) {\small$b$};
      \node [dark-gray,right] at (r1) {\small$bb$};
      \node [dark-gray,right] at (r2) {\small$bc$};
      \node [dark-gray,right] at (r3) {\small$b$};
      \node [dark-gray,right] at (r4) {\small$a$};
      \node [dark-gray,right] at (r5) {\small$b$};
      \node [dark-gray,right] at (r6) {\small$c$};
      
      \node [dark-gray,above right] at (root) {\small$v_0$};
      \node [dark-gray,above right] at (0) {\small$v_1$};
      \node [dark-gray,above right] at (1) {\small$v_2$};
      \node [dark-gray,above right] at (2) {\small$v_3$};
      \node [dark-gray,above right] at (3) {\small$v_4$};
      \node [dark-gray,above right] at (4) {\small$v_5$};
      \node [dark-gray,above right] at (5) {\small$v_6$};
      \node [dark-gray,above right] at (6) {\small$v_7$};
     \end{tikzpicture}
\end{center}
\caption{$\mathrm{STT}^1(aa,p_1,q_0)$.
The combination of both runs shown in Fig.~\ref{subfig:runs} is reflected by the rightmost path in the state transformation tree.}
\label{subfig:tree}
\end{subfigure}

 \caption{ 
 A source language $S_1$ and a target language $T_1$ are given in Fig.~\ref{subfig:dfas}. 
 A pair and two different synchronizations of said pair as well as runs are given in Fig.~\ref{subfig:runs}.
 The state transformation tree $\mathrm{STT}^1(aa,p_1,q_0)$ is given in Fig.~\ref{subfig:tree}, its edges are labeled with the respective associated words and its vertices are named for easier reference in Ex.~\ref{ex:intuition}. For a formal definition of STTs see Def.~\ref{def:inputstt}, for an explanation for this specific tree see Ex.~\ref{ex:intuition}.
 }
 \label{fig:inputstt}
 \endgroup
\vskip -0.5cm
\end{figure}

Now that we have given some intuition, we formally introduce input state transformation trees, a graphical representation of the construction of input state transformation trees is given in Fig.~\ref{fig:STT}.
As seen in the example, each edge of such a tree represents the state transformation induced by an output resp.\ input block, alternatively.

\begin{figure}
\vskip -0.5cm
 \centering
  \begin{tikzpicture}[thick,scale=0.95]

    \node[draw,circle,fill,inner sep=0pt,minimum size=3pt] (v2) at (-0.5,5) {};

    \node [draw,circle,fill,inner sep=0pt,minimum size=3pt] (v1) at (-3,4) {};

    \node [draw,circle,fill,inner sep=0pt,minimum size=3pt] (v3) at (-2.5,4) {};

    \node [draw,circle,fill,inner sep=0pt,minimum size=3pt] (v4) at (-2,4) {};

    \draw (v1) -- (v2);

    \draw (v3) -- (v2);

    \draw (v4) -- (v2);

    \draw [dark-gray] (-2.5,4) ellipse (1.5 and 0.25);

    \node [draw,circle,fill,inner sep=0pt,minimum size=3pt] (v10) at (-0.5,3.5) {};

    \node [draw,circle,fill,inner sep=0pt,minimum size=3pt] (v5) at (0,3.5) {};

    \node [draw,circle,fill,inner sep=0pt,minimum size=3pt] (v6) at (1,3.5) {};

    \node [draw,circle,fill,inner sep=0pt,minimum size=3pt] (v7) at (2.5,3.5) {};

    \draw (v2) -- (-0.5,3.5) node [draw,circle,fill,inner sep=0pt,minimum size=3pt] {};

    \draw (v2) -- (v5);

    \draw (v2) -- (v6);

    \draw (v2) -- (v7);

    \draw  [dark-gray](v6) ellipse (2 and 0.25);

    \node [draw,circle,fill,inner sep=0pt,minimum size=3pt] (v11) at (0,2) {};

    \node [draw,circle,fill,inner sep=0pt,minimum size=3pt] (v8) at (1,2) {};

    \node [draw,circle,fill,inner sep=0pt,minimum size=3pt] (v9) at (2.5,2) {};

    \draw (v8);

    \draw (1,2) -- (-1,0) -- (3,0) -- (v8);

    \draw (1,3.5) -- (0,2);

    \draw (v6) -- (v8);

    \draw (v6) -- (v9);

    \draw [dark-gray] (v8) ellipse (2 and 0.25);

    \draw [dotted](v10) -- (-5.5,0) -- (-4.5,0) -- (v10);

    \draw [dotted](v5) -- (-4,0) -- (-3,0) -- (v5);

    \draw [dotted](v7) -- (5,0) -- (6,0) -- (2.5,3.5);

    \draw [dotted](v11) -- (-2.5,0) -- (-1.5,0) -- (v11);

    \draw [dotted](v9) -- (3.5,0) -- (4.5,0) -- (v9);

    \node at (0,5) {\ \ \ \small$(p,q)$};

    \node at (1,0.5) {\small$\mathrm{STT}^{i-1}(x'',p'',q')$};

    \node at (1.5,3.5) {\ \ \ \small$(p',q')$};

    \node at (1.5,2) {\ \ \ \small$(p'',q')$};

    \node [dark-gray] at (-4,4.5) {\small$\mathrm{Reach}_0$};

    \node [dark-gray] at (3,4) {\small$\mathrm{Reach}_1$};

    \node [dark-gray] at (2.5,2.5) {\small$\mathrm{Reach}_{(x'',p',q')}$};

    
    \node [dark-gray, left] at (v2) {\small$v_0$};
    \node [dark-gray, left] at (v6){\small$v_1$};
    \node [dark-gray, left] at (v8) {\small$v_2$};

  \end{tikzpicture}
\caption{
 Schema of the input state transformation tree $\mathrm{STT}^{i}(x,p,q)$ for some $i > 0$.
 Cf.~Def.~\ref{def:inputstt}.
 Let $x'x''$ be a factorization of $x$ with $x', x'' \in \Sigma_\inp^+$, and let $y \in \Sigma_\outp^+$ be such that $|x'|= |y|$ and $\delta_\mathcal A^*(q,(x',y)) = q'$ and $\delta_\mathcal B^*(p,y) = p'$, and let $\delta_\mathcal B^*(p',w) = p''$ for some $w \in \Sigma_\inp^+$, then $\mathrm{STT}^{i}(x,p,q)$ contains a path $v_0v_1v_2$ labeled $(p,q)(p',q')(p'',q')$ such that $v_0$ is the root, $v_1$ is the root of $t^{i-1}_{(x'',p',q')}$, and $v_2$ is the root of $\mathrm{STT}^{i-1}(x'',p'',q')$.
}
\label{fig:STT}
\vskip -0.5cm
\end{figure}

\begin{definition}[Input state transformation tree]\label{def:inputstt}
For $i \geq 0$, $p \in Q_\mathcal B$, $q \in Q_\mathcal A$ and $x \in \Sigma_\inp^*$, the \emph{state transformation tree} $\mathrm{STT}^i(x,p,q)$ is a tree over $Q_\mathcal B \times Q_\mathcal A$ defined inductively.
 \begin{itemize}[topsep=1em]
 \item 
For $i = 0$, the tree $\mathrm{STT}^0(x,p,q)$ is built up as follows.

Let $\mathrm{Reach}_0 \subseteq Q_\mathcal B \times Q_\mathcal A$ be the smallest set such that
$(p',q') \in \mathrm{Reach}_0$ if there is some $y \in \Sigma_\outp^*$ with $|y| \leq |x|$ such that $\delta_\mathcal A^*(q,(x,y)) = q'$ and $\delta_\mathcal B^*(p,y) = p'$.

{\quad\small(This set represents state transformations induced by output blocks that fully consume $x$.)}

Then the tree $\mathrm{STT}^0(x,p,q)$ is defined as $(p,q)({r_1}\dots{r_n})$ 
for $\mathrm{Reach}_0 = \{r_1,\dots,r_n\}$, meaning it contains a child for every state transformation that can be induced w.r.t.\ $\mathcal A$ and $\mathcal B$ starting from $q$ and $p$, respectively, by the input segment $x$ together with an output segment that consumes $x$ (w.r.t.\ $\mathcal A$) consisting of a single output block (w.r.t.\ $\mathcal B$). 

\item For $i > 0$, the tree $\mathrm{STT}^i(x,p,q)$ is built up as follows.

Let $\mathrm{Reach}_1 \subseteq \Sigma_\inp^* \times Q_\mathcal B \times Q_\mathcal A$ be the smallest set such that $(x'',p',q') \in \mathrm{Reach}_1$ if 
\begin{itemize}
 \item $x = x'x''$ with $x''\in \Sigma_\inp^+$ for an $x'\! \in \Sigma_\inp^+$ such that there is a $y \in \Sigma_\outp^+$ with $|y| = |x'|$, and
 \item $\delta_\mathcal A^*(q,(x',y)) = q'$ and $\delta_\mathcal B^*(p,y) = p'$.
\end{itemize}
\vskip -0.5em

{\quad\small(This set represents state transformations induced by output blocks that partially consume $x$.)}

For $(x'',p',q') \in \mathrm{Reach}_1$, let $\mathrm{Reach}_{(x'',p',q')} \subseteq \Sigma_\inp^* \times Q_\mathcal B \times Q_\mathcal A$ be the smallest set such that $(x'',p'',q') \in \mathrm{Reach}_{(x'',p',q')}$ if $\delta_\mathcal B^*(p',w) = p''$ for some $w \in \Sigma_\inp^+$.

{\quad\small(These sets represents state transformations induced by intermediate input blocks.)}

Furthermore, let the tree $t_{(x'',p',q')}^{i-1}$ be defined as $(p',q')(\mathrm{STT}^{i-1}{r_1}\dots\mathrm{STT}^{i-1}{r_n})$ for $\mathrm{Reach}_{(x'',p',q')}\allowbreak =\allowbreak \{r_1,\dots,r_n\}$.

Then the tree $\mathrm{STT}^i(x,p,q)$ is defined as 
\begingroup
\setlength{\abovedisplayskip}{.5\columnsep }
\setlength{\belowdisplayskip}{.5\columnsep }
\begin{equation*}
\mathrm{STT}^0(x,p,q) \circ (p,q)(t_{s_1}^{i-1}\dots t_{s_n}^{i-1})
\end{equation*}
\endgroup
for $\mathrm{Reach}_1 = \{s_1,\dots,s_n\}$, meaning it contains a path for every sequence of state transformations that can be induced w.r.t.\ $\mathcal A$ and $\mathcal B$ starting from $q$ and $p$, respectively, by the input segment $x$ together with an output segment that consumes $x$ (w.r.t.\ $\mathcal A$) consisting of at most $i+1$ output blocks (w.r.t.\ $\mathcal B$). 
Additionally, for output segments that have a common prefix of output blocks the state transformations induced by the common prefix of blocks are represented by the same nodes in the tree. 
\end{itemize}
Intuitively, edges in such a tree are associated with the words that induced the state transformation, e.g., as shown in Fig~\ref{subfig:tree}.
\end{definition}

Given a tree as in Def.~\ref{def:inputstt}, the maximal degree of such a tree depends on the input word used as parameter.
Our goal is to have state transformation trees where the maximum degree is independent of this parameter.
Therefore, we introduce \emph{reduced trees}.
The idea is that if for some input word different outputs induce the same state transformations then only one representation is kept in the input state transformation tree. 

\begin{definition}[Reduced tree]\label{def:redtree}
 A tree $t \in T_{\Sigma}$ over some alphabet $\Sigma$ is called \emph{reduced} if for each node $v$ there exist no two children $u,u'$ of $v$ such that the subtrees rooted at $u$ and $u'$ are isomorphic.
 
 For a tree $t \in T_{\Sigma}$, let $\mathit{red}(t) \in T_\Sigma$ denote its reduced variant.
 The reduced variant of a tree can easily be obtained by a bottom-up computation where for each node duplicate subtrees rooted at its children are removed. 
\end{definition}

Note that for each $i$, the set of reduced input state transformation trees with parameter $i$ is a finite set.

Hitherto, we have discussed how to capture state transformations induced by an input word together with output words of same or smaller length.
Additionally, we need to capture state transformations induced by an output word together with input words of same or smaller length.
Therefore, we introduce a notion similar to input state transformation trees, namely, \emph{output state transformation trees}.
A formal definition can be found in the appendix.

Furthermore, we need a notion that captures state transformations that can be induced by an input resp.\ output word alone, see Def.~\ref{def:stf} below.
Then, we are ready to define profiles.

\begin{definition}[State transformation function]\label{def:stf}
 For each $w \in \Sigma_\inp^* \cup \Sigma_\outp^*$, we define the function $\tau_w\colon Q_\mathcal B \to Q_\mathcal B$ with $\tau_w(p) = q$ if $\delta_\mathcal B^*(p,w) = q$ called \emph{state transformation function w.r.t.\ $w$}.
\end{definition}


\subparagraph*{Profiles.}

Recall, $T \subseteq T_{\leq \gamma} \cdot (\Sigma_\inp^*+\Sigma_\outp^*)^n$, and our goal is to show that there is a bound $k$ such that it suffices to focus on constructing $T_k$-controlled uniformizers instead of $T$-controlled uniformizers, meaning that we can focus on uniformizers in which the length of output blocks is bounded by $k$ after the lag has exceeded $\gamma$ at some point.

The core of the proof is to show that if the lag between input and output becomes very large ($\gg \gamma$), it is not necessary to consider the complete input that is ahead to determine the next output block, but an abstraction (in the form of profiles) suffices.
Note that if the lag has exceeded $\gamma$ at some point the number of remaining output blocks is at most $\lceil n/2 \rceil$.

As a result, given an input word $x \in \Sigma_\inp^*$, we are interested in the state transformation that is induced by $(x,\pi_\outp(w))$ in $\mathcal A$ (recognizing $S_\mathit{can}$) and by $w$ in $\mathcal B$ (recognizing $T$) for each word $w \in \Sigma_{\inp\outp}^*$ such that $|\pi_\outp(w)| \leq |x|$ and $\mathit{shift}(w) \leq \lceil n/2 \rceil$.
In words, we are interested in the state transformations that can be induced by $x$ together with outputs of same or smaller length that are composed of at most $\lceil n/2 \rceil$ different output blocks.

For $x \in \Sigma_\inp^*$, this kind of information is accurately represented by the set of all reduced input state transformation trees with parameters $x$ and $\lceil n/2 \rceil$.

The same considerations with switched input and output roles apply for an output word $y \in \Sigma_\outp^*$.


\begin{definition}[Input profile]
 For $x \in \Sigma_\inp^*$, we define its \emph{profile} $P_x$ as $(\tau_x,\mathrm{STT}_x^{\lceil n/2 \rceil})$, where 
 \begingroup
 \setlength{\abovedisplayskip}{.5\columnsep }
 \setlength{\belowdisplayskip}{.5\columnsep }
 \begin{equation*}
 \mathrm{STT}_x^{\lceil n/2 \rceil} = \bigcup_{(p,q) \in Q_\mathcal B \times Q_\mathcal A} \{\mathit{red}\bigl(\mathrm{STT}^{\lceil n/2 \rceil}(x,p,q)\bigr)\}.
 \end{equation*}
 \endgroup
\end{definition}

Similarly, we define \emph{output profiles}, a formal definition can be found in the appendix.

A note on the number of different profiles.
Profiles are based on reduced STTs with parameter $\lceil n/2 \rceil$, where $n$ bounds $\mathit{shiftlag(T)}$. 
The size of the set of these STTs is non-elementary in $n$, hence also the number of profiles.
This implies a non-elementary complexity of our decision procedure.

Furthermore, let $\mathcal P_\inp$ be the set $\bigcup_{x \in \Sigma_\inp^*} \{ P_x \}$ of all input profiles and $\mathcal P_\outp$ be the set $\bigcup_{y \in \Sigma_\outp^*} \{ P_y \}$ of all output profiles.
For a $P \in \mathcal P_\inp \cup \mathcal P_\outp$, let $z$ be a \emph{representative} of $P$ if $z$ is a shortest word such that $P = P_z$.

We show that from the profiles of two words $x_1$ and $x_2$ one can compute the profile of the word $x_1x_2$.
Hence, the set of profiles can be equipped with a concatenation operation, i.e., for words $x_1$ and $x_2$ we let $P_{x_1}P_{x_2} = P_{x_1x_2}$.
We obtain the following.

\begin{restatable}{lemma}{lemmamonoid}\label{lemma:monoid}
 The set of input profiles is a monoid with concatenation; the set of output profiles is a monoid with concatenation.
\end{restatable}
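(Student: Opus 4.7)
The plan is to view $\mathcal P_\inp$ as the quotient of $\Sigma_\inp^*$ by the equivalence relation $x \equiv x' \iff P_x = P_{x'}$, and to show that $\equiv$ is a monoid congruence. Once this is established, the quotient inherits associativity and an identity $P_\varepsilon$ automatically from $\Sigma_\inp^*$, and the concatenation $P_{x_1} \cdot P_{x_2} := P_{x_1 x_2}$ is well-defined.

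Being a congruence amounts to showing that $P_{x_1 x_2}$ is a function of $P_{x_1}$ and $P_{x_2}$ alone. For the $\tau$-component this is immediate: since $\delta_\mathcal B^*(p, x_1 x_2) = \delta_\mathcal B^*(\delta_\mathcal B^*(p, x_1), x_2)$ for every $p$, we have $\tau_{x_1 x_2} = \tau_{x_2} \circ \tau_{x_1}$ as a function on $Q_\mathcal B$. For the STT-component, I would proceed by induction on the depth parameter $i$ from $0$ up to $\lceil n/2 \rceil$, showing that for every starting pair $(p, q)$ the reduced tree $\mathit{red}(\mathrm{STT}^i(x_1 x_2, p, q))$ can be built purely from the data recorded in $P_{x_1}$ and $P_{x_2}$.

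The inductive step relies on the following decomposition. Any output sequence (together with its intermediate input blocks) witnessing a root-to-leaf path of $\mathrm{STT}^i(x_1 x_2, p, q)$ can be split uniquely at the position at which $\mathcal A$'s canonical run finishes consuming $x_1$. The initial part then corresponds to a path in some tree of $P_{x_1}$ from $(p, q)$ to an intermediate pair $(p', q')$, and the remaining part to a path in some tree of $P_{x_2}$ starting at $(p', q')$. Three cases arise at the junction: (i) the split falls at a block boundary of the synchronization, in which case a $P_{x_2}$-tree rooted at $(p', q')$ is grafted directly onto the corresponding node of a $P_{x_1}$-subtree; (ii) the split falls strictly inside a single output block that straddles $x_1$ and $x_2$, which requires matching an internal node of a $P_{x_1}$-tree corresponding to a partial output block with the root of a $P_{x_2}$-tree completing that block; (iii) the split falls inside an intermediate $\mathcal B$-input block, handled by applying $\tau_{x_2}$ to the $\mathcal B$-state while leaving the $\mathcal A$-state unchanged. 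Conversely, every compatible gluing of a $P_{x_1}$-path with a $P_{x_2}$-path yields a genuine path of $\mathrm{STT}^i(x_1 x_2, p, q)$; collecting all such decompositions and applying $\mathit{red}$ produces the claimed tree, which visibly depends only on $P_{x_1}$ and $P_{x_2}$.

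The main obstacle I anticipate is case (ii), the straddling output block: the trees in $P_{x_1}$ record state pairs reached after outputs combined with all of $x_1$, but distinguishing which of them arise from an output block that terminates exactly at position $|x_1|$ (and can therefore be continued in $x_2$) versus other alignments requires careful bookkeeping of the internal structure of the trees, not just their children, to avoid introducing spurious transitions in the glued tree. Once this bookkeeping is in place, the base case $i = 0$ and the inductive step go through as outlined, and the construction yields the required function $(P_{x_1}, P_{x_2}) \mapsto P_{x_1 x_2}$. The output-profile case is handled by the entirely symmetric construction exchanging the roles of input and output and of $\mathcal A$- and $\mathcal B$-states.
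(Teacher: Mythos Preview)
Your congruence framing and the decomposition at the point where $\mathcal A$ finishes consuming $x_1$ are exactly the paper's approach; the paper just packages it as an explicit grafting recipe rather than an induction on $i$, building $\mathrm{STT}^m(x_1x_2,p,q)$ by attaching $x_2$-trees at and above the leaves of $\mathrm{STT}^m(x_1,p,q)$. Two corrections are worth flagging.

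Your case (iii) cannot occur. In an input STT the intermediate $\Sigma_\inp^+$-blocks are read only by $\mathcal B$; the $\mathcal A$-state stays fixed along those edges and no portion of $x$ is consumed there. Consequently the split point---which is measured by how much of $x$ has been matched against output---always lands inside or at the end of an output block, so cases (i) and (ii) already exhaust the possibilities. These two cases correspond precisely to the paper's two operations at a leaf $v$ with parent $u$: grafting $x_2$-subtrees below $v$ after one further intermediate $\mathcal B$-input step, and adding new children directly to $u$ to model a single output block that consumes the tail of $x_1$ and continues into $x_2$.

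The claim that the output case is ``entirely symmetric'' is not quite right. Output profiles are built from \emph{annotated} output STTs (Definition~\ref{def:anntree} in the appendix), whose nodes carry, beyond the $(p,q)$-pair, a pointer into the underlying reduced STT and a subset of its vertices recording which nodes are visited by prefixes of the witnessing input block. The paper itself waves this away with a one-line ``in the same way as described above for input profiles'', but a complete argument must check that this annotation layer is also compositional under the splitting---additional bookkeeping that is absent from the symmetric swap you describe.
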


 A word $x \in \Sigma_\inp^*$ and its profile $P_x$ are called \emph{idempotent} if $P_x = P_{xx}$.
 As a consequence of Ramsey's Theorem (see e.g., \cite{diestel2000graduate}) we obtain the following lemma.

\begin{restatable}[Consequence of Ramsey]{lemma}{lemmaramsey}\label{lemma:ramsey}
 There is a computable $r \in \mathbbm{N}$ such that each word $x \in \Sigma_\inp^*$ with $|x| \geq r$ contains a non-empty idempotent factor for the concatenation of profiles.
\end{restatable}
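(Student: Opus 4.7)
The plan is to apply the classical Ramsey-style argument for finite semigroups to the monoid $\mathcal{P}_\inp$ of input profiles. The very first step is to observe that $\mathcal{P}_\inp$ is in fact \emph{finite}: a profile consists of a state transformation $\tau_x \in Q_\mathcal{B}^{Q_\mathcal{B}}$ together with a set of reduced STTs of parameter $\lceil n/2 \rceil$ whose node labels come from the finite alphabet $Q_\mathcal{B} \times Q_\mathcal{A}$. Since reduction forbids any two sibling subtrees from being isomorphic, an induction on the parameter shows that the set of reduced STTs of parameter $\lceil n/2 \rceil$ is finite, and hence so is $\mathcal{P}_\inp$.

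Given a word $x = a_1 \cdots a_m \in \Sigma_\inp^*$, I would colour each unordered pair $\{i,j\}$ with $0 \le i < j \le m$ by the profile $P_{x[i+1,j]} \in \mathcal{P}_\inp$. If $m + 1 \ge R(3;|\mathcal{P}_\inp|)$, where $R$ denotes the multicolour Ramsey number for a monochromatic triangle on $|\mathcal{P}_\inp|$ colours, then there exist indices $0 \le i_1 < i_2 < i_3 \le m$ such that the three pairs $\{i_1,i_2\}, \{i_2,i_3\}, \{i_1,i_3\}$ all receive the same colour $P$. Using that profile concatenation is well-defined and agrees with the profile of the concatenation (Lemma~\ref{lemma:monoid}), we compute
\[
P \;=\; P_{x[i_1+1,i_3]} \;=\; P_{x[i_1+1,i_2]} \cdot P_{x[i_2+1,i_3]} \;=\; P \cdot P,
\]
so $P$ is idempotent, and it is realised by the factor $x[i_1+1,i_2]$, which is non-empty because $i_1 < i_2$.

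The bound $r := R(3;|\mathcal{P}_\inp|) - 1$ then witnesses the statement. Both $|\mathcal{P}_\inp|$ and the multicolour Ramsey number $R(3;|\mathcal{P}_\inp|)$ are computable from $\mathcal{A}$ and $\mathcal{B}$ (the former by enumerating reduced STTs up to the bounded parameter together with the finitely many maps $Q_\mathcal{B} \to Q_\mathcal{B}$, the latter by the standard recursive upper bound on Ramsey numbers), so $r$ is computable, albeit astronomical. The main conceptual obstacle has already been resolved upstream: showing that profiles of concatenated words factor through the monoid operation, which is precisely the content of Lemma~\ref{lemma:monoid}; once this is granted, together with the finiteness of $\mathcal{P}_\inp$, the present lemma reduces to a textbook Ramsey argument.
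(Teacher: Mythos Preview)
Your proposal is correct and follows essentially the same Ramsey argument as the paper: colour pairs of positions by the profile of the enclosed factor, extract a monochromatic triangle, and use Lemma~\ref{lemma:monoid} to conclude idempotence. If anything, your version is slightly more careful than the paper's, since you explicitly justify finiteness of $\mathcal{P}_\inp$ and your indexing $P_{x[i+1,j]}$ avoids a minor off-by-one in the paper's colouring (where $x[i,j]$ and $x[j,k]$ overlap at position $j$).
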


Now, we have the right tools to prove that the existence of a $T$-controlled uniformizer implies that there also exists a $T_k$-controlled uniformizer for a computable $k$.
For the remainder, we fix two bounds.

\begin{assumption}\label{asm:bounds}
 Assume $r_1$ is chosen as in Lemma \ref{lemma:ramsey} and $r_2$ is chosen as the smallest bound on the length of representatives of output profiles.
 Wlog, assume $r_1,r_2 > \gamma$.
\end{assumption}

Finally, we are ready to prove the key lemma, that is, Lemma~\ref{lemma:shortregular}, which shows that it is sufficient to consider uniformizers in which the length of output blocks is bounded.

Recall, a uniformizer works asynchronously, which leads to large lag.
First, we show that if the output is lagged more than $r_1$ symbols, meaning, the input that is ahead contains an idempotent factor, it suffices to consider output blocks whose length depends on the idempotent factor.
Secondly, we show that it suffices to consider uniformizers in which the output is ahead at most $r_2$ symbols.
The combination of both results yields Lemma~\ref{lemma:shortregular}.

Recall, by Asm.~\ref{asm:nshift}, $T \subseteq T_{\leq \gamma} \cdot (\Sigma_\inp^*+\Sigma_\outp^*)^n$ and by Def.~\ref{def:M}, $T_i = T \cap \left( T_{\leq \gamma} \cdot (\Sigma_\inp^*+\Sigma_\outp^{\leq i})^n \right)$ for $i \geq 0$.

\begin{restatable}{lemma}{lemmashortregular}\label{lemma:shortregular}
 If $S$ has a $T$-controlled uniformizer, then $S$ has a $T_k$-controlled uniformizer for a computable $k \geq 0$.
\end{restatable}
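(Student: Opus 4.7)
Let $\mathcal{U}$ be an sDFA realising a $T$-controlled uniformization of $S$. The plan is to reschedule $\mathcal{U}$ into an sDFA $\mathcal{U}'$ whose graph lies in $T_k$ for a computable bound $k$, while still uniformizing $S$. Following the road map announced before the statement, I will handle in Stage~1 output blocks produced when the input runs far ahead, and in Stage~2 the symmetric case where the output overtakes the input. The profile abstraction is the central tool: after the lag exceeds $\gamma$, at most $\lceil n/2\rceil$ output blocks remain by Assumption~\ref{asm:nshift}, so input/output profiles with parameter $\lceil n/2\rceil$ faithfully record every state transformation that any admissible completion can effect on $\mathcal{A}$ and $\mathcal{B}$; since the profile monoids are finite (Lemma~\ref{lemma:monoid}), two pigeonhole arguments will suffice.

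\textbf{Stage 1 (input-ahead case).} Consider a run of $\mathcal{U}$ after the lag has surpassed $\gamma$, at a moment just before an output block $y$ is emitted, with ``input ahead'' word $x$ satisfying $|x|>r_1$. By Lemma~\ref{lemma:ramsey}, factor $x = x_1 e x_2$ with $e\neq\varepsilon$ and $P_e = P_{ee}$. I scan $y$ left-to-right and track the pair (current state of $\mathcal{B}$, input profile of the yet-unconsumed part of $x$). Both components live in finite sets, so once $|y|$ exceeds the computable constant $k_1 := |Q_\mathcal{B}|\cdot|\mathcal{P}_\inp|$ two prefixes $y_1\sqsubseteq y_1y_2\sqsubseteq y$ with $y_2\neq\varepsilon$ share this pair. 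Excising the loop $y_2$ yields the pair $(x,y_1y_3)$, where $y = y_1 y_2 y_3$. The idempotency $P_e = P_{ee}$ is used to verify that the $\mathcal{A}$-state reached on the canonical synchronization of $(x,y_1y_3)$ coincides with the $\mathcal{A}$-state reached after the original block modulo an extra copy of $e$ absorbed by the idempotent, so the modified pair remains in $\llbracket S\rrbracket$ and every continuation of $\mathcal{U}$ still applies. Iterating produces a uniformizer $\mathcal{U}_1$ whose output blocks in the high-input-lag regime have length at most $k_1$.

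\textbf{Stage 2 (output-ahead case).} Next, I show that in a run of $\mathcal{U}_1$ after the lag has exceeded $\gamma$, the output need not be ahead by more than $r_2$. If at some point the output leads by a word $w\in\Sigma_\outp^*$ with $|w|>r_2$, the choice of $r_2$ as the maximum shortest-representative length of an output profile yields $w'\in\Sigma_\outp^*$ with $|w'|\leq r_2$ and $P_{w'}=P_w$. Substituting $w'$ for $w$ in the planned output preserves every state transformation that a future interleaving of input blocks can induce on $\mathcal{A}$ and $\mathcal{B}$, so the substitution extends to a valid completion. Performing this rewrite uniformly yields a uniformizer $\mathcal{U}_2$ in which the output is ahead by at most $r_2$ after the lag passes $\gamma$.

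\textbf{Combining the stages and main obstacle.} In $\mathcal{U}_2$, whenever an output block of length $m$ is emitted, the input was ahead by some $L$ beforehand: if $L>r_1$, Stage~1 gives $m\leq k_1$; otherwise the output can at most catch up and then overshoot by $r_2$, giving $m\leq L+r_2\leq r_1+r_2$. Thus every output block in the high-lag region has length at most $k:=\max(k_1,\,r_1+r_2)$, so the graph of $\mathcal{U}_2$ lies in $T\cap\bigl(T_{\leq\gamma}\cdot(\Sigma_\inp^*+\Sigma_\outp^{\leq k})^n\bigr) = T_k$. A product construction that augments the state space of $\mathcal{U}$ with the tracked profile makes $\mathcal{U}_2$ into an sDFA, since profile tracking and the cut-and-paste rules above are deterministic on the product. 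The main obstacle lies in Stage~1: verifying that excising the loop $y_2$ keeps the new pair in $\llbracket S\rrbracket$ is delicate because the canonical synchronization depends non-trivially on $|y|$, and one has to control the $\mathcal{A}$-state using exactly the pumping property $P_e=P_{ee}$ of the idempotent factor in the input-ahead word; formalising this invariant uniformly across all continuations of $\mathcal{U}$ is where the bulk of the technical work sits.
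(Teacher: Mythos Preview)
Your Stage~1 contains a genuine gap. Tracking only the pair $\bigl(\text{$\mathcal B$-state after }y[1..j],\ P_{x[j+1..]}\bigr)$ and excising the loop $y_2$ does not control the $\mathcal A$-state. Even if you also tracked the $\mathcal A$-state $q_j:=\delta_\mathcal A^*(q,(x[1..j],y[1..j]))$ and found positions $a<b$ with identical triples, excision still fails: after removing $y_2$, the remaining suffix $y_3=y[b+1..]$ is now paired in the canonical synchronization with $x[a+1..a+|y_3|]$ instead of $x[b+1..b+|y_3|]$, so $\delta_\mathcal A^*(q_a,(x[a+1..a+|y_3|],y_3))$ and $\delta_\mathcal A^*(q_a,(x[b+1..b+|y_3|],y_3))$ need not coincide. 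The equality $P_{x[a+1..]}=P_{x[b+1..]}$ only tells you that the reduced STTs are isomorphic, which guarantees that \emph{some} output word achieves the same state transformation as $y_3$ does on the other suffix --- not that $y_3$ itself does. Your appeal to the idempotent $e$ (``modulo an extra copy of $e$ absorbed'') does not repair this: the idempotent lives in the input word, which you are not allowed to change, and you never explain concretely how $P_e=P_{ee}$ enters the state computation for $(x,y_1y_3)$.

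The paper's argument is structurally different. It first uses Lemma~\ref{lemma:pumping} to arrange that once $\mathcal U$ has waited past an idempotent factor $x_2$, it keeps waiting while $x_2$ is pumped. Then, for input ahead $xx_1x_2$, it considers the run of $\mathcal U$ on the \emph{pumped} input $xx_1x_2^m$ with $m$ large enough that all remaining output (at most $\lceil n/2\rceil$ blocks of bounded size) fits inside $xx_1x_2^m$; the outputs $\bar y_1,\dots,\bar y_i$ produced there trace a path in $t_{xx_1x_2^m}$. Since $P_{xx_1x_2}=P_{xx_1x_2^m}$, the same tree equals $t_{xx_1x_2}$, so for each edge traversed by $\bar y_j$ there exists a (possibly different and shorter) $y_j$ traversing the corresponding edge with respect to $xx_1x_2$. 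The new uniformizer emits these $y_j$. This is a replacement argument driven by STT isomorphism, not a cut-and-paste on a single output block; your proposal conflates the two.

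Your Stage~2 is closer in spirit to the paper's Lemma~\ref{lemma:longeroutput}, but it glosses over why the replacement $w\mapsto w'$ can be made \emph{sequentially}: after $w'$ is emitted, the uniformizer must react to incoming input letter by letter, and one has to argue that every input continuation compatible with $w'$ corresponds to one compatible with $w$ so that $\mathcal U$'s behaviour can be mimicked. The paper needs the annotated STTs (Definition~\ref{def:anntree}) precisely for this, to record which intermediate vertices are visited by prefixes of the input block; plain output STTs do not carry enough information.
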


The proof of the above lemma yields that $k$ can be chosen as $r_1 + r_2$.
This concludes the first part of the proof of Theorem~\ref{thm:regular}.
For the second part, we prove that the problem whether $S$ has a $T_i$-controlled uniformizer for an $i$ reduces to the question whether $T_i(S)$ has a subset uniformizer for a suitable $T_i(S)$ as defined below in Lemma~\ref{lemma:transformregular}.

\subparagraph*{Reduction.}

The next lemma shows that from $S$ a regular $T_i(S)$ can be obtained such that $T_i(S)$ consists of all $T_i$-controlled synchronizations $w$ with $\llbracket w \rrbracket \in \llbracket S \rrbracket$.

\begin{restatable}{lemma}{lemmatransformregular}\label{lemma:transformregular}
 For $i \geq 0$, the language $T_i(S) = \{ w \in \Sigma_{\inp\outp}^* \mid w \in T_i \text{ and }\allowbreak \llbracket w \rrbracket \in \llbracket S \rrbracket\}$ is a $T_i$-controlled effectively constructible regular language.
\end{restatable}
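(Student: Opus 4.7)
We construct a DFA $\mathcal C$ with $L(\mathcal C)=T_i(S)$. Regularity of $T_i$ itself is immediate: $T_i = T \cap (T_{\leq\gamma}\cdot(\Sigma_\inp^*+\Sigma_\outp^{\leq i})^n)$ is an intersection of regular languages, and we let $\mathcal T_i$ be a DFA recognizing it. To also enforce $\llbracket w\rrbracket \in \llbracket S\rrbracket$, equivalently that $\mathcal A$ accepts the canonical synchronization of $(\pi_\inp(w),\pi_\outp(w))$, we run $\mathcal T_i$ in parallel with a canonical-sync simulator. This simulator maintains, as $w$ is read, (i)~the current state $q$ of $\mathcal A$ reached on the already-paired prefix of the canonical sync, and (ii)~a \emph{buffer} representing the still-unpaired suffix, which is a word either in $\Sigma_\inp^*$ (input-ahead) or in $\Sigma_\outp^*$ (output-ahead). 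Reading an input letter pairs it with the head of an output-ahead buffer, inducing a single $\mathcal A$-transition on the corresponding $\Sigma_\inp\Sigma_\outp$-pair, or else extends an input-ahead buffer; output letters are handled dually.

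The only obstacle to finite-stateness is an input-ahead buffer in the tail $(\Sigma_\inp^*+\Sigma_\outp^{\leq i})^n$, where input blocks can be arbitrarily long. The key bounds are: while the $T_{\leq\gamma}$-prefix is being read the buffer has length at most $\gamma$ and is kept explicitly; and in the tail the total remaining output is at most $n\cdot i$, which both caps output-ahead buffers by $\gamma+n\cdot i$ and bounds the number of pairing-dequeues from any input-ahead buffer by $n\cdot i$. Consequently only the first $n\cdot i$ globally-indexed positions of a tail input-ahead buffer can ever be consumed by a pairing; every later position appears in $\mathcal A$'s run solely through the trailing $\Sigma_\inp^*$-block of the canonical synchronization, for which only its induced state transformation in the input automaton $\mathcal A_D$ is relevant. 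We therefore represent a tail input-ahead buffer by a pair $(\ell,\tau)$ where $\ell\in\Sigma_\inp^{\leq n\cdot i}$ lists the still-live letters among the first $n\cdot i$ positions and $\tau\colon Q_{\mathcal A}\to Q_{\mathcal A}$ accumulates the $\mathcal A_D$-transformation of the letters enqueued past position $n\cdot i$; enqueues either extend $\ell$ or compose $\tau$ with a one-letter transformation, and dequeues strip the head of $\ell$, which is nonempty whenever a dequeue is needed precisely because at most $n\cdot i$ outputs remain. Both components take only finitely many values.

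With this abstraction in place, $\mathcal C$'s state is the product of $\mathcal T_i$'s state, the $\mathcal A$-state $q$, and the finite buffer description; its transitions implement the case analysis above, and acceptance requires both $\mathcal T_i$-acceptance and that applying the $\mathcal A$-transformation of the final buffer---computable from $\ell$ and $\tau$, or from the explicit output-ahead buffer---to $q$ yields a state in $F_\mathcal A$. This gives $L(\mathcal C)=T_i(S)$, so $T_i(S)$ is regular and effectively constructible. The main technical obstacle I expect is carrying out the case analysis in full, in particular handling the prefix-to-tail transition cleanly, the mode switches between input-ahead, output-ahead and empty buffers, and verifying that the $(\ell,\tau)$-abstraction is never queried on a letter it has forgotten---which is guaranteed precisely by the $n\cdot i$-bound on remaining tail output.
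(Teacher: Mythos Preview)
Your approach is correct and rests on the same key observation as the paper---once the lag has exceeded $\gamma$, only at most $n\cdot i$ further output letters can occur---but you exploit it differently. The paper builds an \emph{NFA}: it nondeterministically guesses the split $w=w_1w_2$ and, more importantly, guesses the entire tail output $y\in\Sigma_\outp^{\leq n\cdot i}$ \emph{before} reading $w_2$; with $y$ in hand, simulating $\mathcal A$ on the canonical synchronization of $(\pi_\inp(w_2),y)$ is straightforward, and one simply verifies $y=\pi_\outp(w_2)$ along the way. Your construction instead stays deterministic and replaces that guess by the $(\ell,\tau)$ abstraction, trading nondeterminism for more intricate bookkeeping (the counter of remaining tail-output budget, and the mode-switch handling you flag). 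Both yield regularity; the paper's route is shorter to write down, while yours produces a DFA directly.

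One small correction: the transformation $\tau$ you need is \emph{not} the one induced in the input automaton $\mathcal A_D$ as defined in the paper (that automaton abstracts away interleaved output and is nondeterministic), but simply the deterministic transformation $q\mapsto\delta_\mathcal A^*(q,x)$ of $\mathcal A$ restricted to $\Sigma_\inp$-letters, since those letters appear in the canonical synchronization as the trailing $\Sigma_\inp^*$-block with no output interspersed. With that fix, and with the mode-switch invariant made explicit (e.g.\ maintain $K=n\cdot i$ minus the number of outputs read since the lag first exceeded $\gamma$, keep the first $\min(|\text{buffer}|,K)$ input letters as $\ell$, and note that once $|\text{buffer}|>K$ the buffer stays input-ahead forever), your construction goes through.
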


We are ready to prove the main theorem of this paper.

\begin{proof}[Proof sketch of Theorem~\ref{thm:regular}]
By Lemma \ref{lemma:shortregular} we know that if $S$ has a $T$-controlled uniformizer, then $S$ has a $T_k$-controlled uniformizer for a computable $k \geq 0$.
Let $T_k(S)$ be defined as in Lemma~\ref{lemma:transformregular}.

We can show that $S$ has a $T$-controlled uniformizer iff $\mathrm{dom}(\llbracket S \rrbracket)\allowbreak =\allowbreak \mathrm{dom}(\llbracket T_k(S) \rrbracket)$ and $T_k(S)$ has a subset uniformizer which is decidable by Theorem~\ref{thm:sunif}. 
\end{proof}

\section{Conclusion}\label{sec:conclusion}

In this paper we considered uniformization by subsequential transducers in which the allowed input/output behavior is specified by a regular set of synchronizations, the so-called resynchronized uniformization problem.
An overview over our results can be found in Table~\ref{tab:overview}. 
For future work we want to study other problems of this kind, e.g., study whether the resynchronized uniformization problem is decidable for a given rational relation as source language and a given `recognizable' target language in the sense that the target language is controlled by a synchronization language that synchronizes recognizable relations.


\subparagraph*{Acknowledgements.} The author would like to thank her supervisor Christof L{\"o}ding for suggesting this topic and his helpful comments and thank the anonymous reviewers of this and an earlier version of the paper for their feedback which greatly improved the presentation.



\bibliography{biblio}

\appendix

\newpage

\section*{Appendix}

This is the full version of \url{http://dx.doi.org/10.4230/LIPIcs.ICALP.2018.281}.

\section{Details of Section \ref{sec:unifproblems}}

\subsection{Uniformizations of recognizable relations}

Here we present the result stating that it is decidable whether a given recognizable relation has a uniformization by a subsequential transducer for any given synchronization parameter.

\begin{theorem}
 Given a regular source language with finite $\mathit{shift}$ and a regular target language.
 Then, the resynchronized uniformization problem is decidable.
\end{theorem}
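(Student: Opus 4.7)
The plan is to reduce the problem to the subset uniformization problem (Theorem~\ref{thm:sunif}) combined with a regular-language equality check, exploiting the fact that finite $\mathit{shift}(S)$ forces $\llbracket S \rrbracket$ to be recognizable. Throughout, $S \subseteq \Sigma_{\inp\outp}^*$ is the given source language with $\mathit{shift}(S) < \infty$ and $T \subseteq \Sigma_{\inp\outp}^*$ is the given regular target language.

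First, I would prove that $\llbracket S \rrbracket$ is effectively recognizable, i.e., can be written as a finite union $\bigcup_{i=1}^m X_i \times Y_i$ with $X_i, Y_i \subseteq \Sigma^*$ effectively regular. If $\mathit{shift}(S) \leq k$, every word of $S$ consists of at most $k+1$ maximal blocks drawn alternately from $\Sigma_\inp^*$ and $\Sigma_\outp^*$. Partitioning $S$ by its finitely many possible block patterns, each piece $S_\sigma$ lies inside a concatenation $\Sigma_{c_1}^* \cdots \Sigma_{c_\ell}^*$ with fixed $c_j \in \{\inp,\outp\}$. Given a DFA for $S_\sigma$, splitting by the state reached at each block boundary expresses $S_\sigma$ as a finite union of concatenations of regular block languages; projecting each factor via $\pi_\inp$ or $\pi_\outp$ and multiplying inside the input and output components separately yields $\llbracket S_\sigma \rrbracket$ as a finite union of products of regular languages. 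Taking the union over all $\sigma$ gives the desired recognizable representation effectively.

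Next, I would define the regular target-compatible version of $S$,
\[
T(S) = T \cap \bigcup_{i=1}^m \bigl(\pi_\inp^{-1}(X_i) \cap \pi_\outp^{-1}(Y_i)\bigr),
\]
which by construction is exactly the set of $T$-controlled synchronizations $w$ with $\llbracket w \rrbracket \in \llbracket S \rrbracket$. Since $\pi_\inp$ and $\pi_\outp$ are alphabetic morphisms and regular languages are closed under inverse morphism and intersection, $T(S)$ is effectively regular; no $\mathit{shiftlag}$-style abstraction in the spirit of Sec.~\ref{sec:regular} is needed here.

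Finally, I would establish the equivalence: $S$ has a $T$-controlled uniformizer by an sDFA iff both $\mathrm{dom}(\llbracket T(S) \rrbracket) = \mathrm{dom}(\llbracket S \rrbracket)$ and $T(S)$ has a subset uniformizer by an sDFA. The forward direction is immediate, since any $T$-controlled uniformizer $U$ of $S$ satisfies $U \subseteq T(S)$ (every $w \in U$ is $T$-controlled and synchronizes a pair of $\llbracket S \rrbracket$), hence covers the full input domain and serves as a subset uniformizer of $T(S)$. The backward direction uses that any subset uniformizer $U$ of $T(S)$ is $T$-controlled via $U \subseteq T(S) \subseteq T$ and selects pairs from $\llbracket S \rrbracket \supseteq \llbracket T(S) \rrbracket$ on the matched domain, so $\llbracket U \rrbracket \subseteq_\mathsf{u} \llbracket S \rrbracket$. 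Both right-hand conditions are then decidable: equality of two effectively regular domains is standard, and the subset uniformization test is Theorem~\ref{thm:sunif}. The one substantial step is the recognizable decomposition in paragraph one, which is precisely where the finite-shift hypothesis on $S$ is used; without it, $T(S)$ need not be regular and the reduction breaks.
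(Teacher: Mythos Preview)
Your proposal is correct and follows essentially the same strategy as the paper: construct the regular language $T(S)$ of all $T$-controlled synchronizations of pairs in $\llbracket S \rrbracket$ and reduce to a domain-equality check plus the subset uniformization problem (Theorem~\ref{thm:sunif}). The only cosmetic difference is in how $T(S)$ is built---the paper passes to the $1^*2^*$-canonical form of $S$ and uses a state-guessing NFA on $\pi_\inp(w)$ and $\pi_\outp(w)$, while you extract an explicit recognizable decomposition $\bigcup_i X_i\times Y_i$ and take inverse morphisms---but both constructions yield the same language and the same reduction.
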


Let $S \subseteq \Sigma_{\inp\outp}$ denote a regular source language with finite $\mathit{shift}$ and $T \subseteq \Sigma_{\inp\outp}$ a regular target language.
Note that the usual presentation of a regular relation as $\bigcup_{i=1}^n U_i \times V_i$, where each $U_i$ and $V_i$ are regular languages over $\Sigma_\inp$ and $\Sigma_\outp$, respectively, clearly is representable as a regular language over $\Sigma_{\inp\outp}$ with finite $\mathit{shift}$, namely as $\bigcup_{i=1}^n U_i \cdot V_i$.

In \cite{conf/stacs/FigueiraL14}, it is shown that $1^*2^*$ is an effective canonical representative of the class of regular languages with finite $\mathit{shift}$.

\begin{proof}
Let $S$ and $T$ be as above.
We show the theorem in two steps.
 
First, we effectively compute the regular language $T' = \{ w \mid w \in T \text{ and } \llbracket w \rrbracket \in \llbracket S \rrbracket\}$, that is, the language that contains every $T$-controlled word that describes a pair from $\llbracket S \rrbracket$.
 
Secondly, we show that $S$ has a $T$-controlled uniformization by an sDFA if, and only if, $\mathrm{dom}(\llbracket S \rrbracket) = \mathrm{dom}(\llbracket T' \rrbracket)$ and $T'$ has a subset uniformization by an sDFA, which is decidable by Theorem~\ref{thm:sunif}.
 
For the first part, let $\mathcal A$ be a DFA that recognizes the $1^*2^*$-controlled canonical representation of $S$.
Consider an NFA that on reading a word $w \in \Sigma_{\inp\outp}^*$ works as follows.
First, it guesses a state $q \in Q_\mathcal A$, then it simulates $\mathcal A$ on $\pi_\inp(w)$ from $q_0$ and $\mathcal A$ on $\pi_\outp(w)$ from $q$.
It accepts if $\delta_\mathcal A^*(q_0,\pi_\inp(w)) = q$ and $\delta_\mathcal A^*(q,\pi_\outp(w)) \in F_\mathcal A$.
The intersection of this language with $T$ is our desired language $T'$.
 
For the second part, assume $\mathrm{dom}(\llbracket S \rrbracket) = \mathrm{dom}(\llbracket T' \rrbracket)$ and $T'$ has a subset uniformization by an sDFA.
Since $\mathrm{dom}(\llbracket S \rrbracket) = \mathrm{dom}(\llbracket T' \rrbracket)$, every subset uniformization of $T'$ is also a $T$-controlled uniformization of $S$.

For the other direction, assume $S$ has a $T$-controlled uniformization by an sDFA, say $U$.
Obviously $\llbracket U \rrbracket \subseteq_\mathit{u} \llbracket S \rrbracket$ and $\mathrm{dom}(\llbracket S \rrbracket) = \mathrm{dom}(\llbracket U \rrbracket)$.
First, we show $\mathrm{dom}(\llbracket S \rrbracket) = \mathrm{dom}(\llbracket T' \rrbracket)$.
Proof by contradiction, assume there is some $u \in \mathrm{dom}(\llbracket S \rrbracket)\setminus \mathrm{dom}(\llbracket T' \rrbracket)$.
There exists a $T$-controlled $w \in U$ such that $\pi_\inp(w) = u$.
By construction, $w \in T'$, thus $u \in \mathrm{dom}(\llbracket T' \rrbracket)$.
Thus, $\mathrm{dom}(\llbracket S \rrbracket) = \mathrm{dom}(\llbracket T' \rrbracket) = \mathrm{dom}(\llbracket U \rrbracket)$.
Secondly, since $U \subseteq T$ and $\mathrm{dom}(\llbracket U \rrbracket) = \mathrm{dom}(\llbracket T' \rrbracket)$, it is clear that $U$ is a subset uniformization of $T' \subseteq T$.
 
\end{proof}

\subsection{Parikh-injective synchronization languages}

{
\renewcommand{\thetheorem}{\ref{thm:parikh}}
\thmparikh*
\addtocounter{theorem}{-1}
}

\begin{proof}[Proof of Proposition~\ref{thm:parikh}]
 We show that every $T$-controlled uniformization of $S$ is in fact a subset uniformization of $S$.
  
 Towards a contradiction, assume that $U$ is a $T$-controlled uniformization, but $U \not\subseteq S$.
 
 Since $U$ is $T$-controlled, $U$ is $L$-controlled.
 There is $w \in U \setminus S$ with $\llbracket w \rrbracket \in \llbracket S \rrbracket$ and $w' \in S \setminus U$ with $\llbracket w \rrbracket = \llbracket w' \rrbracket$.
 Let $w = u \otimes v$ and $w' = u' \otimes v'$.
 Since $\llbracket w \rrbracket = \llbracket w' \rrbracket$ and both $v$ and $v'$ are $L$-controlled, it follows that $\Pi_L(v) = \Pi_L(v')$.
 Assume $v \neq v'$, this is a contradiction because $L$ is Parikh-injective.
 Thus, $v = v'$ and $u \neq u'$, because $w \neq w'$.
 This is a contradiction to $\llbracket w \rrbracket = \llbracket w' \rrbracket$.
 Hence, $U \subseteq S$, i.e., $U$ is a subset uniformization. 
\end{proof}
 
\section{Details of Section \ref{sec:regular}}

\subsection{State transformation trees.}

Analogously, we define output state transformation trees, where the roles of input and output are reversed compared to input state transformation trees. 

\begin{definition}[Output state transformation tree]\label{def:outputstt}
Given $i \geq 0$, $p \in Q_\mathcal B$, $q \in Q_\mathcal A$ and an output word $y \in \Sigma_\outp^*$, the \emph{state transformation tree} $\mathrm{STT}^i(y,p,q)$ is a tree over $Q_\mathcal B \times Q_\mathcal A$ defined inductively.
 \begin{itemize}[topsep=0pt]
 \item 
For $i = 0$, the tree $\mathrm{STT}^0(y,p,q)$ is built up as follows.

\noindent Let $\mathrm{Reach}_0 \subseteq Q_\mathcal B \times Q_\mathcal A$ be the smallest set such that
$(p',q') \in \mathrm{Reach}_0$ if there is some $x \in \Sigma_\inp^*$ with $|x| \leq |y|$ such that $\delta_\mathcal A^*(q,(x,y)) = q'$ and $\delta_\mathcal B^*(p,x) = p'$.

Then $\mathrm{STT}^0(y,p,q) = (p,q)({r_1}\dots{r_n})$
for $\mathrm{Reach}_0 = \{r_1,\dots,r_n\}$.

\item For $i > 0$, the tree $\mathrm{STT}^i(y,p,q)$ is built up as follows.

\noindent Let $\mathrm{Reach}_1 \subseteq \Sigma_\outp^* \times Q_\mathcal B \times Q_\mathcal A$ be the smallest set such that $(y'',p',q') \in \mathrm{Reach}_1$ if 
\begin{itemize}
 \item $y = y'y''$ with $y''\in \Sigma_\outp^+$ for a $y' \in \Sigma_\outp^+$ such that there is an $x \in \Sigma_\inp^+$ with $|x| = |y'|$, and
 \item  $\delta_\mathcal A^*(q,(x,y')) = q'$ and $\delta_\mathcal B^*(p,x) = p'$.
\end{itemize}
For $(y'',p',q') \in \mathrm{Reach}_1$, let $\mathrm{Reach}_{(y'',p',q')} \subseteq \Sigma_\inp^* \times Q_\mathcal B \times Q_\mathcal A$ be the smallest set such that $(y'',p'',q') \in \mathrm{Reach}_{(y'',p',q)}$ if $\delta_\mathcal B^*(p',w) = p''$ for some $w \in \Sigma_\outp^+$.
Furthermore, let the tree $t_{(y'',p',q ')}^{i-1}$ be defined as $(p',q')(\mathrm{STT}^{i-1}{r_1}\dots\mathrm{STT}^{i-1}{r_n})$ for $\mathrm{Reach}_{(y'',p',q')} = \{r_1,\dots,r_n\}$.

Then the tree $\mathrm{STT}^i(y,p,q)$ is defined as 
\begingroup
\setlength{\abovedisplayskip}{.5\columnsep }
\setlength{\belowdisplayskip}{.5\columnsep }
\begin{equation*}
\mathrm{STT}^0(y,p,q) \circ (p,q)(t_{s_1}^{i-1}\dots t_{s_n}^{i-1}) 
\end{equation*}
\endgroup
for $\mathrm{Reach}_1 = \{s_1,\dots,s_n\}$.
\end{itemize}
\end{definition}

Now that we have defined output state transformation trees, we need to introduce one more concept, before we can define output profiles. 

Ultimately, given a uniformizer, our goal is to replace large segments that cause lag with (short) segments that have the same profile.
Towards defining profiles for output words it turns out that we need to store additional information compared to input profiles.
Intuitively, a difference arises because waiting a long time before output is produced (i.e., causing large input lag) means that lots of information about the input is known before output is produced;
whereas producing large output segments (i.e., causing large output lag) means that output has been produced without prior knowledge of the input.
Therefore, we introduce the concept of annotated output state transformation trees which model the possible interactions between input segments and the given output segment in more detail compared to output state transformation trees.
More specifically, for an input segment $x$, we collect vertices that can be reached by prefixes of $x$.
Below a formal definition is given and in Ex.~\ref{ex:outputSTT} an intuitive example is given.

\begin{definition}[Annotated output state transformation tree]\label{def:anntree}
 Let $i \geq 0$, $p \in Q_\mathcal B$, $q \in Q_\mathcal A$, $y \in \Sigma_\outp^*$, and let $t = (V_t,E_t,v_t,\val{t})$ denote the reduced output state transformation tree $\mathit{red}\bigl(\mathrm{STT}^i(y,p,q)\bigr)$.

 For $v \in V_t$, the \emph{annotated output state transformation tree} $\mathrm{annSTT}^i(y,p,q,v)$ is a tree over $(Q_\mathcal B \times Q_\mathcal A \times {V_t}) \cup (Q_\mathcal B \times Q_\mathcal A \times V_t \times 2^{V_t})$ defined inductively.
 \begin{itemize}[topsep=0pt]
 \item 
 For $i = 0$, the tree $\mathrm{annSTT}^0(y,p,q,v)$ is built up as follows.

 \noindent Let $\mathrm{Reach}_0 \subseteq Q_\mathcal B \times Q_\mathcal A \times V_t \times 2^{V_t}$ be the smallest set such that $(p',q',v',S) \in \mathrm{Reach}_0$ if there is some $x \in \Sigma_\inp^*$ with $|x| \leq |y|$ such that
 \begin{itemize}
  \item $\delta_\mathcal A^*(q,(x,y)) = q'$ and $\delta_\mathcal B^*(p,x) = p'$, and
  \item $x$ leads from $v$ to $v'$ w.r.t.\ $y$ and $0$, and
  \item $v'' \in S$ if there is $x'\sqsubseteq x$ such that $x'$ leads from $v$ to $v''$ w.r.t.\ $y$ and $0$.
 \end{itemize}
 Then $\mathrm{annSTT}^0(y,p,q,v) = (p,q,v)({r_1}\dots{r_n})$ for $\mathrm{Reach}_0 = \{r_1,\dots,r_n\}$.

 \item For $i > 0$, the tree $\mathrm{annSTT}^i(y,p,q,v)$ is built up as follows.

 \noindent Let $\mathrm{Reach}_1 \subseteq \Sigma_\outp^* \times Q_\mathcal B \times Q_\mathcal A \times V_t \times 2^{V_t}$ be the smallest set such that $(y'',p',q',v',S) \in \mathrm{Reach}_1$ if there is some $x \in \Sigma_\inp^+$ with  $|x| < |y|$  such that
 \begin{itemize}
  \item $y = y'y''$ with $y''\in \Sigma_\outp^+$ for $y' \in \Sigma_\outp^+$ with $|x| = |y'|$, and
  \item $\delta_\mathcal A^*(q,(x,y')) = q'$ and $\delta_\mathcal B^*(p,x) = p'$, and
  \item $x$ leads from $v$ to $v'$ w.r.t.\ $y$ and $i$, and
  \item $v'' \in S$ if there is $x'\sqsubseteq x$ such that $x'$ leads from $v$ to $v''$ w.r.t.\ $y$ and $i$.
 \end{itemize}
 For $(y'',p',q',v',S) \in \mathrm{Reach}_1$, let $\mathrm{Reach}_{(y'',p',q',v',S)} \subseteq \Sigma_\outp^* \times Q_\mathcal B \times Q_\mathcal A \times V_t$ be the smallest set such that $(y'',p'',q',v'') \in \mathrm{Reach}_{(y'',p',q',v',S)}$ if $\delta_\mathcal B^*(p',w) = p''$ and $w$ leads from $v'$ to $v''$ for some $w \in \Sigma_\outp^+$.

 \noindent Furthermore, let $t_{(y'',p',q',v',S)}^{i-1}$ be the tree
 \begingroup
 \setlength{\abovedisplayskip}{.5\columnsep }
 \setlength{\belowdisplayskip}{.5\columnsep }
 \begin{equation*}
 (p',q',v',S)(\mathrm{annSTT}^{i-1}{r_1}\dots\mathrm{annSTT}^{i-1}{r_n})
 \end{equation*}
 \endgroup
 for $\mathrm{Reach}_{(y'',p',q',v',S)} = \{r_1,\dots,r_n\}$.
 
 Finally, the tree $\mathrm{annSTT}^i(y,p,q,v)$ is defined as
 \begingroup
 \setlength{\abovedisplayskip}{.5\columnsep }
 \setlength{\belowdisplayskip}{.5\columnsep }
 \begin{equation*}
  \mathrm{annSTT}^0(y,p,q,v) \circ (p,q,v)(t_{s_1}^{i-1}\dots t_{s_n}^{i-1})
 \end{equation*}
 \endgroup
 for $\mathrm{Reach}_1 = \{s_1,\dots,s_n\}$.
 \end{itemize}
\end{definition}

Next, we give an example to illustrate the difference between output STTs and annotated output STTs.

\begin{example}\label{ex:outputSTT}
 Given an alphabet $\Sigma_{\inp\outp}$ with $\Sigma_\inp = \{a,b\}$ and $\Sigma_\outp = \{c\}$, an automatic relation $S_1$ over $\Sigma_{\inp\outp}$ is given by a DFA $\mathcal A_1$ depicted in Fig.~\ref{fig:dfaA}, and an automatic relation $T_1$ over $\Sigma_{\inp\outp}$  is given by a DFA $\mathcal B_1$ depicted in Fig.~\ref{fig:dfaB}.
 Note that, $S_1$ is $(12)^*(1^*+2^*)$-controlled, i.e., canonical, hence, the notion of state transformation tree is meaningful w.r.t.\ $\mathcal A_1$ and $\mathcal B_1$.
 
 Consider the output word $cc \in \Sigma_\outp^*$, the reduced variant of the output state transformation tree $\mathrm{STT}^0(cc,p_0,q_0)$ is depicted in Fig.~\ref{fig:stt}. Additionally, its edges are labeled with the respective associated words. 
 Also, its vertices are named, so that they can be referred to in the annotated output state transformation tree $\mathrm{annSTT}^0(cc,p_0,q_0)$ depicted in Fig.~\ref{fig:annstt}.
 
 Compared to $\mathit{red}(\mathrm{STT}^0(cc,p_0,q_0))$ we can see that $v_3$ was duplicated with annotation $(v_3,\{v_1,v_3\})$ and $(v_3,\{v_2,v_3\})$, respectively.
 This has happened because both $ab$ and $ba$ lead from $v_0$ to $v_3$, but $a$ (prefix of $ab$) leads from $v_0$ to $v_1$ and $b$ (prefix of $ba$) leads from $v_0$ to $v_2$.
\end{example}

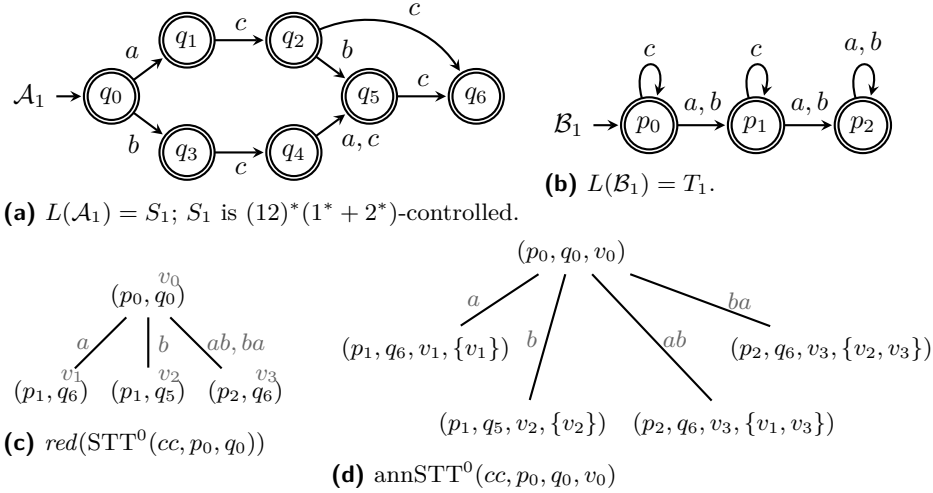
\begin{figure}
\vskip -0.5cm
\centering

 \begin{subfigure}{0.5\textwidth}
  \begin{tikzpicture}[scale=0.8,thick]
    \tikzstyle{every state}+=[inner sep=3pt, minimum size=3pt];
    \node[state, initial, accepting, initial text=$\mathcal A_1$] (0) {$q_0$};
    \node[state, accepting, above right of=0, yshift=-0.25cm] (1) {$q_1$};
    \node[state, accepting, right of=1]       (2) {$q_2$};
    \node[state, accepting, below right of=0, yshift=0.25cm] (3) {$q_3$};
    \node[state, accepting, right of=3]       (4) {$q_4$};
    \node[state, accepting, below right of=2, yshift=0.25cm] (5) {$q_5$};
    \node[state, accepting, right of=5]       (6) {$q_6$};
    \draw[->] (0) edge                node        {$a$} (1);
    \draw[->] (0) edge                node[swap]  {$b$} (3);
    \draw[->] (1) edge                node        {$c$} (2);
    \draw[->] (2) edge                node        {$b$} (5);
    \draw[->] (2) edge[bend left=40]  node        {$c$} (6); 
    \draw[->] (3) edge                node[swap]  {$c$} (4);
    \draw[->] (4) edge                node[swap]  {$a,c$} (5);
    \draw[->] (5) edge                node        {$c$} (6);
   \end{tikzpicture}
   
  \caption{$L(\mathcal A_1) = S_1$; $S_1$ is $(12)^*(1^*+2^*)$-controlled.\label{fig:dfaA}} 
  \end{subfigure}
  \begin{subfigure}{0.3\textwidth}
  \begin{tikzpicture}[thick,scale=0.8]
    \tikzstyle{every state}+=[inner sep=3pt, minimum size=3pt];

    \node[state, initial, accepting, initial text=$\mathcal B_1$] (0) {$p_0$};
    \node[state, accepting, right of=0] (1) {$p_1$};
    \node[state, accepting, right of=1] (2) {$p_2$};
    \draw[->] (0) edge[loop above]    node    {$c$} ();
    \draw[->] (0) edge                node    {$a,b$} (1);
    \draw[->] (1) edge[loop above]    node    {$c$} ();
    \draw[->] (1) edge                node    {$a,b$} (2);
    \draw[->] (2) edge[loop above]    node    {$a,b$} ();
   \end{tikzpicture}
  \caption{$L(\mathcal B_1) = T_1$.\label{fig:dfaB}} 
  \end{subfigure}

  \begin{subfigure}{0.3\textwidth}
   \begin{tikzpicture}[thick,scale=0.8,baseline=(current bounding box.base)]
      \tikzstyle{level 1}=[sibling distance=16mm]
      \path[level distance=16mm] node (root){\small$(p_0,q_0)$}
	child{
	  node(0){\small$(p_1,q_6)$}	   
	}
	child{
	  node(1){\small$(p_1,q_5)$}
	}
	child{
	  node(2){\small$(p_2,q_6)$}
	}
      ;
      
      \path (root) -- coordinate[midway] (r0) (0);
      \path (root) -- coordinate[midway] (r1) (1);
      \path (root) -- coordinate[midway] (r2) (2);
      
      \node [dark-gray,left] at (r0) {\small$a$};
      \node [dark-gray,right] at (r1) {\small$b$};
      \node [dark-gray,right] at (r2) {\small$ab,ba$};
      
      \node [dark-gray,above right] at (root) {\small$v_0$};
      \node [dark-gray,above right] at (0) {\small$v_1$};
      \node [dark-gray,above right] at (1) {\small$v_2$};
      \node [dark-gray,above right] at (2) {\small$v_3$};
     \end{tikzpicture}
     
  \caption{$\mathit{red}(\mathrm{STT}^0(cc,p_0,q_0))$\label{fig:stt}} 
  \end{subfigure}
  \begin{subfigure}{0.5\textwidth}
    \begin{tikzpicture}[thick,scale=0.8,baseline=(current bounding box.base)]
      \tikzstyle{level 1}=[sibling distance=16mm]
      \path[level distance=16mm] node (root){\small$(p_0,q_0,v_0)$}
	child{
	  node(0){\small$(p_1,q_6,v_1,\{v_1\})$}	   
	}
	child{
	  node[yshift=-1cm](1){\small$(p_1,q_5,v_2,\{v_2\})$}
	}
	child{
	  node[yshift=-1cm,xshift=1.5cm](2){\small$(p_2,q_6,v_3,\{v_1,v_3\})$}
	}
	child{
	  node[xshift=1.5cm](3){\small$(p_2,q_6,v_3,\{v_2,v_3\})$}
	}
      ;
      
      \path (root) -- coordinate[midway] (r0) (0);
      \path (root) -- coordinate[midway] (r1) (1);
      \path (root) -- coordinate[midway] (r2) (2);
      \path (root) -- coordinate[midway] (r3) (3);
      
      \node [dark-gray,left] at (r0) {\small$a\ $};
      \node [dark-gray,left] at (r1) {\small$b$};
      \node [dark-gray,right] at (r2) {\small$ab$};
      \node [dark-gray,right] at (r3) {\small$\ \ ba$};
     \end{tikzpicture}
     
  \caption{$\mathrm{annSTT}^0(cc,p_0,q_0,v_0)$\label{fig:annstt}} 
  \end{subfigure}

 \caption{Let $\Sigma_\inp = \{a,b\}$ and $\Sigma_\outp = \{c\}$. 
   Reduced variant of the output state transformation tree $\mathrm{STT}^0(cc,p_0,q_0)$ and the corresponding annotated tree $\mathrm{annSTT}^0(cc,p_0,q_0,v_0)$ both w.r.t.\ $\mathcal A_1$ and $\mathcal B_1$.
   The edges are labeled with its respective associated words. 
   See Ex.~\ref{ex:outputSTT} for a comparison of the trees. 
 }
 \label{fig:outputSTT}
\vskip -0.5cm
\end{figure}

We are ready to define profiles based on state transformation trees, but beforehand we introduce some terminology to speak more conveniently about state transformation trees.

We now formally define the concept of associated words.
Examples can be found in Fig.\ref{subfig:tree}, Fig.~\ref{fig:stt}, and Fig.~\ref{fig:annstt}.

\begin{definition}[Associated words]\label{def:associatedwords}
 Let $t = (V_t,E_t,v_t,\val{t})$  be an input STT.
 
 Given $v,v' \in V_t$ such that $(v,v') \in E_t$ and $v$ is on an even level, let $\val{t}(v) = (p,q)$ and $\val{t}(v') = (p',q')$.
 We say that \emph{$y \in \Sigma_\outp^*$ leads from $v$ to $v'$} w.r.t.\ $x$ and $i$ if $t|_v = \mathrm{STT}^i(x,p,q)$ and there is $x',x'' \in \Sigma_\inp^*$ such that $x=x'x''$, and $\delta_\mathcal A^*(q,(x',y)) = q'$, and $\delta_\mathcal B^*(p,y) = p'$, and $\{ t|_{v''} \mid v'' \in \mathit{children}_t(v')\} = \{ \mathrm{STT}^{i-1}(x'',p'',q') \mid \delta_\mathcal B^*(p',w) = p'' \text{ for some } w \in \Sigma_\inp^+\}$.
 
 Given  $v',v'' \in V_t$ such that $(v',v'') \in E_t$ and $v'$ is on an odd level, let $\val{t}(v') = (p',q')$ and $\val{t}(v'') = (p'',q')$.
 We say that \emph{$w \in \Sigma_\inp^+$ leads from $v'$ to $v''$} if $\delta_\mathcal B^*(p',w) = p''$.

 Analogously, we define these properties for output and annotated output STTs.
\end{definition}

For convenience, we introduce the following definition.

\begin{definition}[ann]
  Let $t_{\mathit{ann}}$ be an annotated output state transformation tree based on the reduced state transformation tree $t$.
  We define a function $\mathit{ann}\colon V_{t_{\mathit{ann}}} \to V_t$ with $\mathit{ann}(v) = u$ if the third component of $v$s label is $u$.
\end{definition}

We state some simple observations about annotated output state transformation trees used in the upcoming proofs.

\begin{lemma}\label{lemma:basic}
 Let $t_{\mathit{ann}}$ be an annotated output state transformation tree based on the reduced state transformation tree $t$.
 \begin{enumerate}
  \item If $x \in \Sigma_\inp^*$ leads from $v$ to $v'$ w.r.t.\ $z \in \Sigma_\outp^*$ and $i \geq 0$ in $t_{\mathit{ann}}$, then $x$ leads from $\mathit{ann}(v)$ to $\mathit{ann}(v')$ w.r.t.\ $z$ and $i$ in $t$.
  \item If $y \in \Sigma_\outp^*$ leads from $v$ to $v'$ in $t_{\mathit{ann}}$, then $y$ leads from $\mathit{ann}(v)$ to $\mathit{ann}(v')$ in $t$.
  \item Given $v \in V_{t_{\mathit{ann}}}$, if the first two components of its label are $(p,q)$, then $\mathit{ann}(v) \in V_t$ is labeled $(p,q)$.
 \end{enumerate}
\end{lemma}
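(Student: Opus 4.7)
My plan is to prove all three items by simultaneous structural induction on the construction of $t_{\mathit{ann}} = \mathrm{annSTT}^i(y,p,q,v)$ as given by Def.~\ref{def:anntree}. The three claims are really bookkeeping assertions: each step of the annotation construction reuses a clause that already lives in the underlying reduced output STT $t$, and the $\mathit{ann}$ map just reads off the third component of each label, which is by construction a vertex of $t$.

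First I would dispatch item 3. At the root of $t_{\mathit{ann}}$ the label is $(p,q,v)$ and, by the hypothesis that $t_{\mathit{ann}}$ is based on $t = \mathit{red}(\mathrm{STT}^i(y,p,q))$, the root $v$ of $t$ carries label $(p,q)$. For the induction step, consider a newly introduced child with label $(p',q',v',S)$ (or $(p',q',v')$ at the odd level) produced by $\mathrm{Reach}_0$, $\mathrm{Reach}_1$, or $\mathrm{Reach}_{(y'',p',q',v',S)}$. The defining clauses in each case require that some $x \in \Sigma_\inp^*$ (respectively $w \in \Sigma_\outp^+$) leads from $\mathit{ann}$(parent) to $v'$ w.r.t.\ $y$ and $i$ in $t$. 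By Def.~\ref{def:associatedwords} this forces $v'$ to carry label $(p',q')$ in $t$, so item 3 is preserved.

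Items 1 and 2 then follow by unwinding the analogue of Def.~\ref{def:associatedwords} for annotated output STTs. For item 1, if $x \in \Sigma_\inp^*$ leads from $v$ to $v'$ in $t_{\mathit{ann}}$ w.r.t.\ $z$ and $i$, then the edge $(v,v')$ at an even level was created either via $\mathrm{Reach}_0$ (for $i=0$) or via $\mathrm{Reach}_1$ (for $i>0$). The very clause that certified this creation is, by inspection of Def.~\ref{def:anntree}, exactly ``$x$ leads from $\mathit{ann}(v)$ to $\mathit{ann}(v')$ w.r.t.\ $z$ and $i$'' in $t$, which is the conclusion. For item 2, an odd-level edge $(v,v')$ in $t_{\mathit{ann}}$ is introduced through $\mathrm{Reach}_{(y'',p',q',v'_0,S)}$, whose defining clause is precisely that some $w \in \Sigma_\outp^+$ satisfies $\delta_\mathcal B^*(p',w) = p''$ and $w$ leads from $\mathit{ann}(v)$ to $\mathit{ann}(v')$ in $t$; this gives the $y$ stated in item 2 (recalling $\Sigma_\outp^+ \subseteq \Sigma_\outp^*$).

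I do not anticipate any technical obstacle — the lemma is a statement about consistent naming between $t_{\mathit{ann}}$ and $t$, not about any nontrivial combinatorics. The one point worth flagging, since it is what makes annotation nontrivial in the first place (cf.\ Ex.~\ref{ex:outputSTT} where $v_3$ splits into two copies distinguished by the $S$-component), is that distinct nodes of $t_{\mathit{ann}}$ can share the same $\mathit{ann}$-image. The three items are robust to this phenomenon because $\mathit{ann}$ collapses such copies and the ``leads from'' relation in $t$ depends only on the source and target vertices of $t$, not on the annotation $S$ that distinguishes annotated copies in $t_{\mathit{ann}}$.
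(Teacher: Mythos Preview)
The paper does not supply a proof of this lemma at all; it is introduced with the phrase ``We state some simple observations about annotated output state transformation trees used in the upcoming proofs'' and is left without further justification. Your structural-induction unwinding of Def.~\ref{def:anntree} and Def.~\ref{def:associatedwords} is exactly the verification one would write to substantiate those observations, and it is correct. In particular, your closing remark---that $\mathit{ann}$ may collapse several annotated nodes onto the same node of $t$ while the ``leads from'' relation in $t$ is insensitive to the extra $S$-component---is the only point where something could conceivably go wrong, and you handle it properly.
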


\subsection{Profiles.}

We previously defined input profiles, now we define output profiles.

\begin{definition}[Output profile]
Given $y \in \Sigma_\outp^*$, we define its \emph{profile} $P_y$ as $(\tau_y,\mathrm{annSTT}_y^{\lceil n/2 \rceil})$, where $\mathrm{annSTT}_y^{\lceil n/2 \rceil} =$
 \begingroup
 \setlength{\abovedisplayskip}{.5\columnsep }
 \setlength{\belowdisplayskip}{.5\columnsep }
 \begin{equation*}
 \bigcup_{(p,q) \in Q_\mathcal B \times Q_\mathcal A}\!\!\! \{\mathit{red}\bigl(\mathrm{annSTT}^{\lceil n/2 \rceil}(y,p,q,v)\bigr) \mid v = \mathit{root}\bigl({\mathit{red}\bigl(\mathrm{STT}^{\lceil n/2 \rceil}(y,p,q)\bigr)}\bigr)\}.
 \end{equation*}
 \endgroup 
\end{definition}

We prove some properties of profiles.

{
\renewcommand{\thetheorem}{\ref{lemma:monoid}}
\lemmamonoid*
\addtocounter{theorem}{-1}
}

\begin{proof}[Proof of Lemma~\ref{lemma:monoid}]
 Given $x_1, x_2 \in \Sigma_\inp^+$, we show that the profile $P_{x_1x_2}$ of $x_1x_2 \in \Sigma_\inp^*$ can be computed from $P_{x_1}$ and $P_{x_2}$.
 
 The state transformation function $\tau_{x_1x_2}$ is defined as concatenation of the functions $\tau_{x_1}$ and $\tau_{x_2}$, i.e., $\tau_{x_1x_2}(p) = \tau_{x_2}(\tau_{x_1}(p))$.
 
 Recall Fig.~\ref{fig:STT} for an easier understanding of the following.
 
 Let $m = \lceil n/2 \rceil$, in order to compute the set $\mathrm{STT}_{x_1x_2}^{m}$ from $\mathrm{STT}_{x_1}^{m}$ and $\mathrm{STT}_{x_2}^{m}$, we need to make an observation first.
 For any $x \in \Sigma_\inp^*$, $p \in Q_\mathcal B$, $q \in Q_\mathcal A$, and $i \leq m$, the tree $\mathrm{STT}^i(x,p,q)$ can be obtained from the tree $\mathrm{STT}^m(x,p,q)$ by removing all non-trivial subtrees rooted at a vertex with height $2i+1$.
 Here, non-trivial is used to describe subtrees with more than one vertex, meaning leaves at height $2i+1$ are not removed.
 The same observation holds for its reduced variant, in the following we mean by tree always its reduced variant.
 
 For any $p \in Q_\mathcal B$ and $q \in Q_\mathcal A$, the tree $\mathrm{STT}^m(x_1x_2,p,q)$ can be obtained from the tree $\mathrm{STT}^m(x_1,p,q) = (V,E,v_0,\val{})$ by performing the following action for each pair $(u,v) \in E$ such that $v$ is a leaf.
 Let $v$ be at height $2i+1$ for some $i \geq 0$, note that leaves only occur at odd heights, let $\val{}(u) = (p_i',q_i)$ and $\val{}(v) = (p_{i+1},q_{i+1})$, and let $j$ be such that $m = (i+1)+j$.
 If $i$ is $m$, then we remove $v$, because this indicates that already $m+1$ output segments were used to consume $x_1$, however, at most $m+1$ output segments may be used to consume $x_1x_2$.
 Otherwise, we add new children to $v$.
 For all $p_{i+1}' \in Q_\mathcal B$ such that there is $w \in \Sigma_\inp^+$ with $\delta_\mathcal B^*(p_{i+1},w) = p_{i+1}'$ we add $\mathrm{STT}^j(x_2,p_{i+1}',q_{i+1})$ as a subtree to $v$.
 In any case, we add new children to $u$.
 Consider the tree $\mathrm{STT}^{j+1}(x_2,p_{i+1},q_{i+1})$, let it be of the form $(p_{i+1},q_{i+1})(t_1\dots t_k)$, then we add $t_1, \dots, t_k$ as subtrees to $u$.
 The meaning of this operation is to extend the output segment that leads from $u$ to $v$ beyond consuming (the remainder of) $x_1$ and also consuming parts of $x_2$.
 
 Hence, we are able to define a natural concatenation operation between input profiles. 
 Given $x_1, x_2 \in \Sigma_\inp^*$, let $P_{x_1}P_{x_2} = P_{x_1x_2}$.
 Thus, the set of input profiles is equipped with a concatenation operation and a neutral element, that is, the profile of the empty word, i.e., the set of input profiles is a monoid with concatenation.
 
 Given $y_1, y_2 \in \Sigma_\outp^+$, the profile $P_{y_1y_2}$ of $y_1y_2$ can be computed from $P_{y_1}$ and $P_{y_2}$ in the same way as described above for input profiles.
 This allows us to define a concatenation operation for output profiles as for input profiles, consequently, the set of output profiles is a monoid with concatenation. 
\end{proof}

{
\renewcommand{\thetheorem}{\ref{lemma:ramsey}}
\lemmaramsey*
\addtocounter{theorem}{-1}
}

\begin{proof}[Proof of Lemma~\ref{lemma:ramsey}]
 Ramsey's Theorem yields that for any number of colors $c$ and any number $r$, there exists a number $K \in \mathbbm{N}$ such that if the edges of a complete graph with at least $K$ vertices are colored with $c$ colors, then the graph must contain a complete subgraph with $r$ vertices such that all edges have the same color, see e.g.~\cite{diestel2000graduate}.
 
 Let $x \in \Sigma_\inp^*$ with the factorization $x = x_1x_2\dots x_n$, with $x_1,\dots,x_n \in 1 \times \Sigma$.
 Consider the complete graph $G = (V,E,\mathit{col})$ with edge-coloring $\mathit{col}: E \rightarrow \mathit{Cols}$, where $V := \{1,\dots,n\}$, $E := V \times V$, $\mathit{Cols}$ is the finite set of profiles and $\mathit{col}(e) := P_{x[i,k]}$ if $e = (i,k)$ for all $e \in E$.
 If there exist $i < j < k \leq n$ such that the edges $(i,j)$, $(j,k)$ and $(i,k)$ have the same color, i.e., the respective profiles are the same, then $x$ has a factorization that contains a non-empty idempotent factor.
 
 As a consequence of Ramsey's Theorem, if $|x|$ is equal or larger than the Ramsey number $R(3,|\mathit{Cols}|)$, then $x$ contains a non-empty idempotent factor. 
\end{proof}

\subsection{Proof of Theorem~\ref{thm:regular}.}

Recall, the proof of Theorem~\ref{thm:regular} is split in two parts.

\subsection*{Part I}

The goal is to show that if $S$ has a $T$-controlled uniformizer, then $S$ has a $T_k$-controlled uniformizer for a computable $k$; this is the statement of Lemma~\ref{lemma:shortregular}.

We introduce the following terminology used in the proofs of Lemmata~\ref{lemma:longeroutput}~and~\ref{lemma:shortregular}.

\begin{definition}
We say that \emph{$y$ fully traverses $x$ in $\mathcal A_q$} if $|y| \leq |x|$ and $\delta_\mathcal A^*(q,(x,y))$ is not a sink state, respectively, we say that \emph{$x$ fully traverses $y$ in $\mathcal A_q$} if $|x| \leq |y|$ and $\delta_\mathcal A^*(q,(x,y))$ is not a sink state.
Situations where $x$ and $y$ are of different length and $\delta_\mathcal A^*(q,(x,y))$ does not lead to a sink state can occur when $x \in \mathrm{dom}(L(\mathcal A_q))$.
\end{definition}

Before we can prove Lemma~\ref{lemma:shortregular}., we need to prove two auxiliary lemmata, namely Lemmata~\ref{lemma:longeroutput}~and~\ref{lemma:pumping}.

First, we prove Lemma~\ref{lemma:longeroutput} stating that there exists a bound $b$ such that it suffices to consider uniformizer where each output block increases the amount that the output sequence is ahead by at most $b$.

\begin{restatable}{lemma}{lemmalongeroutput}\label{lemma:longeroutput}
 There is a computable $b \geq 0$ such that if $S$ has a $T$-controlled uniformization by an sDFA, then $S$ has a $T$-controlled uniformization $U$ by an sDFA which satisfies the following property for each $w \in U$.

 For each $i < j$ such that $w[i]$ and $w[j]$ are consecutive shifts and $w[i+1] \in \Sigma_\outp$ it holds that
 $|\pi_\outp(w[1,i])| - |\pi_\inp(w[1,i])| \leq \mathit{max}\{|\pi_\outp(w[1,j])| - |\pi_\inp(w[1,j])|,0\} + b$.
\end{restatable}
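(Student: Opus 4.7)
\textbf{Plan for Lemma~\ref{lemma:longeroutput}.}
The plan is to construct a uniformizer with a computably bounded state space and then leverage determinism inside output blocks to bound their length. I would reformulate the existence of a $T$-controlled sDFA uniformizer of $S$ as a two-player safety game $\mathcal G$ on a finite arena, in the style of the sequential $\mathbbm I$-uniformization game underlying Theorem~\ref{thm:sunif}. Positions of $\mathcal G$ record the current state of $\mathcal A$ (along the canonical synchronization of the pair produced so far), the current state of $\mathcal B$ (along the synchronization being constructed), an abstraction of any currently buffered input or output by its input/output profile (as introduced in Sec.~\ref{sec:regular}), and the current mode ($\mathsf{In}$ or $\mathsf{Out}$). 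Spoiler chooses input letters at input positions, Duplicator chooses output letters and mode switches, and the safety/reachability objective is that the produced synchronization stays in $T$ and ultimately corresponds to a pair in $S$. All components range over computable finite sets---profiles are finite by construction---so the arena size $N$ is computable.

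By the standard correspondence, $S$ has a $T$-controlled sDFA uniformizer iff Duplicator wins $\mathcal G$ from the initial position; existence of an arbitrary $T$-controlled sDFA uniformizer gives such a winning strategy, and positional determinacy of finite safety games yields a positional winning strategy, which in turn induces a $T$-controlled sDFA uniformizer $\mathcal U$ whose state space is bounded by $N$. It remains to bound output block lengths in $\mathcal U$: consider an output block at positions $i+1,\dots,j$ of an accepting run on some $w \in U = L(\mathcal U)$. If two distinct positions within the block shared the same state $q \in Q^\outp$, then since sDFA output states have at most one outgoing transition, the sequence of states and outputs after $q$ would be deterministic, trapping the run in an infinite output loop and contradicting that the block terminates at the shift $j$. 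Hence all $Q^\outp$-states visited within the block are pairwise distinct, so $j - i \leq |Q^\outp| \leq N$. Setting $b := N$ and using that the input count does not change between positions $i$ and $j$, this block-length bound yields the claimed inequality (the $\max\{\cdot,0\}$ on the right-hand side accommodates the case that the lag at the start of the block is already negative).

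The main obstacle is obtaining a finite game arena despite $T$ permitting unbounded lag---only $\mathit{shiftlag}(T)$ is bounded by Asm.~\ref{asm:shiftlag}, not $\mathit{lag}(T)$, so the accumulated input or output buffer can in principle grow without bound inside a block of $(\Sigma_\inp^* + \Sigma_\outp^*)^n$. This is precisely what the profile abstraction of Sec.~\ref{sec:regular} is designed to handle: two buffers with the same input (resp.\ output) profile induce the same family of future state transformations in $\mathcal A$ and $\mathcal B$, so quotienting the buffer by profile equivalence preserves the value of the game while keeping the arena finite. A secondary subtlety is maintaining $\mathcal A$'s state under the canonical-synchronization convention (the state of $\mathcal A$ is well-defined only after the canonical rewriting of the pair read so far), which is already taken care of by the notation $\delta^*_\mathcal A(q,(x,y))$ fixed after Asm.~\ref{asm:shiftlag}.
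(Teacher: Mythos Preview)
Your plan has a genuine gap in the game construction. You propose positions that record the buffer only via its input/output profile while Spoiler and Duplicator play individual letters; but Duplicator's output moves do not factor through this abstraction. When Duplicator outputs a letter $b$ with input buffered, the canonical-synchronization simulation of $\mathcal A$ must consume the pair $(a,b)$ where $a$ is the \emph{next} buffered input letter, and the profile does not determine $a$: two buffers $x,x'$ with $P_x = P_{x'}$ may start with different letters, yielding different $\mathcal A$-states after the same output move. So the transition relation on abstracted positions is ill-defined letter by letter. Your sentence ``quotienting the buffer by profile equivalence preserves the value of the game'' is precisely the non-obvious claim; establishing it is essentially the content of this lemma (and of Lemma~\ref{lemma:shortregular}). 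Even if you coarsen the moves to whole blocks (Duplicator picking edges of the input STT), a positional winning strategy then prescribes only an abstract STT edge, and instantiating it by a concrete output word still needs the actual buffer content --- so the induced sDFA is not bounded by the arena size, and your final pigeonhole argument on $\Qout$ cannot be applied. The pigeonhole observation itself is correct, but it presupposes a bounded-state sDFA, which is exactly what you have not obtained.

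The paper proceeds by direct surgery rather than a game reduction. Starting from an arbitrary $T$-controlled uniformizer, it locates the first output block whose overshoot $y$ (the suffix of the block that lies ahead of the input) is too long, and replaces $y$ by a shortest representative $z$ with $P_z = P_y$. After this replacement the subsequent input segments consume $z$ rather than $y$, so the original uniformizer's future outputs cannot be replayed verbatim; the \emph{annotated} output STTs (Def.~\ref{def:anntree}) are introduced precisely to record, along each input block, which vertices of the underlying STT its prefixes visit, and this is the extra information needed to transfer a valid continuation from the $y$-run to the $z$-run. Since a violation forces the lag to exceed $\gamma$, there are at most $\lceil n/2\rceil$ violating blocks, and iterating the construction that many times gives the desired uniformizer with $b = \max\{\beta,\gamma+1\}$, where $\beta$ bounds the length of output-profile representatives.
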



\begin{proof}[Proof of Lemma~\ref{lemma:longeroutput}]
 Let $\beta$ be the smallest bound on the length of representatives of output profiles, then we chose $b$ to be $\mathit{max}\{\beta,\gamma+1\}$.

 Assume $U$ is a $T$-controlled uniformization given by a sequential DFA $\mathcal U$ that does not satisfy the property stated in the lemma.
 Recall, $T \subseteq T_{\leq \gamma} \cdot (\Sigma_\inp^*+\Sigma_\outp^*)^n$.
 If the bound on the length of output blocks stated in the lemma is violated, then we are in a situation where the lag has exceeded $\gamma$, thus it can be violated at most $\lceil n/2 \rceil$ times, because after the lag has exceeded $\gamma$ there are at most $n$ shifts, i.e., at most $\lceil n/2 \rceil$ output blocks.
 Let $m = \lceil n/2 \rceil$.
 
 We construct a $T$-controlled uniformization $U'$ recognized by a sequential DFA $\mathcal U'$ based on $\mathcal U$ that repairs for every input word the first violation of the output block length. Applying the construction presented below at most $m$ times yields a uniformization according to the statement of the lemma.
 
 The computation of $\mathcal U'$ differs from $\mathcal U$ from the point on that the following situation occurs: 
 Consider an arbitrary $w \in \mathit{Pref}(U)$ of length $\ell_2$, such that there is a position $\ell_1 < \ell_2$ such that $w[\ell_1]$ and $w[\ell_2]$ are consecutive shifts, $w[\ell_1+1] \in \Sigma_\outp$ and it holds that
 $|\pi_\outp(w[1,\ell_1])| - |\pi_\inp(w[1,\ell_1])| > \mathit{max}\{|\pi_\outp(w[1,\ell_2])| - |\pi_\inp(w[1,\ell_2])|,0\} + b$, i.e., the output block $w[\ell_1+1,\ell_2]$ has increased the lag caused by output symbols being ahead by more than $b$.
 Let $\ell$ be the smallest position $\ell_1 < \ell \leq \ell_2$ such that $|\pi_\outp(w[1,\ell])| - |\pi_\inp(w[1,\ell])| > 0$, that is, $\ell$ is the position such that $w[\ell,\ell_2]$ is the greatest part of the block $w[\ell_1+1,\ell_2]$ that is ahead of the input.
 Note that, the violation is caused because $w[\ell,\ell_2] > b$.

 We prove that we can replace the output $w[\ell,\ell_2]$ by some output of length at most $b$ chosen as follows.
 Let $w[\ell,\ell_2]$ be $y \in \Sigma_\outp^+$, consider the profile $P_y$ and let $z$ be a representative of $P_y$.
 We show that we can replace $y$ by $z$.
 Let $w$ have the factorization $xy$.
 Since $P_y = P_z$, we have $\tau_y = \tau_z$, and $\mathrm{annSTT}^m_y = \mathrm{annSTT}^m_z$.
 Let $\delta_\mathcal A^*(q_\mathcal A^0,(\pi_\inp(x),\pi_\outp(x))) = q$, $\delta_\mathcal B^*(q_\mathcal B^0,(x) = p$, and let 
 \begin{itemize}
  \item $t_y$ denote $\mathit{red}(\mathrm{STT}^{\lceil n/2 \rceil}(y,p,q))$, and
  \item $t_z$ denote $\mathit{red}(\mathrm{STT}^{\lceil n/2 \rceil}(z,p,q))$, and
  \item $t^{\mathit{ann}}_y$ denote $\mathrm{STT}^{\lceil n/2 \rceil}(z,p,q,\ro{t_y})$, and
  \item $t^{\mathit{ann}}_z$ denote $\mathrm{STT}^{\lceil n/2 \rceil}(z,p,q,\ro{t_z})$.
  \end{itemize}
 Clearly, $t_y = t_z$ and $t^{\mathit{ann}}_y = t^{\mathit{ann}}_z$.
 
 Assume we already have defined $\mathcal U'$ up to the point where the violation as stated above occurs, and until then, $\mathcal U$ and $\mathcal U'$ have worked exactly the same way.
 We show that $\mathcal U'$ can continue the computation successfully after replacing $y$ with $z$ by showing that there exists a sequentially computable run satisfying the following properties.
 
 Let $\delta_\mathcal U^*(q_\mathcal U^0,xy) = s$, and $\delta_{\mathcal U'}^*(q_{\mathcal U'}^0,xz) = r$, for the next input symbols until $z$ is fully traversed in $\mathcal A_q$, we inductively (on the number of input blocks) define the computation of $\mathcal U'_r$ satisfying the following properties:
 
 For $w_1,\dots, w_i \in \Sigma_\inp^+$ with $|w_1\dots w_i| < |z|$ and $o_1,\dots, o_i \in \Sigma_\outp^+$ such that $w_1o_1\dots w_io_i \in \mathit{Pref}(U'_{r})$, there exists a path $v_0'v_1v_1'\dots v_iv_i'$ in $t^{\mathit{ann}}_z$ with $v_0' = \ro{t^{\mathit{ann}}_z}$ such that 
 \begin{enumerate}
  \item $w_j$ leads from $v_{j-1}'$ to $v_j$ in $t^{\mathit{ann}}_z$ for all $1 \leq j \leq i$, and
  \item $o_j$ leads from $v_j$ to $v_{j}'$ in $t^{\mathit{ann}}_z$ for all $1 \leq j \leq i$, and
  \item there exist $\bar w_1,\dots \bar w_i \in \Sigma_\inp^+$ and a path $u_0'u_1u_1'\dots u_iu_i'$ in $t_y$ with $u_0' = \ro{t_y}$ such that
  \begin{enumerate}
   \item $\mathit{ann}(v_j) = u_j$ and $\mathit{ann}(v_j') = u_j'$ for all $0 \leq j \leq i$, and 
   \item $\bar w_j$ leads from $u_{j-1}'$ to $u_j$ in $t_y$ for all $1 \leq j \leq i$, and
   \item $o_j$ leads from $u_j$ to $u_{j}'$  in $t_y$ for all $1 \leq j \leq i$, and
   \item $\bar w_1o_1\dots \bar w_io_i \in \mathit{Pref}(U_{s})$.
  \end{enumerate}
 \end{enumerate}
 \noindent Additionally, for $w_{i+1} \in \Sigma_\inp^*$ with $|w_1\dots w_{i+1}| \leq |z|$ such that $w_1o_1\dots w_io_iw_{i+1} \in \mathit{Pref}(U'_{r})$ and $w_1\dots w_{i+1}$ fully traverses $z$ in $\mathcal A_q$, there exists a leaf node $v_{i+1}$ in $t^{\mathit{ann}}_z$ with $(v_{i}',v_{i+1}) \in E_{t^{\mathit{ann}}_z}$ such that 
 \begin{enumerate}
  \setcounter{enumi}{3}
  \item $w_{i+1}$ leads from $v_{i}'$ to $v_{i+1}$ in $t^{\mathit{ann}}_z$, and
  \item there exists $\bar w_{i+1} \in \Sigma_\inp^*$ and a leaf node $u_{i+1}$ in $t_y$ with $(u_{i}',u_{i+1}) \in E_{t_y}$ such that
  \begin{enumerate}
   \item $\mathit{ann}(v_{i+1}) = u_{i+1}$
   \item $\bar w_{i+1}$ leads from $u_{i}'$ to $u_{i+1}$ in $t_y$, and
   \item $\bar w_1o_1\dots \bar w_io_i\bar w_{i+1} \in \mathit{Pref}(U_{s})$.
  \end{enumerate}
 \end{enumerate}
 To be clear, the formulation $w_j$ leads from $v'_{j-1}$ to $v_j$ in $t_z^{\mathit{ann}}$ is used to mean that $w_j$ leads from $v'_{j-1}$ to $v_j$ w.r.t.\ $z''$ and $j'$, where $j' = \lceil n/2 \rceil - (j-1)$ and $z'' \in \Sigma_\outp^*$ such that $z$ has a factorization $z'z''$ with $|z'| = |w_1\dots w_{j-1}|$.
 Analogously, the formulation $\bar w_j$ leads from $u'_{j-1}$ to $u_j$ in $t_y$ is used to mean that $\bar w_j$ leads from $u'_{j-1}$ to $u_j$ w.r.t.\ $y''$ and $j'$, where $j' = \lceil n/2 \rceil - (j-1)$ and $y'' \in \Sigma_\outp^*$ such that $y$ has a factorization $y'y''$ with $|y'| = |\bar w_1\dots \bar w_{j-1}|$.
 
 Assume we have already defined the computation for some $k \leq i$ satisfying conditions $1$.--$3$., i.e., for $w_1,\dots, w_k \in \Sigma_\inp^+$ with $|w_1\dots w_k| < |z|$ and $o_1,\dots, o_k \in \Sigma_\outp^+$ such that the computation yields $w_1o_1\dots w_ko_k \in \mathit{Pref}(U'_{r})$, there exists a path $v_0'\dots v_k'$ in $t^{\mathit{ann}}_z$ with $v_0' = \ro{t^{\mathit{ann}}_z}$ such that $w_1o_1\dots w_ko_k$ leads from $v_0'$ to $v_k'$ in $t^{\mathit{ann}}_z$, there are $\bar w_1, \dots, \bar w_k \in \Sigma_\inp^+$ and a path $u_0'\dots u_k'$ in $t_y$ with $u_0' = \ro{t_y}$ such that $\mathit{ann}(v_0')\dots \mathit{ann}(v_k') = v_0'\dots v_k'$, $\bar w_1o_1\dots \bar w_ko_k$ leads from $u_0'$ to $u_k'$ in $t_y$, and $\bar w_1o_1\dots \bar w_ko_k \in \mathit{Pref}(U_s)$.
 
 To determine which part of the next (up to) $|z|-|w_1\dots w_k|$ input symbols will be $w_{k+1}$, we do the following after each read input symbol:
 Assume that after the $m$th input symbol the sequence $a_1\dots a_{m}$ has been read and let $a_1\dots a_m$ lead from $v_k'$ to $v$ in $t^{\mathit{ann}}_z$.
 Let $\mathit{ann}(v) = u$, note that, by construction, as stated in Lemma \ref{lemma:basic}, $a_1\dots a_m$ lead from $u_k'$ to $u$ in $t_z$.
  
 We distinguish two cases.
 \begin{description}
  \item $v$ is not a leaf, i.e., $w_1\dots w_{k}a_1\dots a_m$ does not fully traverse $z$ in $\mathcal A_q$.
  If there exists a $w' \in \Sigma_\inp^+$ such that $w'$ leads from $u_k'$ to $u$ in $t_y$ and there is an $o \in \Sigma_\outp^+$ such that $w'_1o_1\dots w'_ko_kw'o \in \mathit{Pref}(U_{s})$, then let $w_{k+1} = a_1\dots a_m$, $w_{k+1}' = w'$, $o_{k+1} = o$, $v_{k+1} = v$, $u_{k+1} = u$ and $w'_1o_1\dots w'_ko_kw_{k+1}o_{k+1} \in \mathit{Pref}(U'_{r})$.
  Meaning, $\mathcal U'_r$ produces output $o_{k+1}$ after reading $w_1\dots w_{k+1}$.
  Let $v_{k+1}'$ be the node such that $o_{k+1}$ leads from $v_{k+1}$ to this node in $t^{\mathit{ann}}_z$, and let $u_{k+1}' = \mathit{ann}(v_{k+1}')$.
  By Lemma \ref{lemma:basic}, $o_{k+1}$ also leads from $u_{k+1}$ to $u_{k+1}'$ in $t_y$.
  It is easy to see that conditions $1$.--$3$.\ are satisfied.
 
  Otherwise, if there exists no $w' \in \Sigma_\inp^+$ such that $w'$ leads from $u_k'$ to $u$ in $t_y$ and there is an $o \in \Sigma_\outp^+$ such that $w'_1o_1\dots w'_ko_kw'o \in \mathit{Pref}(U_{s})$, then we additionally consider the next input symbol.
  
  \item $v$ is a leaf, i.e., $w_1\dots w_{k}a_1\dots a_m$ fully traverses $z$ in $\mathcal A_q$.
  Then $k = i$, and let $w_{i+1} = a_1\dots a_m$, $v_{i+1} = v$ and $u_{i+1} = u$.
  Since $w_1o_1\dots w_io_iw_{i+1} \in \mathit{Pref}(U'_{r})$, we show that conditions $4$.--$5$.\ can be satisfied.
  Clearly, by construction of STTs, $u_{i+1} = u$ is also a leaf and condition $4$.\ is satisfied.
  Towards a contradiction, assume condition $5$.\ can not be satisfied, meaning for all $w' \in \Sigma_\inp^*$ such that $w'$ leads from $u_{i}'$ to $u_{i+1}$ it holds $w'_1o_1\dots w'_io_iw' \notin \mathit{Pref}(U_{s})$.
  Since $w'_1o_1\dots w'_io_i \in \mathit{Pref}(U_{s})$, for each such $w'$ there exists a factorization $x_1x_2$ such that $w'_1o_1\dots w'_io_ix_1 \in \mathit{Pref}(U_{s})$.
  Recall, $t^{\mathit{ann}}_z = t^{\mathit{ann}}_y$, thus there exists at least one $w'$ such that $w'$ leads from $v_i'$ to $v_{i+1}$ in $t^{\mathit{ann}}_y$ and a factorization of $w'$ to $x_1x_2$ such that $w'_1o_1\dots w'_io_ix_1 \in \mathit{Pref}(U_{s})$.
  This means there exists some node $\tilde v$ in $t^{\mathit{ann}}_y$ such that $x_1$ leads from $v_{i}'$ to $\tilde v$ in $t^{\mathit{ann}}_y$ and $x_1$ leads from $u_{i}'$ to $\mathit{ann}(\tilde v)$ in $t_y$.
  Let $\mathit{ann}(\tilde v) = \tilde u$ and let $\val{t^{\mathit{ann}}_z}(v_{i+1}) = (p_{i+1},q_{i+1},u_{i+1},V_{i+1})$, then by construction of annotated STTs, we obtain $\tilde u \in V_{i+1}$.
  Therefore, we know that since $w_{i+1}$ leads from $v_i'$ to $v_{i+1}$ in $t^{\mathit{ann}}_z$ there exists a factorization of $w_{i+1}$ to $\tilde x_1\tilde x_2$ such that $\tilde x_1$ leads from $u_i'$ to $\tilde u$ in $t_z$.
  Consequently, $\tilde x_1$ leads from $v_{i}'$ to $\hat v$ in $t^{\mathit{ann}}_z$ for some $\hat v$ in $t^{\mathit{ann}}_z$ such that $\mathit{ann}(\hat v) = \tilde u$.
  Let $\tilde x_1 = a_1\dots a_{m'}$ for some $m' < m$.
  Thus, after reading $a_1\dots a_{m'}$ output would be produced.
  Therefore, after reading $a_1\dots a_m$ the node $v_{i+1}$ would not be reached, because $\hat v$ and $v_{i+1}$ refer to the same number of produced output blocks, and thus $v_{i+1}$ is not reachable from $\hat v$ in $t^{\mathit{ann}}_z$. 
  Contradiction.
  Hence, there exists some $w' \in \Sigma_\inp^*$ such that $w'$ leads from $u_{i}'$ to $u_{i+1}$ it holds $w'_1o_1\dots w'_io_iw' \in \mathit{Pref}(U_{s})$ meaning condition $5$. is satisfied.
 \end{description}
 We have proved the claim of the induction.
 It is left to show that $xy\bar w_1o_1\dots \bar w_io_i\bar w_{i+1}\tilde z \in \llbracket S \rrbracket$ if, and only if, $xzw_1o_1\dots w_io_i\bar w_{i+1}\tilde z \in \llbracket S \rrbracket$ for all $\tilde z \in \Sigma_{\inp\outp}^*$ implying that $U'$ is a $T$-controlled uniformization.

 Recall, $\delta_\mathcal A^*(q_0^\mathcal A,(\pi_\inp(x),\pi_\outp(x))) = q$ and $\delta_\mathcal B^*(q_0^\mathcal B,xy) = p$, since $\tau_y = \tau_z$, also $\delta_\mathcal B^*(q_0^\mathcal B,xz) = p$.
 Now, we show that $\delta_\mathcal A^*(q,(w_1\dots w_
 {i+1},z))$ $=$ $\delta_\mathcal A^*(q,(\bar w_1\dots \bar w_{i+1},y))$ and $\delta_\mathcal B^*(p,w_1o_1\dots w_io_iw_{i+1})$ $=$ $\delta_\mathcal B^*(p,\bar w_1o_1\dots \bar w_io_i\bar w_{i+1})$.
 This is easy to see, because $w_1o_1\dots w_io_iw_{i+1}$ leads from $\ro{t^{\mathit{ann}}_z}$ to a leaf $v_{i+1}$ in the annotated STT $t^{\mathit{ann}}_z$ and $\bar w_1o_1\dots \bar w_io_i\bar w_{i+1}$ leads from $\ro{t_y}$ to the leaf $\mathit{ann}(v_{i+1})$ in the STT $t_y$ and by Lemma \ref{lemma:basic} this implies that the induced state transformations are equal, i.e., $U'$ is an $T$-controlled uniformization.
 
 Furthermore, given a word from $U'$, the first output block (after the lag has exceeded $\gamma$ at some point) increases the output lag by at most $b$.
 Applying this construction a total of $\lceil n/2 \rceil$ times yields a uniformization according the statement of the lemma. 
\end{proof}

The proof of the above lemma yields that $b$ can be chosen as $\mathit{max}\{\beta,\gamma+1\}$, where $\beta$ is the smallest bound on the length of representatives of output profiles.

The second auxiliary lemma states that it suffices to consider uniformizers that either produce output before the input sequence contains an idempotent factor, or if they do not produce output until then, then neither do they when pumping the idempotent factor.

\begin{lemma}\label{lemma:pumping}
 If $S$ has a $T$-controlled uniformization $U$ by an sDFA, then $S$ has a $T$-controlled uniformization $U'$ by an sDFA such that for each $u \in \Sigma_{\inp\outp}^*$ and $x,x_1,x_2 \in \Sigma_\inp^*$ such that $|\pi_\inp(u)| = |\pi_\outp(u)|$, $|xx_1x_2| > \gamma$, $x_2$ is idempotent, and $P_{x_1} = P_{x_2}$ it holds that if $uxx_1x_2 \in \mathit{Pref}(U')$, then $uxx_1x_2^i \in \mathit{Pref}(U')$ for each $i \in \mathbbm N$.
\end{lemma}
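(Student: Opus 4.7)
The plan is to construct $U'$ from $U$ by augmenting $\mathcal U$ with profile tracking and inserting self-loops on idempotent input factors. I would first derive a simple monoid consequence of Lemma~\ref{lemma:monoid}: since $x_2$ is idempotent and $P_{x_1} = P_{x_2}$, one obtains $P_{x_1 x_2^j} = P_{x_1}$ for every $j \geq 0$, and consequently $\tau_{x_1 x_2^j} = \tau_{x_1}$ together with equality of the associated reduced $\mathrm{STT}$ sets. This equality is the key structural fact that makes pumping $x_2$ invisible to both $\mathcal A$ and $\mathcal B$.

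Next, I would build $U'$ as a product of $\mathcal U$ with an auxiliary DFA $\mathcal M$ that tracks, once the lag first exceeds $\gamma$, the profiles of the last two completed pure-input blocks. The finiteness of the profile monoid ensures $\mathcal M$ is finite. In the product, each occurrence of a pattern $uxx_1x_2$ meeting the hypotheses of the lemma is locally detectable from the pair of stored profiles. At every such detection point I redirect the transition reading the final symbol of the $x_2$-block so that the $\mathcal U$-component reverts to $q_1 := \delta^*_\mathcal U(q_0^\mathcal U, uxx_1)$, while the $\mathcal M$-component updates to ``just finished reading another block with profile $P_{x_1}$.'' This creates the required self-loop, so that $uxx_1x_2^i$ remains a prefix for every $i \in \mathbbm{N}$.

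Finally, I would verify three properties of $U'$. Sequentiality and determinism are immediate from the product construction and the fact that each original $\mathcal U$-transition has at most one redirection. $T$-control uses Asm.~\ref{asm:nshift}: once the lag exceeds $\gamma$, the target language $T$ permits arbitrarily long pure-input blocks, so the inserted loops stay inside $T$. For the uniformization property $\llbracket U' \rrbracket \subseteq_{\mathsf u} \llbracket S \rrbracket$, I would match each accepting run of $U'$ on $uxx_1x_2^i w$ to an accepting run of $U$ on $uxx_1 w'$ for a suitably constructed input continuation $w'$; the state reached in $\mathcal A$ after $uxx_1x_2^i$ agrees with the state after $uxx_1$ because $\tau_{x_1 x_2^i} = \tau_{x_1}$, and the $\mathrm{STT}$ equality transfers available output continuations.

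The main obstacle I anticipate is this last step, as the correspondence between runs must be maintained across all subsequent output blocks, not just on the immediate continuation. Formalizing it requires a careful argument, in the style of the $\mathrm{STT}$-based substitution used in the proof of Lemma~\ref{lemma:longeroutput}: one proceeds by induction on the remaining output blocks, at each stage using the equality $\mathrm{STT}_{x_1 x_2^i}^{\lceil n/2 \rceil} = \mathrm{STT}_{x_1}^{\lceil n/2 \rceil}$ to lift the local decision made by $\mathcal U$ at $q_1$ to a valid decision of $U'$ at the corresponding looped state, and concluding that the cumulative output is still in $\llbracket S \rrbracket$.
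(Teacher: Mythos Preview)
Your approach and the paper's go in opposite directions, and the direction you chose introduces a genuine gap. The paper does not try to make $uxx_1x_2^i$ a prefix for all $i$; instead, when the pumping property fails at some $u,x,x_1,x_2$, it observes that $\mathcal U$ must produce output after $uxx_1x_2^jx_2'$ for some $j\geq 1$ and prefix $x_2'\sqsubseteq x_2$, and then (using $P_{x_1}=P_{x_1x_2^j}$ and the output-substitution technique of Lemma~\ref{lemma:shortregular}) builds $\mathcal U'$ that produces output already after $uxx_1x_2'$---so $uxx_1x_2\notin\mathit{Pref}(U')$ and the implication holds vacuously. Your construction instead redirects the $\mathcal U$-component back to $q_1=\delta_{\mathcal U}^*(q_0^{\mathcal U},uxx_1)$ and lets $\mathcal U$ run from there. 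But then on actual input $\pi_\inp(u)xx_1x_2^i w$, the outputs emitted are exactly those $\mathcal U$ would emit for input $\pi_\inp(u)xx_1 w$, and these are in general not valid for the longer input: the STT equality only says that for each output block consuming part of $x_1$ there \emph{exists} an output block with the same state-pair effect consuming the corresponding part of $x_1x_2^i$; it does not say the \emph{same} output works, and the required replacements have length growing with $i$, so no sDFA can emit them. Your verification sketch (``lift the local decision'') tacitly assumes output substitution, but the construction you describe only redirects states---it never changes what is emitted.

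Two smaller points. You write that ``the state reached in $\mathcal A$ after $uxx_1x_2^i$ agrees with the state after $uxx_1$ because $\tau_{x_1x_2^i}=\tau_{x_1}$,'' but $\tau_w$ is by definition a transformation on $Q_{\mathcal B}$, not $Q_{\mathcal A}$; the $\mathcal A$-state depends on the paired output and cannot be read off from $\tau$. And your detection mechanism is underspecified: the hypothesis quantifies over \emph{all} factorizations $x\cdot x_1\cdot x_2$ of a single contiguous input segment, not over ``last two completed input blocks,'' so it is unclear what your auxiliary automaton $\mathcal M$ is supposed to track.
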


\begin{proof}
Let $U$ be a $T$-controlled uniformization recognized by an sDFA $\mathcal U$ such that there is $u \in \Sigma_{\inp\outp}^*$ and $x,x_1,x_2 \in \Sigma_\inp^*$ such that $|\pi_\inp(u)| = |\pi_\outp(u)|$, $|xx_1x_2| > \gamma$, $x_2$ is idempotent, and $P_{x_1} = P_{x_2}$ such that $uxx_1x_2 \in \mathit{Pref}(U)$ and $uxx_1x_2^i \notin \mathit{Pref}(U)$ for some $i \in \mathbbm N$.

Since $uxx_1x_2^i \notin \mathit{Pref}(U)$ there exists some $j < i$ and some prefix of $x_2$, say $x_2'$, such that $uxx_1x_2^jx_2' \in \mathit{Pref}(U)$ and $\delta_\mathcal U^*(q_0^\mathcal U,uxx_1x_2^jx_2') \in Q_\mathcal U^\outp$, i.e., $\mathcal U$ produces output after reading $uxx_1x_2^jx_2'$.
Now we show that there exists an $T$-controlled uniformization $U'$ recognized by an sDFA $\mathcal U'$ such that $\mathcal U'$ produces output after reading $uxx_1x_2'$.

Since $x_2$ is idempotent and $P_{x_1} = P_{x_2}$, then also $P_{x_1} = P_{x_1x_2^j}$.
Thus, similar as in the proof of Lemma \ref{lemma:shortregular}, we can show that $\mathcal U'$ can sequentially determine the output that has to be produced in a computation on $uxx_1x_2'w$ for some $w \in \Sigma_\inp^*$ by behaving like $\mathcal U$ on $uxx_1x_2^jx_2'w$ and replacing outputs that consume $x_1x_2^j$ (w.r.t.\ $\mathcal A$) by equal outputs that consume overlap $x_1$ (w.r.t.\ $\mathcal A$) in the sense that the induced state transformations on $\mathcal A$ and $\mathcal B$ are equal. 
\end{proof}

Recall, in Asm.~\ref{asm:bounds}, we have fixed bounds $r_1$ as in Lemma \ref{lemma:ramsey} and $r_2$ as in Lemma \ref{lemma:longeroutput}.
Also, $r_1,r_2 > \gamma$.
For a uniformizer according to Lemma~\ref{lemma:longeroutput}, the lemma yields that the next output block is of length at most $r_2$ if there is currently lag caused by output that is behind.
However, if there is currently lag because the output is behind, say $\ell$ symbols, then Lemma~\ref{lemma:longeroutput} yields that the next output block is of length at most $\ell+r_2$.
This value can become arbitrary large as the lag can generally not be bounded.
Our goal is to show that if lag caused by output that is behind exceeds $r_1$, then the length of the next output block can be bounded by $r_1$.
Recall, since $T \subseteq T_{\leq \gamma} \cdot (\Sigma_\inp^*+\Sigma_\outp^*)^n$ and $r_1 > \gamma$, it can only happen $\lceil n/2 \rceil$ times that lag caused by output that is behind exceeds $r_1$.

Thus, proving the above statement then gives us the key lemma, stated below.
Recall, $T_i$ is defined as $T \cap \left( T_{\leq \gamma} \cdot (\Sigma_\inp^*+\Sigma_\outp^{\leq i})^n \right)$ for $i \geq 0$.

{
\renewcommand{\thetheorem}{\ref{lemma:shortregular}}
\lemmashortregular*
\addtocounter{theorem}{-1}
}

\begin{proof}[Proof of Lemma~\ref{lemma:shortregular}]
 We now show that it is sufficient to consider $T_k$ for $k = r_1+r_2$ in order to find an $T$-controlled $\llbracket S \rrbracket$-uniformization if there exists one.
 
 Let $U$ be an $T$-controlled uniformization that satisfies the conditions of Lemmata  \ref{lemma:longeroutput} and \ref{lemma:pumping} recognized by a sequential DFA $\mathcal U$.
 We show how we can obtain $U'$ that is an $M_k$-controlled $\llbracket S \rrbracket$-uniformization recognized by a sequential DFA $\mathcal U'$ by modifying $\mathcal U$.
 
 Recall, $T \subseteq T_{\leq \gamma} \cdot (\Sigma_\inp^*+\Sigma_\outp^*)^n$.
 We construct $\mathcal U'$ such that a computation of $\mathcal U'$ differs from $\mathcal U$ from the point on that the following situations occurs:
 There is $u \in \Sigma_{\inp\outp}^*$ and $x,x_1,x_2 \in \Sigma_\inp^*$ such that $|\pi_\inp(u)| = |\pi_\outp(u)|$, $|xx_1x_2| > \gamma$, $x_2$ is idempotent, $P_{x_1} = P_{x_2}$ and $uxx_1x_2 \in \mathit{Pref}(U)$.
 
 Since $|\pi_\inp(u)| = |\pi_\outp(u)|$ and $|xx_1x_2| > \gamma$, we know that $uxx_1x_2 \in \mathit{Pref}(L)$, but $uxx_1x_2 \notin \mathit{Pref}(L_1)$.
 Thus, for each $z \in \Sigma_{\inp\outp}^*$ with $uxx_1x_2z \in U$ holds that $\mathit{shift}(z) < n$ and the number of output blocks in $z$ is at most $\lceil n/2 \rceil$.
 Let $\ell$ be the maximal size of an output block that $\mathcal U$ can produce.
 Hence, $|\pi_\outp(z)|$ is at most $\lceil n/2 \rceil \ell$.
 We chose the the smallest $m$ such that $|xx_1x_2^m| \geq \lceil n/2 \rceil \ell$.

 Consider an arbitrary input word $w \in \Sigma_\inp$ such that $\pi_\inp(u)xx_1x_2w$ is in the domain of $\llbracket S \rrbracket$. 
 In order to determine the computation of $\mathcal U'$ on $\pi_\inp(u)xx_1x_2w$, we consider the computation of $\mathcal U$ on $\pi_\inp(u)xx_1x_2^{m}w$.
 Note that by Lemma \ref{lemma:pumping}, after having read $\pi_\inp(u)xx_1x_2$, $\mathcal U$ produces no output while reading $x_2^{m-1}$.
 Assume we have already have defined $\mathcal U'$ up to the point where $\pi_\inp(u)xx_1x_2$ was read, and until then, $\mathcal U$ and $\mathcal U'$ have worked the same way.
 We show that $\mathcal U'$ can continue the computation successfully.
 
 Let $\delta_\mathcal U^*(q_0^\mathcal U,\pi_\inp(u)xx_1x_2^{m}) = s$, and let $\delta_\mathcal U'^*(q_0^{\mathcal U'},\pi_\inp(u)xx_1x_2) = r$.
 The computation of $\mathcal U'_r$ on $w$ will be based on the computation of $\mathcal U_s$ on $w$ such that we can sequentially define the output blocks that $\mathcal U'_r$ has to produce.
 
 Since $x_2$ is idempotent and $P_{x_1} = P_{x_2}$, also $P_{x_1x_2} = P_{x_1x_2^m}$.
 Furthermore, also $P_{xx_1x_2} = P_{xx_1x_2^m}$.
 Hence, we have $\mathrm{SST}^{\lceil n/2 \rceil}_{xx_1x_2} = \mathrm{SST}^{\lceil n/2 \rceil}_{xx_1x_2^m}$.
 We are interested in the unique tree $t_{xx_1x_2^m} \in \mathrm{SST}^{\lceil n/2 \rceil}_{xx_1x_2^m}$ with  root label $(p,q)$, where
 \begin{itemize}
  \item $p \in Q_\mathcal B$ is the state such that $\delta_\mathcal B^*(q_0^\mathcal B,uxx_1x_2^m) = p$, and
  \item $q \in Q_\mathcal A$ is the state such that $\delta_\mathcal A^*(q_0^\mathcal B,\pi_\inp(uxx_1x_2^mw_0),\pi_\outp(uxx_1x_2^mw_0)) = q$, where $w_0$ is the prefix of the input sequence $w$ such that either $\mathcal U_s$ produces output after reading $w_0$, or $w_0 \in U_{s}$, i.e., $w = w_0$ and no output was produced.
 \end{itemize}
 Note that, since $P_{xx_1x_2} = P_{xx_1x_2^m}$ we have $\tau_{xx_1x_2} = \tau_{xx_1x_2^m}$, thus also $\delta_\mathcal B^*(q_0^\mathcal B,uxx_1x_2) = p$.
 Let $t_{xx_1x_2}$ denote the same tree w.r.t.\ $xx_1x_2$, which has to exist since $P_{xx_1x_2} = P_{xx_1x_2^m}$. 
 Now, we are ready to inductively (on the number of produced output blocks) define the computation of $\mathcal U'_r$ (on $w$ as above) satisfying the following properties:
 
 For $w_0 \in \Sigma_\inp^*$, $w_1,\dots, w_{i-1} \in \Sigma_\inp^+$, $y_1,\dots,y_{i-1} \in \Sigma_\outp^+$ and $y_i \in \Sigma_\outp^*$ such that $w_0y_1 \dots w_{i-1}y_i \in \mathit{Pref}(U'_r)$, there exists a path $v_0'v_1\dots v_{i-1}'v_i$ in $t_{xx_1x_2}$ with $v_0' = \ro{t_{xx_1x_2}}$ and $v_i$ is a leaf in $t_{xx_1x_2}$ such that
 \begin{enumerate}
  \item $y_j$ leads from $v_{j-1}'$ to $v_j$ in $t_{xx_1x_2}$ for all $1 \leq j \leq i$, and
  \item $w_j$ leads from $v_j$ to $v_{j}'$ in $t_{xx_1x_2}$ for all $1 \leq j < i$, and
  \item there exist $\bar y_1,\dots \bar y_{i-1} \in \Sigma_\outp^+$ and $\bar y_i \in \Sigma_\outp^*$ such that
  \begin{enumerate}
   \item $\bar y_j$ leads from $v_{j-1}'$ to $v_j$ in $t_{xx_1x_2^m}$ for all $1 \leq j \leq i$, and
   \item $w_j$ leads from $v_j$ to $v_{j}'$ in $t_{xx_1x_2^m}$ for all $1 \leq j < i$, and
   \item $w_0\bar y_1\dots w_{i-1}\bar y_i \in \mathit{Pref}(U_{s})$.
  \end{enumerate}
 \end{enumerate}
 To be clear, the formulation $y_j$ leads from $v'_{j-1}$ to $v_j$ in $t_{xx_1x_2}$ is used to mean that $y_j$ leads from $v'_{j-1}$ to $v_j$ w.r.t.\ $x''$ and $j'$, where $j' = \lceil n/2 \rceil - (j-1)$ and $x'' \in \Sigma_\inp^*$ such that $xx_1x_2$ has a factorization $x'x''$ with $|x'| = |y_1\dots y_{j-1}|$.
 Analogously, the formulation $\bar y_j$ leads from $v'_{j-1}$ to $v_j$ in $t_{xx_1x_2^m}$ is used to mean that $\bar y_j$ leads from $v'_{j-1}$ to $v_j$ w.r.t.\ $x''$ and $j'$, where $j' = \lceil n/2 \rceil - (j-1)$ and $x'' \in \Sigma_\inp^*$ such that $xx_1x_2^m$ has a factorization $x'x''$ with $|x'| = |\bar y_1\dots \bar y_{j-1}|$.
 
 Assume $k = 1$, we already defined $w_0$ above and $v_0'$ as $\ro{t_{xx_1x_2}}$, we have to define $y_1$, $\bar y_1$, and $v_1$.
 The sequence $w_0$ was chosen such that either $\mathcal U_s$ produces output after reading $w_0$, or $w_0 \in U_s$ , i.e., the input sequence ends.
 In the former case, let $\bar y_1$ be the output that is produced by $\mathcal U_s$, in the latter case let $\bar y_1$ be $\varepsilon$.
 We chose as vertex $v_1$ the vertex $v$ such that $\bar y_1$ leads from $v_0'$ to $v$ in $t_{xx_1x_2^m}$.
 This vertex has to exist, because $1 \leq \lceil n/2 \rceil$ and and by choice of $m$ we have $|\bar y_1| \leq |xx_1x_2^m|$.
 Since $t_{xx_1x_2^m} = t_{xx_1x_2}$, there exists some $z \in \Sigma_\outp^*$ such that $z$ leads from $v_0'$ to $v$ in $t_{xx_1x_2}$, let $y_1$ be such a $z$ and let $w_0y_1 \in \mathit{Pref}(U'_r)$.
 Clearly, the above conditions are satisfied.
 
 Assume we have already defined the computation for some $k < i$ satisfying the above conditions, i.e., for $w_1,\dots,w_{k-1} \in \Sigma_\inp^+$ and $y_1,\dots,y_k \in \Sigma_\outp^+$ such that $w_0y_1 \dots w_{k-1}y_k \in \mathit{Pref}(U'_r)$, there exists a path $v_0'v_1\dots v_{k-1}'v_k$ in $t_{xx_1x_2}$ with $v_0' = \ro{t_{xx_1x_2}}$, and $y_1 \dots w_{k-1}y_k$ leads from $v_0'$ to $v_k$ in $t_{xx_1x_2}$, and there are $\bar y_1,\dots,\bar y_k \in \Sigma_\outp^+$ such that $\bar y_1\dots w_{k-1}\bar y_k$ leads from $v_0'$ to $v_k$ in $t_{xx_1x_2^m}$ and $w_0\bar y_1\dots w_{k-1}\bar y_k \in \mathit{Pref}(U_{s})$.
 
 Let $\delta_\mathcal U(s,w_0\bar y_1\dots w_{k-1}\bar y_k) = s_k$.
 To define $w_k$, $y_{k+1}$, $\bar{y}_{k+1}$, $v_k'$ and $v_{k+1}$, we consider the computation of $\mathcal U_{s_k}$, let $w_k \in \Sigma_\inp^+$ be the next part of the input sequence such that either $\mathcal U_{s_k}$ produces output after reading $w_k$, or $w_k \in U_{s_k}$, i.e., the input sequence ends.
 Let $v_{k}'$ be the vertex that is reached from $v_k$ via $w_k$ in $t_{xx_1x_2^m}$, note then $w_k$ also leads from $v_k$ to $v_k'$ in $t_{xx_1x_2}$.
 If after reading $w_k$ output is produced, then let $\bar y_{k+1}$ be the output that is produced by $\mathcal U_{s_k}$, if the input sequence ends, then let $\bar y_{k+1}$ be $\varepsilon$.
 For $v_{k+1}$ we pick the vertex $v$ such that $\bar y_{k+1}$ leads from $v_k'$ to $v$ in $t_{xx_1x_2^m}$.
 This vertex has to exist, because $k < \lceil n/2 \rceil$ and by choice of $m$ we have $|\bar y_1\dots \bar y_{k+1}| \leq |xx_1x_2^m|$.
 Since $t_{xx_1x_2^m} = t_{xx_1x_2}$, there exists some $z \in \Sigma_\outp^*$ such that $z$ leads from $v_k'$ to $v$ in $t_{xx_1x_2}$, let $y_{k+1}$ be such a $z$ and let $w_0y_1 \dots w_{k-1}y_kw_ky_{k+1} \in \mathit{Pref}(U'_r)$.
 With these choices the conditions stated above are satisfied.
 
 Note that after at most $\lceil n/2 \rceil$ output blocks, $v_{k+1}$ will be a leaf, because eventually either $|\bar y_1\dots \bar y_{k+1}| = |xx_1x_2^m|$ or $|\bar y_1\dots \bar y_{k+1}| \leq |xx_1x_2^m|$ and the input sequence has ended.
 Both cases imply that $\bar y_1\dots \bar y_{k+1}$ fully traverses $xx_1x_2^m$ in $\mathcal A_q$, thus, by construction of state transformation trees, a leaf is reached.
 Note that this also implies that $|y_1\dots y_{k+1}| \leq |xx_1x_2|$.
 After reaching a leaf, $\mathcal U'$ continues to read the remaining input, say $w_i$, as $\mathcal U$ does.
 
 Altogether, from the induction above, it follows that for $uxx_1x_2w_0y_1\dots w_{i-1}y_iw_i \in U'$, there is $uxx_1x_2uxx_1x_2^mw_0\bar y_1\dots w_{i-1}\bar y_iw_i \in U$ such that both $y_1\dots w_{i-1}y_i$ and $\bar y_1\dots w_{i-1}\bar y_i$ lead through the same path in $t_{xx_1x_2} = t_{xx_1x_2^m}$, thus, we obtain $\delta_\mathcal B^*(p,y_1\dots w_{i-1}y_i)$ $=$ $\delta_\mathcal B^*(p,\bar y_1\dots w_{i-1}\bar y_i)$ and $\delta_\mathcal A^*(q,(xx_1x_2,y_1\dots y_i))$ $=$ $\delta_\mathcal A^*(q,(xx_1x_2^m,\bar y_1\dots \bar y_i))$.
 Together with $\tau_{xx_1x_2} = \tau_{xx_1x_2^m}$, it now directly follows that $uxx_1x_2w_0y_1\dots w_{i-1}y_iw_i$ is $L$-controlled and $\llbracket uxx_1x_2w_0y_1\dots w_{i-1}y_iw_i \rrbracket \in \llbracket S \rrbracket$.
 
 That means $U'$ is an $T$-controlled uniformization, we argue that it is even a $T_k$-controlled uniformization, recall $T_k = T \cap \left( T_{\leq \gamma} \cdot (\Sigma_\inp^*+\Sigma_\outp^{\leq k})^n \right)$, where $k = r_1+r_2$.
 As seen above, if the input lag is large enough that it contains an idempotent factor, which is given after at most $r_1$ input symbols, then the remaining output is of length at most $r_1 < k$.
 If the lag is smaller then $r_1$ input symbols, say $d$, and output is produced, then Lemma \ref{lemma:longeroutput} yields that the produced output block is of length at most $d+r_2 < k$.
 Hence $U'$ is indeed $T_k$-controlled. 
\end{proof}

The proof of the above lemma yields that we can focus on the construction of $T_k$-controlled uniformizer, where $k$ can be chosen as $r_1+r_2$.

\subsection*{Part II}

The goal of this section is to show that the problem whether $S$ has a $T_i$-controlled uniformizer reduces to the question whether $T_i(S)$ has a subset uniformizer for some suitable $T_i(S)$ as defined in Lemma~\ref{lemma:transformregular}.
Together with Lemma~\ref{lemma:shortregular} this then directly yields Theorem~\ref{thm:regular}.

{
\renewcommand{\thetheorem}{\ref{lemma:transformregular}}
\lemmatransformregular*
\addtocounter{theorem}{-1}
}

\begin{proof}[Proof of Lemma~\ref{lemma:transformregular}]

It is possible to give a direct construction from $S$ to $T_i(S)$, however, it is simpler to give a construction from $S_{\mathit{can}}$ to $T_i(S)$.

So, we work with $S_{\mathit{can}}$.
Let $\mathcal M$ be an NFA recognizing the regular set $T_i$.
From $\mathcal M$, we construct an NFA $\mathcal C$ that while reading an $T_i$-controlled word $w$ simultaneously constructs an $(12)^*(1^*+2^*)$-controlled word $w'$ with $\llbracket w \rrbracket = \llbracket w' \rrbracket$ and simulates $\mathcal A$ on $w'$ and accepts if $\mathcal A$ accepts $w'$.
In other words, $\mathcal C$ resynchronizes $w$ on the fly to be $(12)^*(1^*+2^*)$-controlled in order to simulate $\mathcal A$ on the resynchronization.
We only give an idea of the construction.

A $T_i$-controlled word $w$ can be factorized as $w_1 \cdot w_2$ such that $w_1 \in L_{\leq \gamma}$ and $w_2 \in (\Sigma_\inp^*+\Sigma_\outp^{\leq i})^n$.
For each word $w$, $\mathcal C$ guesses when the split occurs and uses different resynchronization techniques on $w_1$ and $w_2$.
We now describe how $w_1$ and $w_2$ are resynchronized.

While reading $w_1$, every position is at most $\gamma$-lagged.
To resynchronize $w_1$ to have a $(12)^*$-controlled synchronization (i.e., a $1$-lagged synchronization), $\mathcal C$ has to store only a window of $\gamma$ symbols to be able to continue the simulation of $\mathcal A$.

Concerning $w_2$, we use the following method to obtain a $w_2$ resynchronization that is 
$(12)^*(1^*+2^*)$-controlled.
The length of $\pi_{\outp}(w_2)$ is bounded by $i \cdot n$.
Hence, before reading $w_2$, $\mathcal C$ guesses an output word $y \in \Sigma_{\outp}^*$ of length at most $i \cdot n$.
While reading $w_2$, $\mathcal C$ can easily simulate $\mathcal A$ on the $(12)^*(1^*+2^*)$-controlled synchronization of $(\pi_{\inp}(w_2),y)$ and check whether $y = \pi_{\outp}(w_2)$.

It is easy to see that $\mathcal C$ indeed recognizes the desired language $T_i(S)$. 
\end{proof}

Now we have all ingredients to prove our main result.

{
\renewcommand{\thetheorem}{\ref{thm:regular}}
\thmregular*
\addtocounter{theorem}{-1}
}

\begin{proof}[Proof of Theorem~\ref{thm:regular} continued]
It is left to show $S$ has a $T$-controlled uniformization by an sDFA iff $\mathrm{dom}(\llbracket S \rrbracket) = \mathrm{dom}(\llbracket T_k(S) \rrbracket)$ and $T_k(S)$ has a subset uniformization by an sDFA, which is decidable by Theorem~\ref{thm:sunif}.

Assume $\mathrm{dom}(\llbracket S \rrbracket) = \mathrm{dom}(\llbracket T_k(S) \rrbracket)$ and $T_k(S)$ has a subset uniformization by an sDFA.
Since $\mathrm{dom}(\llbracket S \rrbracket) = \mathrm{dom}(\llbracket T_k(S) \rrbracket)$, every subset uniformization of $T_k(S)$ is also a $T_k$-controlled uniformization of $S$.
Such a uniformization is also $T$-controlled, because $T_k \subseteq T$.

For the other direction, assume $S$ has a $T$-controlled uniformization by an sDFA.
As stated above, then $S$ has a $T_k$-controlled uniformization by an sDFA, say $U \subseteq T_k$.
First, we show $\mathrm{dom}(\llbracket S \rrbracket) = \mathrm{dom}(\llbracket T_k(S) \rrbracket)$.
Since $\llbracket U \rrbracket \subseteq_\mathit{u} \llbracket S \rrbracket$, we have $\mathrm{dom}(\llbracket S \rrbracket) = \mathrm{dom}(\llbracket U \rrbracket)$.
Clearly, by construction, $\mathrm{dom}(\llbracket S \rrbracket) \supseteq \mathrm{dom}(\llbracket T_k(S) \rrbracket)$.
Since $U$ is $T_k$-controlled, also $U \subseteq T_k(S)$ and $\mathrm{dom}(\llbracket U \rrbracket) \subseteq \mathrm{dom}(\llbracket T_k(S) \rrbracket)$ by construction.
We can conclude $\mathrm{dom}(\llbracket S \rrbracket) = \mathrm{dom}(\llbracket T_k(S) \rrbracket)$.
Secondly, we show that $U$ is a subset uniformization of $T_k(S)$.
Since $U \subseteq T_k(S)$ and $\mathrm{dom}(\llbracket U \rrbracket) = \mathrm{dom}(\llbracket T_k(S) \rrbracket)$, it is clear that $U$ is a subset uniformization of $T_k(S)$.

This concludes the proof of the claim. 
\end{proof}

\end{document}